\newtheorem{definition}{Definition}
\newtheorem{theorem}{Theorem}
\newtheorem{lemma}{Lemma}
\newtheorem{claim}{Claim}
\newtheorem{fact}{Fact}
\newtheoremstyle{restate}{}{}{\itshape}{}{\bfseries}{~(restated).}{.5em}{\thmnote{#3}}
\theoremstyle{restate}
\newtheorem*{restate}{}
\newcommand{\E}{\mathop{\mathbb{E}}}
\newcommand{\SUM}{\mathrm{SUM}}
\newcommand{\rank}[1]{\texttt{rank}($#1$)}
\def\sgn{\mathop{\rm sgn}}
\newcommand{\cR}{\mathcal{R}}
\newcommand{\cK}{\mathcal{K}}
\newcommand{\bs}{\mathbf{s}}
\newcommand{\bj}{\mathbf{j}}
\newcommand{\bOne}{\mathbf{1}}
\newcommand{\bT}{\mathbf{T}}
\newcommand{\bE}{\mathbf{E}}
\title{Optimal Succinct Rank Data Structure via \\ Approximate Nonnegative Tensor Decomposition}
\author{Huacheng Yu\thanks{Harvard University. \texttt{yuhch123@gmail.com}. Supported in part by ONR grant N00014-15-1-2388, a Simons Investigator Award and NSF Award CCF 1715187.}}
\date{}
\begin{document}

\setcounter{page}{0}
\maketitle
\thispagestyle{empty}
\begin{abstract}
Given an $n$-bit array $A$, the succinct rank data structure problem asks to construct a data structure using space $n+r$ bits for $r\ll n$, supporting rank queries of form $\rank{$u$}=\sum_{i=0}^{u-1} A[i]$.
In this paper, we design a new succinct rank data structure with $r=n/(\log n)^{\Omega(t)}+n^{1-c}$  and query time $O(t)$ for some constant $c>0$, improving the previous best-known by P{\v a}tra{\c s}cu~\cite{Pat08}, which has $r=n/(\frac{\log n}{t})^{\Omega(t)}+\tilde{O}(n^{3/4})$ bits of redundancy.
For $r>n^{1-c}$, our space-time tradeoff matches the cell-probe lower bound by P{\v a}tra{\c s}cu and Viola~\cite{PV10}, which asserts that $r$ must be at least $n/(\log n)^{O(t)}$.
Moreover, one can avoid an $n^{1-c}$-bit lookup table when the data structure is implemented in the cell-probe model, achieving $r=\lceil n/(\log n)^{\Omega(t)}\rceil$.
It matches the lower bound for the full range of parameters.

En route to our new data structure design, we establish an interesting connection between succinct data structures and approximate nonnegative tensor decomposition.
Our connection shows that for specific problems, to construct a space-efficient data structure, it suffices to approximate a particular tensor by a sum of (few) nonnegative rank-$1$ tensors.
For the rank problem, we explicitly construct such an approximation, which yields an explicit construction of the data structure.
\end{abstract}
\newpage

\section{Introduction}

Given an array $A[0..n-1]$ of bits, the partial sums problem (a.k.a, the rank problem) asks to preprocess $A$ into a data structure using as little space as possible, supporting queries of form $\rank{$u$}=\sum_{i=0}^{u-1} A[i]$ efficiently.
One trivial solution is to explicitly write down all prefix sums, which uses $n$ \emph{words} of space and constant query time.
In succinct data structures, one seeks data structures using space close to the information theoretical limit, $n$ \emph{bits} for partial sums, with an efficient query time.

Succinct rank data structures are central building blocks in many succinct data structure problems with a rich history~\cite{Jacobson89,CM96,Munro96,Clark97,MRR98,RRR02,GMR06,Golynski07,Pat08}.
Jacobson~\cite{Jacobson89}, Clark and Munro~\cite{CM96} gave the first succinct rank data structures using $n+o(n)$ space with constant query time.
After a series of improvements~\cite{Munro96,MRR98,RRR02}, Golynski et al.~\cite{GGGRR07} achieved space $n+O(\frac{n\log\log n}{\log^2 n})$ for constant query time.
Later, the seminal paper ``Succincter'' by P\v{a}tra\c{s}cu~\cite{Pat08} proposed a data structure using space $n+n/(\frac{\log n}{t})^t+\tilde{O}(n^{3/4})$ and query time $O(t)$, showing that the redundant bits can be any $n/\mathrm{poly}\log n$ when query time is constant.

Lower bounds for this problem also have received attention from researchers in the area~\cite{Miltersen05,GM07,Golynski07b,PV10}.
Most lower bounds are for ``systematic encodings.''
In systematic encoding, we are given an input that is stored explicitly in the raw form, and we may then build a (sublinear) auxiliary data structure, which will be stored on the side.
The query algorithm has access to both the raw input and the auxiliary data structure.
Golyski~\cite{Golynski07b} showed a space lower bound of $n+(n\log t)/t$ for query time $t$, for any systematic encoding of the rank problem.
For general data structures, P{\v a}tra{\c s}cu and Viola~\cite{PV10} proved a space lower bound of $n+n/w^{O(t)}$ for query time $t$ in the cell-probe model with word-size $w$ (implying the same RAM lower bound).
In the standard regime where $w=\Theta(\log n)$, this space lower bound matches the ``Succincter'' upper bound for constant query times.

However, if one insists on, say $c_0\log n/\log\log n$ query time, for sufficiently large constant $c_0$, then the best-known data structure occupies at least $n+n^{1-o(1)}$ bits of space on a worst-case $n$-bit input, whereas the state-of-the-art lower bound does not even rule out an $(n+1)$-bit data structure!
Closing this gap is referred to as ``a difficult problem'' in~\cite{PV10}.

Interestingly, we show such an $(n+1)$-bit data structure \emph{does exist}, if we allow arbitrary $O(w)$-bit word operations.
\begin{theorem}[informal]\label{thm_upper_cell}
	Given an $n$-bit array, one can construct a data structure using
	\[
		n+\lceil\frac{n}{w^{\Omega(t)}}\rceil
	\]
	bits of memory supporting rank queries in $O(t)$ time, where $w\geq \Omega(\log n)$ is the word-size, assuming the data structure can perform arbitrary $O(w)$-bit word operations.
\end{theorem}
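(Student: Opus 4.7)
The plan is to combine the blocking/spillover framework of Succincter with a new composite encoding built from an approximate nonnegative tensor decomposition. I would partition $A$ into superblocks of $kB$ bits, each consisting of $k=\Theta(t)$ blocks of $B=\Theta(w)$ bits. A query $\rank{$u$}$ then reduces to: (i) fetch the rank at the enclosing superblock boundary, handled by a sparse top-level index read in $O(1)$ cell probes; (ii) compute the rank at the target block boundary \emph{inside} the superblock; and (iii) add the within-block popcount, which is a single $O(w)$-bit word operation. Step (ii) is the hard step, because naively storing the $k-1$ intra-superblock prefix sums would cost $\Theta(k\log(kB))$ bits per superblock and immediately blow up the redundancy past the target $n/w^{\Omega(t)}$.

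The central idea is that the pair (block contents, intra-superblock prefix sums) is exactly described by a tensor $T$ of order $k+1$ with a single nonzero entry per block-content tuple, and that any $\epsilon$-approximate nonnegative rank-$r$ decomposition $T\approx \sum_{i=1}^{r}\alpha_i\, v_i^{(1)}\otimes\cdots\otimes v_i^{(k)}\otimes u_i$ translates directly into a variable-length code: to encode a superblock, spend $\lceil\log_2 r\rceil$ bits naming a mixture component $i$ and then arithmetic-code the payload under the product distribution $v_i^{(1)}\otimes\cdots\otimes v_i^{(k)}\otimes u_i$. Since the joint object carries only $kB$ bits of entropy (the prefix sums are determined by the blocks), the code length is $kB+O(\log r+\epsilon)$ bits per superblock. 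Chaining superblocks via P\v{a}tra\c{s}cu-style arithmetic-coding spillover collapses the per-superblock $O(1)$-bit rounding losses into one global $O(1)$-bit overhead, so the total redundancy is $(n/(kB))\cdot O(\log r+\epsilon)$. Plugging in the paper's explicit construction with $r\le w^{O(t)}$ and $\epsilon\le w^{-\Omega(t)}$ gives exactly $n/w^{\Omega(t)}$, and rounding up the final fractional bit yields the $\lceil\cdot\rceil$ in the theorem.

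Decoding a superblock in $O(t)$ time uses the $r\le w^{O(t)}$ structure: run the arithmetic-coding decoder for $O(k)=O(t)$ symbols, each costing $O(1)$ word operations on a distribution described in $O(w)$ bits. Because the model permits arbitrary $O(w)$-bit word operations, the per-component tables can be folded directly into the word operations themselves, which is precisely why the $n^{1-c}$ external lookup term of the RAM version disappears here. The main obstacle is producing the explicit nonnegative tensor decomposition with the stated rank and error bounds; this is the combinatorial heart of the paper and amounts to a genuinely new upper bound on the approximate nonnegative rank of the block-and-prefix-sum tensor. A secondary subtlety is ensuring the spillover composes across all $n/(kB)$ superblocks without accumulating more than one $O(1)$-bit waste in total, which generalizes but closely follows the spillover bookkeeping of Succincter.
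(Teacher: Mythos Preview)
Your plan has a genuine gap in the space accounting that breaks the target bound. With superblocks of size $kB=\Theta(tw)$ there are $\Theta(n/(tw))$ of them, and you incur two kinds of overhead, each already far larger than $n/w^{\Omega(t)}$.

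First, the ``sparse top-level index'' must let you read the rank at any superblock boundary in $O(1)$ probes. Even at information-theoretic cost this is $\Theta\!\bigl((n/(tw))\cdot\log(tw)\bigr)$ bits (the entropy of the sequence of superblock sums), which is only $n/w^{1+o(1)}$. Second, your own per-superblock estimate is $kB+O(\log r+\epsilon)$ with $r\le w^{O(t)}$, so $\log r=\Theta(t\log w)$ and the total excess from this term alone is
\[
\frac{n}{kB}\cdot O(\log r+\epsilon)\;=\;\frac{n}{\Theta(tw)}\cdot\Theta(t\log w)\;=\;\Theta\!\left(\frac{n\log w}{w}\right),
\]
again nowhere near $n/w^{\Omega(t)}$. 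The sentence ``plugging in \ldots\ gives exactly $n/w^{\Omega(t)}$'' is simply an arithmetic slip.

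The paper's construction differs in a way that is essential, not cosmetic. It is a depth-$t$ recursion with branching factor $B=w^{1/3}$, so each range-tree block has size $B^t w=w^{\Theta(t)}$; that exponential block size is what makes the outer prefix-sum table cost $n/w^{\Omega(t)}$. At every internal node the tensor being decomposed has order only $B=w^{1/3}$ (not $t$), and its approximate nonnegative rank is $2^{o(w)}$, so the latent variable $\eta$ fits in one word. Crucially, after writing $\eta$ and the prefix sums $T_i\mid\eta$, the paper stores each child's spillover $k_i$ \emph{conditioned on} $\eta,T_{i-1},T_i$; the calculation in Section~\ref{sec_making_RAM} shows this collapses the total to $H(k_1,\dots,k_B)$ with no residual $\log r$ term. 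In your flat scheme there is nothing below the blocks to absorb that conditioning, so the $\log r$ bits per superblock are a genuine loss. Repairing the proposal requires making it recursive with fan-out $\mathrm{poly}(w)$ and applying the tensor decomposition once per level (to an order-$B$ tensor, not order-$t$), which is precisely the paper's argument.
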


By applying a standard trick for self-reducible problems and storing lookup tables for the necessary $O(w)$-bit word operations, this data structure can also be implemented in word RAM with an $n^{1-c}$ space overhead.
\begin{theorem}[informal]\label{thm_upper_ram}
	Given an $n$-bit array, one can construct a data structure using
	\[
		n+ \frac{n}{(\log n)^{\Omega(t)}}+n^{1-c}
	\]
	bits of memory supporting rank queries in $O(t)$ time, in a word RAM with word-size $\Theta(\log n)$, for some universal constant $c>0$.
\end{theorem}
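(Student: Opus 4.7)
The plan is to derive Theorem~\ref{thm_upper_ram} from Theorem~\ref{thm_upper_cell} via block decomposition (the standard self-reducibility of rank) combined with a lookup-table simulation of the non-standard wide-word operations. Fix a small constant $\epsilon>0$, set $B=n^{\epsilon}$, and partition $A$ into $\lceil n/B\rceil$ consecutive blocks of length $B$. Inside each block I would invoke Theorem~\ref{thm_upper_cell} with a reduced virtual word size $w'=\Theta(\epsilon\log n)=\Theta(\log B)$, yielding a per-block encoding of length $B+\lceil B/w'^{\Omega(t)}\rceil$ that answers block-internal rank queries in $O(t)$ time, \emph{assuming} arbitrary $O(w')$-bit operations cost $O(1)$. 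I would additionally store the prefix sum at every block boundary as a $\lceil\log n\rceil$-bit integer, contributing $O((n/B)\log n)=O(n^{1-\epsilon}\log n)$ bits. A rank query at position $u$ then reduces to one precomputed prefix-sum lookup plus one in-block rank query, costing $O(t)$ in total.

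To simulate the wide operations on the actual word-RAM of word size $\Theta(\log n)$, I would precompute a single global lookup table indexed by operation identifier together with the $O(w')$-bit operands; since each operation has at most $2^{O(w')}=n^{O(\epsilon)}$ possible input patterns and only $O(1)$ distinct operation types are used by the construction, the table fits in $n^{O(\epsilon)}$ bits and each call becomes an $O(1)$-time probe. Adding the contributions, the total redundancy is
\[
\sum_{k=1}^{\lceil n/B\rceil}\left\lceil\frac{B}{w'^{\Omega(t)}}\right\rceil+O(n^{1-\epsilon}\log n)+n^{O(\epsilon)}\ \leq\ \frac{n}{(\log n)^{\Omega(t)}}+n^{1-c}
\]
for a suitable constant $c>0$, once $\epsilon$ is chosen small enough that the ceiling-induced $n/B$ slack, the boundary-table term, and the lookup-table term are all dominated by $n^{1-c}$.

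The main thing to verify, and the only place I anticipate genuine difficulty, is the \emph{uniformity} of the cell-probe construction from Theorem~\ref{thm_upper_cell}: one must check that a single, input-independent family of $O(w')$-bit operations suffices across every block and every instance size, so that one $n^{O(\epsilon)}$-bit table serves them all. This should follow from the explicit tensor-decomposition-based construction advertised in the introduction, since the operation set is dictated by the decomposition parameters rather than by the block contents; in the worst case, even $\mathrm{polylog}(n)$ distinct operations per block would inflate the table by only a polylogarithmic factor, and the argument would still go through.
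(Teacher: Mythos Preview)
Your proposal is correct and is essentially the paper's own argument: partition into blocks of size $n^{\epsilon}$, run the cell-probe structure of Theorem~\ref{thm_upper_cell} inside each block with a reduced word size $w'=\Theta(\epsilon\log n)$, store block-boundary prefix sums, and tabulate the non-standard operations once for all blocks. The only refinement the paper adds over your sketch is that it pins down the table size as $t\cdot 2B\cdot\sum_{k\le c_q}2^{kw'}\cdot O(w')$ (the query is a depth-$c_q$ decision tree, so the operands total $c_q w'$ bits, not just $w'$), and then takes $\epsilon=1/(8c_q)$ so that $2^{c_q w'}\le n^{7/8}$; this directly resolves your uniformity worry, since the table depends only on the fixed decision tree of Lemma~\ref{lem_induct} at each of the $t$ levels, not on the block contents.
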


In particular, for query time $t=c_0\log n/\log\log n$ for sufficiently large $c_0$, Theorem~\ref{thm_upper_cell} gives us a data structure using only $n+1$ bits of memory.
Moreover, our new cell-probe data structure matches the P\v{a}tra\c{s}cu-Viola lower bound for any bits of redundancy, up to a constant factor in query time.
It settles the space-time tradeoff for succinct rank in the cell-probe model.
One may also observe if only exactly $n$ bits of memory are allowed, then there is nothing one can do beyond storing $A$ explicitly.
It is because in this case, the data structure has to be a bijection between $A$ and the memory contents.
In particular, even to verify whether $\rank{$n$}=0$, one has to check if the memory content corresponds to the exact all-zero input, which requires a linear scan.
However, when $n+1$ bits are allowed, half of the $2^{n+1}$ memory configurations may be unused, which could potentially facilitate the query algorithm.

\bigskip

En route to our new succinct rank data structure construction, we establish an interesting connection between succinct data structures and \emph{approximate nonnegative tensor decomposition}.
Tensors are generalizations of matrices.
An order-$B$ tensor can be viewed as a $B$-dimensional array.
Analogous to the rank of a matrix, a tensor $\bT$ has rank $1$ if its entries can be written as
\[
	\bT_{x_1,\ldots,x_B}=a^{(1)}_{x_1}\cdot a^{(2)}_{x_2}\cdots a^{(B)}_{x_B},
\]
for vectors $a^{(1)},\ldots,a^{(B)}$.
The rank of a tensor $\bT$ is the minimum number of rank-$1$ tensors, of which $\bT$ can be expressed as a sum.
The nonnegative rank of a nonnegative tensor further requires each rank-$1$ tensor (or equivalently, each $a^{(i)}$) to be nonnegative, hence is at least as large as the rank.
Given a nonnegative tensor $\bT$ and parameters $r,\epsilon$, the problem of \emph{approximate nonnegative tensor decomposition} asks to find $r$ nonnegative rank-$1$ tensors $\bT_1,\ldots,\bT_r$ such that $\|\bT-(\bT_1+\cdots+\bT_r)\|\leq \epsilon$ under certain norm, if exists.

\paragraph{Connections to tensor decomposition.} 
As we mentioned in the beginning, explicitly storing all prefix sums takes too much space.
One inherent reason is that the prefix sums are very correlated.
Denote by $T_i$ the number of ones in the first $i$ bits.
One may verify that for uniformly random inputs (which maximizes the input entropy) and $i<j$, $I(T_i; T_j)\approx \frac{1}{2}\log \frac{j}{j-i}$.
Even just storing $T_{100}$ and $T_{110}$ in separate memory words would already introduce a redundancy of $\frac{1}{2}\log 11>1.5$ bits, because the ``same $1.5$ bits of information'' is stored in two different locations.
Hence, in order to achieve low redundancy, only mutually (almost) independent variables could be stored separately.

The key observation used in our new data structure is the following.
Suppose we were to store $B$ (correlated) numbers $y_1,\ldots,y_B\in[n]$, if we could find another variable $\eta$ such that conditioned on $\eta$, $y_1,\ldots,y_B$ become (almost) mutually independent, then one could hope to first store $\eta$, then store these $B$ numbers \emph{conditioned on} $\eta$.
To retrieve one $y_i$, one always first reads $\eta$, which reveals the representation of $y_i$, then reads $y_i$.
This strategy is only possible if the support size of $\eta$ is not too large, and can be encoded using few bits, since its value needs to be retrieved prior to reading any $y_i$.
If we are aiming at constant retrieval time, then the support size of $\eta$ must be at most $2^{O(w)}$.

The joint distribution of $(y_1,\ldots,y_B)\in[n]^B$ can be described by an order-$B$ tensor $\bT$ of size $n^B$, where the entries describe the probability masses.
Any nonnegative rank-$1$ tensor of this size would correspond to an independent distribution.
Suppose we could find $r$ nonnegative rank-$1$ tensors $\bT_1,\ldots,\bT_r$ such that
\[
	\|\bT-(\bT_1+\cdots+\bT_r)\|_1\leq \epsilon.
\]
This would imply that $\bT$ can approximately be viewed as a convex combination of $r$ independent distributions.
Let $\eta$ indicate which independent distribution we are sampling from, then $y_1,\ldots,y_B$ become conditionally independent conditioned on $\eta$ (except for a small probability of $\epsilon$).
More importantly, the support size of $\eta$ is equal to $r$.
If such decomposition is possible for $r=2^{O(w)}$ and $\epsilon=1/\mathrm{poly}\, n$, then we will have hope to store $(y_1,\ldots,y_B)$ with constant retrieval time and (negligible) redundancy of $1/\mathrm{poly}\, n$.

\paragraph{Computing tensor decomposition.}
In general, nonnegative tensor decomposition is computationally difficult. 
Even for tensor order $B=2$ (i.e., nonnegative \emph{matrix} factorization), any algorithm with running time subexponential in $r$ would yield a subexponential time algorithm for 3-SAT~\cite{AGKM16,Moitra16}, breaking Exponential Time Hypothesis.
Fortunately, the tensors that we obtain from the data structure problem are not arbitrary.
For the rank problem, the values $\bT_{x_1,\ldots,x_B}$ are relatively smooth as a function of $(x_1,\ldots,x_B)$.
Given such a tensor, we may partition it into small cubes, and approximate the values within each cube by a low degree polynomial $P(x_1,\ldots,x_B)$.
The key observation here is that the tensor corresponding to a monomial $x_1^{e_1}\cdots x_B^{e_B}$ has rank $1$.
If $P$ has low degree, thus has a small number of monomials, then the tensor restricted to the cube must have low approximate rank.
To make this approximation nonnegative, we apply the following transformation to a negative monomial $-x^ay^b$ (plus a large constant):
\begin{equation}\label{eqn_neg}
	M^{a+b}-x^ay^b\equiv \frac{1}{2}(M^a-x^a)(M^b+y^b)+\frac{1}{2}(M^a+x^a)(M^b-y^b).
\end{equation}
When $x, y\in[0, M]$, both terms on the RHS become nonnegative.
One may also generalize this equation to monomials with more than two variables.
The polynomial obtained from each small cube has a sufficiently large constant term so that all negative monomials can be transformed simultaneously via the above equation.
Finally, by summing up the approximations within each cube, we obtain a low rank nonnegative approximation for the whole tensor.

\subsection{Organization of the paper}
In Section~\ref{sec_prelim}, we give preliminary and define notations.
In Section~\ref{sec_overview}, we give an overview of the new succinct rank data structure, as well as a summary of \cite{Pat08}.
In Section~\ref{sec_upper}, we present our data structure construction.
Finally, we conclude with discussions and open questions in Section~\ref{sec_concl}.

\section{Preliminary and Notations}\label{sec_prelim}
\subsection{Notations}
	Let $a, b\in\mathbb{R}$ and $b\geq 0$, denote by $a\pm b$ the set $[a-b, a+b]$.
	Similarly, denote by $c(a\pm b)$ the set $[c(a-b), c(a+b)]$.
	Throughout the paper, $\log x=\log_2 x$ is the binary logarithm.
\subsection{Spillover representation}
One important technique in succinct data structures is the \emph{spillover representation}, formally introduced by P\v{a}tra\c{s}cu~\cite{Pat08}.
Similar ideas also appeared in Munro et al.~\cite{MRRR03} and Golyski et al.~\cite{GGGRR07}.
It allows one to use ``fractional bits'' of memory in each component of the data structure construction so that the fractions can be added up before rounding.

A data structure is said to use $m$ bits of memory with spillover size $K$, if it can be represented by a pair consisting of $m$ bits and a number (called the \emph{spillover}) in $[K]$ (usually $K\leq 2^{O(w)}$).
We may also say that the data structure uses space $[K]\times \{0,1\}^m$.
At the query time, we assume that the spillover $k\in[K]$ is given for free, and \emph{any} consecutive $w$ bits of the representation can be accessed in constant time.
These $m$ bits are usually stored explicitly in the memory.
Hence, any consecutive $w$ bits can be retrieved by reading one or two consecutive words.

Intuitively, such representations use ``$m+\log K$ bits'' of space.
It avoids the problem of ``rounding to the next integer bit'' in designing a subroutine, as now we can round up the spillover, which wastes much less space: $\log (K+1)$ and $\log K$ only differ by $\log (1+1/K)\approx 1/K$ bit.
Setting $K=\Omega(n^2)$ makes it negligible.

\subsection{The cell-probe model}
The cell-probe model proposed by Yao~\cite{Yao78} is a powerful non-uniform computational model, primarily used in data structure lower bound proofs.
In the cell-probe model, we only measure the number of memory accesses.
The memory is divided into \emph{cells} of $w$ bits.
The data structure may read or write the content of a memory cell by \emph{probing} this cell.
In each step of the query algorithm, it may probe one memory cell (or in the last step, it returns the answer to the query), based on all information it has obtained so far, including the query and all previous contents the algorithm has seen.
The running time is defined to be the number of memory cells probed.
We assume computation is free.

\section{A Streamlined Overview}\label{sec_overview}
In this section, we overview our new succinct rank data structure.
To avoid heavy technical details, we are going to present the data structure in a stronger model, where each word may store ``$O(w)$ bits of information about the data'', rather than an exact $w$-bit string.
We refer to this model of computation as the \emph{information cell-probe model}.
The data structure for standard word RAM can be found in Section~\ref{sec_upper}.

Let us first assume that the input $n$-bit array $A$ is uniformly random.\footnote{Uniform distribution maximizes the input entropy, making the same space bound more difficult to achieve (at least in expectation). The data structure in Section~\ref{sec_upper} for standard word RAM does not rely on this assumption, and the space bound there is for worst-case input.}
The information that each memory word (content) $C_i$ reveals about the input $I(C_i;A)$ is measured under this input distribution.
This mutual information, in some sense, measures ``the space usage of word $C_i$.''
We will focus on the case where $w=\Theta(\log n)$, and the goal is to design a data structure that
\begin{itemize}
	\item stores no more than $w$ bits of information in every memory cell $C_i$: $I(C_i;A)\leq w$,
	\item uses no more than ``$n+1$ bits'' of space: $\sum_i I(C_i;A)\leq n+1$, and
	\item supports rank queries \rank{u} in $O(\log_w n)=O(\log n/\log\log n)$ time.
\end{itemize}

The top-level of the data structure is similar to~\cite{Pat08}, which is a standard range tree with branching factor $B$, here for $B=w^{1/3}$.
For simplicity, we assume $n$ is a power of $B$ (and $n=B^t$).
Given an input array $A[0,\ldots,n-1]$, we construct a tree with branching factor $B$ and depth $t$ by recursion.
Each node $v$ at level $i$ in the tree is associated with a subarray $A_v$ of length $B^{t-i}$.
For $j\in [B]$, the $j$-th child of $v$ is associated with the subarray containing $((j-1)B^{t-i-1}+1)$-th to $(jB^{t-i-1})$-th element of $A_v$.
Figure~\ref{fig_datastr} presents the top-level structure (a standard range tree).
\begin{figure}[!ht]
\begin{center}
\fbox{                                                                                                         
{\footnotesize                                                                                                 
\parbox{6.375in} {                                                                                             
\underline{Constructing the data structure:}\\[0.05in]
\textbf{initialization}($A$)
\vspace{-.05in}\begin{enumerate}
\addtolength{\itemsep}{-4pt}
\item store the sum in a word $s:=\sum_{i=0}^{n-1} A[i]$
\item store $\mathtt{DS(A)}:=$\,\textbf{preprocess}($A$, $s$)
\end{enumerate} 

\textbf{preprocess}($A$, $s$) \hfill // preprocess $A$, provided that its sum $s=\sum_{i=0}^{n-1} A[i]$ has already been stored
\vspace{-.05in}\begin{enumerate}
\addtolength{\itemsep}{-4pt}
\item if $|A|=1$, return \texttt{NULL} \hfill // nothing else to store
\item divide $A$ evenly into $A_1,\ldots,A_B$, let $s_i$ be the sum of $A_i$
\item $\mathtt{sums(A)}:=$\,\textbf{aggregate\_sums}($A, s_1,s_2,\ldots,s_B$) \hfill // preprocess the sums
\item for each $i=1,\ldots,B$, $\mathtt{DS(A_i)}:=$\,\textbf{preprocess}($A_i$, $s_i$) \hfill // recurse on each subarray
\item return $(\mathtt{sums(A)}, \mathtt{DS(A_1)},\ldots,\mathtt{DS(A_B)})$
\end{enumerate}

\textbf{aggregate\_sums}($A, s_1,s_2,\ldots,s_B$): preprocess $B$ numbers, such that in constant time, one may retrieve $s_i$ using \textbf{retrieve}($s$, $i$), and retrieve $s_1+\cdots+s_i$ using \textbf{retrieve\_prefix\_sum}($s$, $i$), provided that $s=s_1+\cdots+s_B$. (implemented in Section~\ref{sec_ten_dec})

\bigskip
                                                      
\underline{Query algorithm:}\\[0.05in]
\textbf{rank}($u$)
\vspace{-.05in}\begin{enumerate}
\addtolength{\itemsep}{-4pt}
\item read $s$
\item $\mathtt{DS(A)}$.\textbf{query}($s$, $u$)
\end{enumerate}

$\mathtt{DS(A)}$.\textbf{query}($s$, $u$) \hfill // output $A[0]+\cdots+A[u-1]$ provided that the sum of $A$ is $s$
\vspace{-.05in}\begin{enumerate}
\addtolength{\itemsep}{-4pt}
\item if $u=0$, return 0
\item if $u=|A|$, return $s$ \hfill // must have returned by now, if $|A|=1$
\item compute $(i, u')$ for integer $i$ and $0\leq u'<|A|/B$ such that $u=i(|A|/B)+u'$ \hfill // $u$ is in the $(i+1)$-th block
\item let $s_{i+1}=\mathtt{sums(A)}.\textbf{retrieve}(s,i+1)$
\item return $\mathtt{sums(A)}.\textbf{retrieve\_prefix\_sum}(s,i)$ \hfill // $s_1+\cdots+s_i$ \\
\phantom{return }+\,$\mathtt{DS(A_{i+1})}$.\textbf{query}($s_{i+1}$, $u'$) \hfill // recurse on $A_{i+1}$
\end{enumerate}
}}}
\caption{Succinct rank data structure}\label{fig_datastr}
\end{center}
\end{figure}

\bigskip

Suppose both \textbf{retrieve\_prefix\_sum} and \textbf{retrieve} take constant time, then rank queries can be answered in $O(t)=O(\log_w n)$ time.
Hence, the task boils down to implementing \textbf{aggregate\_sums} efficiently, which constitutes our main technical contribution.

Let $T_i=s_1+\cdots+s_i$.
There are two natural implementations.
\begin{itemize}
	\item We may store the prefix sums $T_1,T_2,T_3,\ldots$, which allows one to retrieve prefix sum and each $s_i$ in constant time.
	However, under uniform input distribution, we have
	\[
		I(T_{k-1};T_k)\approx \frac{1}{2}\log k.
	\]
	``The same $\frac{1}{2}\log k$ bits of information'' will be stored in different locations.
	Since the goal is to use no more than one extra bit, we fail as soon as the first $5$ prefix sums are written down.
	\item Or we may store the numbers $s_1,s_2,s_3,\ldots$ The same issue may still exist. Moreover, storing the numbers explicitly would not let us retrieve prefix sums efficiently.
\end{itemize}
It is worth noting that if we set $B=2$ instead of $w^{1/3}$, one could jointly store the pair $(s_1, s_2)$ \emph{conditioned on} $s_1+s_2$ in one memory word, which introduces no redundancy and allows one to retrieve both sums in constant time.
However, the depth of the tree becomes $\log n$ instead of $\log_w n$, so does the query time.
It is essentially what the previous data structure~\cite{Pat08} does, after ``projected'' into the information cell-probe model.
The major effort of \cite{Pat08} is spent on transforming this ``standard'' solution in information cell-probe model to word RAM, which we will briefly discuss in Section~\ref{sec_making_RAM}.
Hence, the subroutine \textbf{aggregate\_sums} is where our solution deviates from the previous one.

\subsection{Aggregating sums}\label{sec_agg_sum}
To implement these subroutines, the main idea is to store correlated variables $\{T_i\}$ as we discussed in the introduction: Find a random variable $\eta$ such that conditioned on $\eta$ (and the sum of all numbers $T_B=s$), the prefix sums $\{T_i\}$ are approximately independent; we will first store $\eta$, then store the prefix sums \emph{conditioned on $\eta$}.
To retrieve a $T_i$, we first read $\eta$, then read $T_i$, which is stored according to conditional distribution.
Each number $s_i$ can be retrieved by simply taking the difference of $T_i$ and $T_{i-1}$.

To find such $\eta$, we analyze the joint distribution of $(T_1,\ldots,T_B)$, which can be described by an order-$B$ tensor $\bT$ of size $(n+1)^B$.
Each entry $(x_1,\ldots,x_B)$ describes the probability that $s_1=x_1, s_2=x_2,\ldots,s_B=x_B$.
As we mentioned in the introduction, finding such a random variable $\eta$ is (approximately) equivalent to decomposing $\bT$ into a sum of few nonnegative rank-$1$ tensors, since a nonnegative rank-$1$ tensor describes an independent distribution.
Then the joint distribution can be viewed as a convex combination of these independent distributions, where $\eta$ decides which independent distribution to sample from.
The number of such rank-$1$ tensors corresponds to the support size of $\eta$, hence provides an upper bound on its entropy.
Since $\eta$ needs to be stored in one word, the goal is to decompose $\bT$ into $2^{w}=n^{O(1)}$ rank-$1$ nonnegative tensors.
In the next subsection, we elaborate this idea, and show how to find this $\eta$ for our problem.

\subsection{Tensor decomposition}\label{sec_ten_dec}
The tensor corresponding to the joint distribution of $(T_1,\ldots,T_B)$ can be written in an explicit form as follows.
Suppose each subarray $A_1,\ldots,A_B$ has size $l$, we have
\begin{equation}\label{eqn_tensor_T}
	\bT_{x_1,\ldots,x_B}:=\Pr[T_1=x_1\wedge T_2=x_2\wedge \cdots\wedge T_B=x_B]=2^{-lB}\cdot \prod_{i=1}^B \binom{l}{x_i-x_{i-1}},
\end{equation}
where we assumed $x_0=0$.


We first show that the binomial coefficients $\binom{l}{x_i-x_{i-1}}$ appeared in Equation~\eqref{eqn_tensor_T} can be piecewise-approximated by a low degree polynomial in $x_i$ and $x_{i-1}$ (multiplied by an exponential function) (see Lemma~\ref{lem_approx} in Section~\ref{sec_binom_approx}).
After handling the negative terms via Equation~\eqref{eqn_neg} and putting together the piecewise approximation, Lemma~\ref{lem_binom} in Section~\ref{sec_binom_approx} implies that the binomial coefficients can be expressed as\footnote{The lemma provides a few extra guarantees on the approximation that is needed for the word RAM data structure, and is omitted here.}
\[
	\binom{l}{x_i-x_{i-1}}\cdot 2^{-l}=E(x_i,x_{i-1})+\sum_{j=1}^{r_0} Q_j(x_{i-1})\cdot R_j(x_i),
\]
where $E,Q_j,R_j$ are all nonnegative, $\E_{x_{i-1}} [\sum_{x_i} E(x_i, x_{i-1})]\leq \epsilon$ (the error term is small) and $r_0=(\log 1/\epsilon)^{O(1)}$ (the number of terms is small).
By multiplying the above approximation over all $i=1,\ldots,B$, we obtain an approximation for $\bT$:
\[
	\bT_{x_1,\ldots,x_B}=\tilde{E}(x_1,\ldots,x_B)+\sum_{j_1,\ldots,j_B=1}^{r_0}Q_{j_1}(0)R_{j_1}(x_1)Q_{j_2}(x_1)R_{j_2}(x_2)\cdots Q_{j_B}(x_{B-1})R_{j_B}(x_B).
\]
One may verify that the overall error term $\tilde{E}$ is small, $\sum_{x_1,\ldots,x_B}\tilde{E}(x_1,\ldots,x_B)\leq B\epsilon$, and the total number of terms $r$ is at most $r_0^B=(\log 1/\epsilon)^{O(B)}$.
Note that each term $Q_{j_1}(0)R_{j_1}(x_1)Q_{j_2}(x_1)R_{j_2}(x_2)\cdots Q_{j_B}(x_{B-1})R_{j_B}(x_B)$ corresponds to a rank-1 tensor (i.e., an independent distribution).
By normalizing each rank-1 tensor, we obtain the following lemma on nonnegative approximate tensor decomposition of $\bT$.

\begin{lemma}
	One can find $r\leq (\log 1/\epsilon)^{O(B)}$ rank-1 tensors $\bT_1,\ldots,\bT_r$ such that the above tensor $\bT$ can be expressed as
	\[
		\bT=p_E\cdot \bE+\sum_{j=1}^r p_j\cdot \bT_j,
	\]
	where $\bE$, $p_E$, $\bT_j$ and $p_j$ are all nonnegative, $\|\bE\|_1=1,\|\bT_j\|_1=1$ and $p_E\leq B\epsilon$ for all $j=1,\ldots,r$.
\end{lemma}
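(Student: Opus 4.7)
The plan is to obtain the decomposition by applying Lemma~\ref{lem_binom} to each of the $B$ binomial factors in Equation~\eqref{eqn_tensor_T}, expanding the resulting product, recognizing each term of the expansion as a rank-$1$ tensor, and controlling the leftover error by a hybrid argument. Write $f_i(x_{i-1}, x_i) := \binom{l}{x_i - x_{i-1}} \cdot 2^{-l}$, so that $\bT_{x_1,\ldots,x_B} = \prod_{i=1}^B f_i(x_{i-1}, x_i)$ with $x_0 := 0$. Lemma~\ref{lem_binom} provides a splitting $f_i = E_i + M_i$ with $M_i(x_{i-1},x_i) = \sum_{j=1}^{r_0} Q_j(x_{i-1}) R_j(x_i)$, all components nonnegative, $r_0 = (\log 1/\epsilon)^{O(1)}$, and $\E_{x_{i-1}}[\sum_{x_i} E_i(x_{i-1},x_i)] \leq \epsilon$, the expectation taken against the marginal $\pi_{i-1}$ of $T_{i-1}$.

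Multiplying out $\bT_{x_1,\ldots,x_B} = \prod_i (E_i + M_i)$ and first keeping the ``all main'' term yields
\[
\prod_{i=1}^B M_i(x_{i-1},x_i) = \sum_{j_1,\ldots,j_B=1}^{r_0} Q_{j_1}(0) R_{j_1}(x_1) Q_{j_2}(x_1) R_{j_2}(x_2) \cdots Q_{j_B}(x_{B-1}) R_{j_B}(x_B).
\]
Each summand is a product of univariate nonnegative functions of $x_1, \ldots, x_B$ (grouping $R_{j_k}(x_k) Q_{j_{k+1}}(x_k)$ as the factor depending on $x_k$, together with the two endpoint factors), hence a nonnegative rank-$1$ tensor. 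The total count is at most $r_0^B = (\log 1/\epsilon)^{O(B)}$.

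Next, I would bound $\tilde E := \bT - \prod_i M_i$, which is pointwise nonnegative because $M_i \leq f_i$. The key tool is the telescoping identity
\[
\prod_{i=1}^B f_i - \prod_{i=1}^B M_i = \sum_{k=1}^B \Bigl(\prod_{i<k} f_i\Bigr) \, E_k \, \Bigl(\prod_{i>k} M_i\Bigr).
\]
For the $k$-th summand, sum $x_B, x_{B-1}, \ldots, x_{k+1}$ in order; each step uses $\sum_{x_i} M_i(x_{i-1},x_i) \leq \sum_{x_i} f_i(x_{i-1}, x_i) = 1$ and contributes a factor of at most $1$. What remains is $\sum_{x_1,\ldots,x_k} \bigl(\prod_{i<k} f_i\bigr) E_k$, and marginalizing $x_1,\ldots,x_{k-2}$ using that $\prod_{i<k} f_i$ equals the joint density of $(T_1,\ldots,T_{k-1})$ collapses this to $\E_{x_{k-1} \sim \pi_{k-1}}[\sum_{x_k} E_k(x_{k-1},x_k)] \leq \epsilon$. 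Summing the $B$ terms gives $\|\tilde E\|_1 \leq B\epsilon$. Finally, normalize: define $p_{\vec{j}} := \|\bT^{(\vec{j})}\|_1$ and $\bT_{\vec{j}} := \bT^{(\vec{j})}/p_{\vec{j}}$ for each rank-$1$ term (dropping those with zero norm), and set $p_E := \|\tilde E\|_1$, $\bE := \tilde E / p_E$. Since $\|\bT\|_1 = 1$, the weights are nonnegative and sum to $1$, which is the form claimed.

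The main obstacle I anticipate is the error accounting rather than any conceptual step. The guarantee supplied by Lemma~\ref{lem_binom} is only an \emph{average} bound against the specific marginal $\pi_{i-1}$, not a uniform bound over $x_{i-1}$. The hybrid identity above is the right one precisely because putting $f_i$'s on the ``left'' of the mutilated factor $E_k$ reassembles $\pi_{k-1}$ exactly after marginalization, so the Lemma~\ref{lem_binom} bound can be applied as stated; other arrangements (for instance, swapping roles and keeping $M_i$'s on both sides) would not match the quantity that Lemma~\ref{lem_binom} controls.
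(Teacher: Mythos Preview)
Your proposal is correct and matches the paper's approach. The paper itself only sketches this argument in the overview and explicitly states the lemma ``without a formal proof''; your telescoping identity and the marginalization step fill in exactly the ``one may verify that $\sum \tilde E \leq B\epsilon$'' that the paper leaves to the reader. One remark: Lemma~\ref{lem_binom} as actually stated gives the stronger \emph{pointwise} bound $\sum_y E(x,y)\leq \epsilon$ for every $x$ in range, so the concern you raise about needing the hybrid to reassemble $\pi_{k-1}$ is in fact unnecessary---any ordering would work---but your argument is correct as written and is the right one if only the averaged bound from the overview text is assumed.
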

Note that for technical reasons, the final data structure for word RAM requires extra guarantees on the decomposition, and the above lemma is not directly used in Section~\ref{sec_upper}.
Hence, we only state it here without a formal proof.

By setting $\epsilon=1/n^2$, we have $r=2^{O((\log n)^{1/3}\log\log n)}=n^{o(1)}$.
Thus, one can view the joint distribution $\bT$ of $(T_1,\ldots,T_B)$ as a convex combination of $\bE$, whose probability is tiny, and $r$ mutually independent distributions $\bT_1,\ldots,\bT_r$.\footnote{We abuse the notation of a tensor for the corresponding joint distribution.}
Now let $\eta$ be the random variable indicating the distribution we are currently sampling from, we have $H(\eta)\leq o(\log n)$, and the prefix sums are almost independent conditioned on $\eta$.

To generate $\eta$ given input array $A$, we first partition the sample space $\{0,1\}^{lB}$, such that each part corresponds to one distribution in the convex combination.
More specifically, we fix a partition of $\{0,1\}^{lB}$, the domain of $A$, into $\cK_E, \cK_1,\ldots,\cK_r$, such that $|\cK_E|\approx p_E\cdot 2^{lB}$ and $|\cK_j|\approx p_j\cdot 2^{lB}$ for $j=1,\ldots,r$.
Moreover, this partition guarantees that for every $j=1,\ldots,r$, the prefix sums $(T_1,\ldots,T_B)$ of an array $A$ that is sampled uniformly from $\cK_j$, is approximately distributed according to $\bT_j$.
Also, the distribution of $(T_1,\ldots,T_B)$ when $A$ is sampled uniformly from $\cK_E$ is roughly $\bE$.
Given such a partition, it suffices to set $\eta$ to the part that contains input $A$.
See Figure~\ref{fig_aggr} for the detailed implementations of \textbf{aggregate\_sums}, \textbf{retrieve} and \textbf{retrieve\_prefix\_sum}.

\begin{figure}[!h]
\begin{center}
\fbox{                                                                                                         
{\footnotesize                                                                                                 
\parbox{6.375in} {

Fix a partition $\cK_E,\cK_1,\ldots,\cK_r$ of $\{0,1\}^{lB}$ with the above guarantees.

\vspace{0.15in}

\underline{Aggregating the sums:}\\[0.05in]
\textbf{aggregate\_sums}($A,s_1,\ldots,s_B$)
\vspace{-.05in}\begin{enumerate}
\addtolength{\itemsep}{-4pt}
\item for $i=1,\ldots,B$, compute $T_i=s_1+\cdots+s_i$ 
\item $\eta:=$ the index of the set among $\{\cK_E,\cK_1,\ldots,\cK_r\}$ that $A$ belongs to
\item store $\eta$ conditioned on $T_B$ \hfill // $T_B$, the sum of the entire subarray, is assumed to be stored outside this subroutine\\
\item for $i=1,\ldots,B-1$, store $T_i$ conditioned on $\eta$
\item return the $B$ stored variables $(\,``\eta\mid T_B"\,\,,\,\,\, ``T_1\mid \eta"\,\,,\,\,\,\ldots\,,\,\,\,``T_{B-1}\mid \eta"\,)$
\end{enumerate} 

\bigskip
                                                      
\underline{Query algorithms:}\\[0.05in]
\textbf{retrieve\_prefix\_sum}($s$, $i$) \hfill // output $T_i$ provided that $s_1+\cdots+s_B=s$
\vspace{-.05in}\begin{enumerate}
\addtolength{\itemsep}{-4pt}
\item if $i=0$, return $0$
\item if $i=B$, return $s$
\item read $\eta$ conditioned $s$ \hfill // $T_B=s$
\item read $T_i$ conditioned on $\eta$
\item return $T_i$
\end{enumerate}

\textbf{retrieve}($s$, $i$) \hfill // output $s_i$ provided that $s_1+\cdots+s_B=s$
\vspace{-.05in}\begin{enumerate}
\addtolength{\itemsep}{-4pt}
\item return $\textbf{retrieve\_prefix\_sums}(s, i)\,-\,\textbf{retrieve\_prefix\_sums}(s, i-1)$
\end{enumerate}
}}}
\caption{\textbf{aggregate\_sums} ``implementation''}\label{fig_aggr}
\end{center}
\end{figure}

\bigskip
Intuitively, storing a random variable $\mu$ takes $H(\mu)$ bits of space, and storing $\mu$ conditioned $\nu$ should take $H(\mu\mid \nu)$ bits of space.
We will briefly discuss how to store a variable conditioned on another in the next subsection, and the details are deferred to the final construction in Section~\ref{sec_upper}.

One may verify that $\eta$ generated by this algorithm satisfies
\[
	\sum_{i=1}^B H(T_i\mid \eta)\leq H(T_1,\ldots,T_B\mid \eta)+o(n^{-1}).
\]
Hence, the total space usage of \textbf{aggregate\_sums} is at most
\begin{align*}
	 H(\eta\mid T_B)+\sum_{i=1}^{B-1} H(T_i\mid \eta) &\leq H(\eta\mid T_B)+H(T_1,\ldots,T_B\mid \eta)-H(T_B\mid \eta)+o(n^{-1}) \\
	&=H(T_1,\ldots,T_{B-1}, \eta\mid T_B)+o(n^{-1}).
\end{align*}

Note that the above implementation does not give the right space bound, since the correct space benchmark is $H(T_1,\ldots,T_{B-1}\mid T_B)$ rather than $H(T_1,\ldots,T_{B-1},\eta\mid T_B)$.
In fact, here $\eta$ encodes extra information about the input, which is going to be encoded one more time in the recursion.
This issue is resolved in the final data structure for word RAM by ``adding $\eta$ to the recursion.''
That is, instead of recursing on a subarray provided that its sum has been stored (as what we do here), we will recurse provided that both the sum and ``this extra information encoded by $\eta$'' have already been stored.
Hence, the data structure can avoid storing duplicated information during the recursion.
See Section~\ref{sec_making_RAM} and Section~\ref{sec_upper} for more details.
When no information is stored both in $\eta$ and in the recursion, this subroutine introduces only $o(n^{-1})$ bit of redundancy.
Since it is invoked no more than $O(n)$ times in total, the overall space usage is at most $n+1$ bits.

\subsection{Previous data structure and transforming into RAM}\label{sec_making_RAM}

As mentioned earlier, the main effort in the previous best-known construction is to transform a ``standard'' information cell-probe data structure into word RAM.
To this end, the \emph{spillover representation} was introduced (see Section~\ref{sec_prelim} for its definition), and it is also heavily used in our data structure.

\paragraph{Spillover representation.}
One simple example of the spillover representation is to represent the sum of a 0-1 string.
Given a 0-1 string of length $n$, one can design a data structure using space $[K]\times \{0,1\}^m$, such that $K=n^{O(1)}$, the spillover $k\in [K]$ encodes the sum of all $n$ bits, and $m+\log K\leq n+O(n^{-2})$, i.e., it has $O(n^{-2})$ bit of redundancy.
To do this, we first permute the representation of all $n$-bit binary strings. 
That is, instead of storing the $n$-bit string as is, we are going to sort all $n$-bit strings based on their sums, and store the index in the sorted order as an $n$-bit integer.
The idea is that for most inputs, the top $O(\log n)$ bits of the index already reveals the sum.
Only when the top bits correspond to a boundary between two adjacent sums, the lower bits will need to be read, in order to determine the actual sum.
Let $v=\lceil 3\log n\rceil$.
This problem can be resolved by assigning an integer multiple of $2^{n-v}$ many indices to each sum.
In the other word, all strings with sum equal to $0$ are encoded to the interval $[0, 2^{n-v}-1]$, 
all strings with sum equal to $1$ are encoded to $[2^{n-v}, 2^{n-v+1}-1]$, etc. 
For each sum $x$, all strings with sum equal to $x$ are encoded to an interval of length
\[
	\left\lceil\binom{n}{x}\cdot 2^{v-n}\right\rceil\cdot 2^{n-v},
\]
aligned to integer multiples of $2^{n-v}$.
Therefore, the sum can be retrieved without accessing the lower $n-v$ bits of the encoding.
There are $n+1$ different values for the sum, the largest number needed for this encoding is at most
\[
	2^n+2^{n-v}\cdot (n+1).
\]
Now, we set $m=n-v$ and use the $m$ bits of the memory to store the lower $n-v$ bits, and set the spillover to 
\[
	K=(2^n+2^{n-v}\cdot (n+1))/2^{n-v}=2^v+n+1,
\]
which stores the top bits of the encoding.
The size of the spillover is bounded by $n^{O(1)}$, and the space usage is $m+\log K=\log (2^n+2^{n-v}\cdot (n+1))\leq n+O(n^{-2})$.
This trick is also used later in the formal proof, e.g., see Lemma~\ref{lem_sum_spillover}.

\paragraph{P\v{a}tra\c{s}cu's rank data structure.}
The high-level structure of the previous best-known word RAM data structure is similar to the one stated earlier in this section (e.g., see Figure~\ref{fig_datastr}): a range tree with branching factor $B=2$.
To construct a rank data structure on an array of length $n$, one first recurses on the two halves, and obtains two data structures with spillover. 
It is guaranteed that the two spillover sets are both bounded by $O(n^3)$, and the sum of each half can be decoded from solely the corresponding spillover.
To combine the two data structures, the memory bits obtained from the two recursions are concatenated directly.
The two spillovers can be combined using the above trick for representing the sum, by sorting all pairs of spillovers by the sum of the whole array (this is possible, since each spillover determines the sum of each half).
As we argued above, this step introduces redundancy of $O(n^{-2})$ bit, and the new spillover size is again bounded by $O(n^3)$.
By writing down the final spillover in its binary representation, the overall redundancy is no more than one bit.

\paragraph{Adapt our new data structure to RAM.}
Compared to the previous data structure, our new data structure will use a different algorithm to combine the spillovers (corresponding to \textbf{aggregate\_sums}), and can afford to set the branching factor $B$ to $(\log n)^{1/3}$.
The preprocessing algorithm still recursively constructs a data structure with spillover for each subarray, and provides the same guarantees as in the previous solution: the spillover size is bounded by $\mathrm{poly}\, n$ and the spillover determines the sum.
Then the algorithm combines these $B$ spillovers $k_1,\ldots,k_B$, such that one can retrieve each spillover $k_i$ and compute the sum of first $i$ subarrays in constant time.
Observe that when the space usage is very close to the information theoretical lower bound, for each subarray, the distribution of the sum encoded by a random spillover is close to that of the sum of a random input (Fact~\ref{fact_size_bound}).
That is, for a subarray of length $l$, roughly $\binom{l}{x}\cdot 2^{-l}$ fraction of the spillovers encode the sum equal to $x$.
Therefore, the tensor decomposition argument in the previous subsection also applies when encoding these $B$ spillovers.

More specifically, to compute $\eta$ given the input, we will partition the set of all possible $B$-tuples of spillovers, instead of the set of all inputs, according the tensor decomposition.
We first store $\eta$, and $T_1,\ldots,T_B$ conditioned on $\eta$.
Then, each spillover $k_i$ is stored conditioned on $\eta$ and two adjacent prefix sums $T_{i-1}$ and $T_i$.
We will carefully choose the partition $\{\cK_i\}$ such that all $T_i$ are almost independent conditioned on $\eta$, and moreover, each $k_i$ is almost independent of all other spillovers conditioned on $\eta, T_{i-1}$ and $T_i$.
Hence, for a random input, the space usage is
\begin{align*}
	&\,H(\eta)+\sum_{i=1}^{B}H(T_i\mid \eta)+\sum_{i=1}^B H(k_i\mid \eta,T_{i-1},T_i) \\
	\approx &\,H(\eta)+H(T_1,\ldots,T_B\mid\eta)+\sum_{i=1}^B H(k_i\mid \eta,T_{i-1},T_i) \\
	\approx &\,H(T_1,\ldots,T_B,\eta)+\sum_{i=1}^B H(k_i\mid \eta,k_1,\ldots,k_{i-1},T_{i-1},T_i,k_{i+1},\ldots,k_B) \\
	\leq &\,H(T_1,\ldots,T_B,\eta)+\sum_{i=1}^B H(k_i\mid \eta,k_1,\ldots,k_{i-1},T_1,\ldots,T_B) \\
	= &\,H(T_1,\ldots,T_B,\eta)+H(k_1,\ldots,k_B\mid \eta,T_1,\ldots,T_B) \\
	= &\,H(k_1,\ldots,k_B),
\end{align*}
which approximately matches the information theoretical lower bound of storing $B$ spillovers.

This solution also handles worst-case input.
We will design the partition carefully such that all entropies appeared in the above analysis can be replace by the logarithms of the support sizes.
To encode a variable directly (e.g., $\eta$), we encode the index within its support, which takes logarithm of the support size many bits.
To encode a variable conditioned on another (e.g., $T_i\mid \eta$), the encoding varies based on the value of the conditioning variable.
For instance, to encode $T_i$ given $\eta$, we examine the value of $\eta$, and encode the index of $T_i$ within its support given the value of $\eta$, i.e., all values of $T_i$ that are not compatible with the current value of $\eta$ are removed.
When a support size is a perfect power of two, the corresponding variable can be stored using an integer bits of memory.
Otherwise, Lemma~\ref{lem_change_base} is applied to produce a data structure with spillover, which introduces no more than $n^{-2}$ bit of redundancy each time.
Finally, Lemma~\ref{lem_sum_spillover} ensures that the sum of the whole subarray $T_B$ is encoded in the spillover, to provide the claimed guarantee of the recursion.
See the proof of Lemma~\ref{lem_induct_large} for more details.

As we can see from the calculation above, it also resolves the issue mentioned in the end of Section~\ref{sec_ten_dec}, since after we have encoded the prefix sums, the ``remaining information'' about each spillover is encoded conditioned on both $\eta$ and $s_i$. 
Each spillover $k_i$ is the only parameter that goes into (or comes back from) the next level of recursion, and $k_i$ contains all relevant information about $\eta$ and $s_i$.
Hence, no information revealed by $\eta$ is stored again in the recursion.

\paragraph{Avoid arbitrary word operations.}
The above construction assumes that the query algorithm can perform arbitrary $O(\log n)$-bit word operations.
In P\v{a}tra\c{s}cu's rank data structure, decoding the two spillovers in constant time also requires non-standard operations.
One way to avoid arbitrary word operations is to store a look-up table in the memory.
However, the look-up table itself may take $\mathrm{poly}\, n$ space. 
Here, we apply a standard trick for self-reducible problems (also applied in the previous data structure) to avoid this issue: divide the input array into blocks of length $n^{\delta}$, construct the above data structure and look-up table for each block, and store the sum of first $i$ blocks for all $i$.
To answer a prefix sum query in the $i$-th block, we retrieve the sum of first $i-1$ blocks and make a query in the $i$-th block. 
Since all blocks use the same data structure, the corresponding look-up table is also the same, and only one copy needs to be stored.
By setting $\delta$ to be the right constant, both the size of the look-up table and the total redundancy from all blocks are bounded by $n^{1-c}$ for some constant $c>0$.
See Section~\ref{sec_ram} for details.

\section{Succinct Rank Data Structure}\label{sec_upper}

Guided by the construction in Section~\ref{sec_overview}, in this section, we present a succinct rank data structure that works in the standard word RAM model.
As a starting point, we first present a data structure in the cell-probe model, assuming arbitrary word operations are allowed.
In Section~\ref{sec_ram}, we show how to implement this solution in word RAM.

\begin{restate}[Theorem~\ref{thm_upper_cell}]
	Given a 0-1 array of length $n$ for sufficiently large $n$, for any $t\geq 1$, one can construct a succinct data structure using
	\[
		n+\lceil \frac{n}{w^{\Omega(t)}}\rceil
	\]
	bits of memory in the cell-probe model with word-size $w\geq 7\log n$, such that every rank query can be answered in $O(t)$ time.
\end{restate}
\begin{proof}
	The top-level of the data structure is a sequence of range trees, each with branching factor $B=w^{1/3}$ and depth $t$.
	In the proof, we assume for simplicity that $w$ is an even perfect cube, and $n$ is a multiple of $B^tw$.
	General $n$ and $w$ can be handled via similar approaches.
	Each range tree takes care of a subarray of length $B^t w=w^{t/3+1}$, using $B^t w+1$ bits of memory, i.e., one bit of redundancy.
	We also store the number of ones in the first $i$ subarrays for every $i$ using $\lceil\log (n+1)\rceil$ bits.
	Hence, for every $B^t w$ bits of input, there will be $O(\log n)$ bits of redundancy, which will give us the claimed space bound.

	More specifically, consider a subarray of $B^t w$ bits, and a range tree built from it with branching factor $B$ and depth $t$.
	Each leaf corresponds to a subarray of length $w$.
	Every node at level $t-i$ is the root of a subtree of size $B^i w$ and depth $i$ (assuming root has level $0$).
	We are going to inductively construct data structures for all subtrees.
	The inductive hypothesis is stated below in the claim.

	\begin{claim}\label{cl_induct}
		For each subtree of size $B^i w$ for $i\geq 0$, one can construct a data structure with spillover, using space $[K_i]\times \{0,1\}^{m_i}$, supporting rank queries in $O(i)$ time, where 
		\[
			m_i=B^i w-w
		\]
		and
		\[
			K_i=2^w+(34B)^i\cdot n2^{w/2}.
		\]
		Moreover, the sum of all $B^iw$ bits in the subtree can be computed by only reading the spillover $k_i\in[K_i]$.
	\end{claim}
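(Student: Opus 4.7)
The plan is to prove Claim~\ref{cl_induct} by induction on $i$. For the base case $i=0$, a subtree is a single $w$-bit block and everything must go in the spillover since $m_0 = 0$. I would use the sort-by-popcount encoding described in Section~\ref{sec_making_RAM}: enumerate all $2^w$ binary $w$-bit strings in order of Hamming weight, pad each weight class so that its range is aligned to a multiple of $2^{w/2}$, and take the spillover to be the padded index. Then the sum is decodable from the top $w/2$ bits, and $K_0 \le 2^w + (w+1)\cdot 2^{w/2} \le 2^w + n\cdot 2^{w/2}$, matching the claim (with $i=0$).

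For the inductive step, recursively build data structures for the $B$ children of the root, yielding spillovers $k_1,\ldots,k_B \in [K_{i-1}]$ (each determining its child's sum $s_j$) and memory arrays $M_1,\ldots,M_B$ of $m_{i-1}$ bits each. Concatenate the $M_j$'s to account for $B\,m_{i-1} = B^i w - Bw$ of the new memory bits. The remaining $(B-1)w$ memory bits plus the outgoing spillover must encode $(k_1,\ldots,k_B)$ together with $T_B$. I would implement \textbf{aggregate\_sums} on the spillovers by following Section~\ref{sec_ten_dec}: apply Lemma~\ref{lem_binom} to the joint distribution of $(T_1,\ldots,T_B)$ (which closely matches the distribution induced by a random spillover by Fact~\ref{fact_size_bound}), obtaining $r \le 2^{O(w)}$ nonnegative rank-$1$ components with total error $\epsilon \le 1/n^2$. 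This yields a partition of $[K_{i-1}]^B$ into sets $\cK_E,\cK_1,\ldots,\cK_r$. I would then encode, in order: (i) $\eta$ (the index of the part containing $(k_1,\ldots,k_B)$) conditioned on $T_B$; (ii) each $T_j$, $j<B$, conditioned on $\eta$; (iii) each $k_j$ conditioned on $\eta,T_{j-1},T_j$. Each conditional encoding is realized by Lemma~\ref{lem_change_base}, with $T_B$ absorbed into the outgoing spillover via Lemma~\ref{lem_sum_spillover}, producing the required memory layout. The query algorithm follows Figure~\ref{fig_datastr}: decode $T_B$ and $\eta$ from the spillover, call \textbf{retrieve\_prefix\_sum} to obtain $T_{j-1}, T_j, k_j$ for the child block containing $u$, and recurse. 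Each level costs $O(1)$ word probes, giving $O(i)$ total time.

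The main obstacle is tracking the spillover growth so that it matches $K_i = 2^w + (34B)^i \cdot n\, 2^{w/2}$. Each level performs $\Theta(B)$ applications of Lemma~\ref{lem_change_base} and Lemma~\ref{lem_sum_spillover}: one conditional encoding for $\eta$, $B-1$ for the prefix sums, and $B$ for the child spillovers. Each such step multiplies the spillover range by a bounded constant (arising from ceilings in change-of-base and from the alignment padding in the sum-encoding trick) and adds a small additive term proportional to $n$. I would need to verify that (a) the partition $\{\cK_j\}$ can be realized with integer support sizes close enough to the ideal proportions $p_E, p_j$ that the entropy-style bookkeeping of Section~\ref{sec_making_RAM} translates into a worst-case support-size bound; (b) the aggregate multiplicative factor per level is at most $34B$; and (c) the additive terms telescope into the claimed $n\cdot 2^{w/2}$ coefficient, so that the recursion $K_i \le 2^w + 34B\cdot K_{i-1}$-style inequality closes. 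This careful constant-tracking, rather than the conceptual structure of the construction, is the principal technical burden.
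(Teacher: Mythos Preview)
Your high-level plan (induction on $i$, combining the $B$ child spillovers via the tensor-decomposition idea and Lemma~\ref{lem_change_base}, and pushing $T_B$ into the spillover with Lemma~\ref{lem_sum_spillover}) matches the paper. The paper packages the entire combination step as the black-box Lemma~\ref{lem_induct}, and proves Claim~\ref{cl_induct} in two lines from it. Your base case is unnecessarily elaborate: the paper simply puts the raw $w$-bit leaf into the spillover, giving $K_0=2^w\le 2^w+n2^{w/2}$; no popcount-sorted encoding is needed.

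There is, however, a genuine gap in your inductive step: you do not say what happens when the input tuple $(k_1,\ldots,k_B)$ lands in the error part $\cK_E$ of the decomposition. For such inputs your steps (ii)--(iii) break down, because conditioned on $\eta=E$ the prefix sums $T_j$ are \emph{not} close to independent and their conditional supports can be the full range; encoding them separately would blow up the spillover far beyond $34B\sigma$. The paper's proof of Lemma~\ref{lem_induct} (see Lemma~\ref{lem_induct_large}) handles this by a three-way case split: ``good'' sequences use the decomposition; sequences with a single ``bad'' index $i^*$ (where $k_{i^*}$ falls in a set of size $\le 2\sigma$) are split into two good subsequences with $k_{i^*}$ encoded explicitly; sequences with two bad indices use the separate ``not-so-efficient'' encoding of Lemma~\ref{lem_induct_lowprob}, which is affordable only because such sequences are rare enough that the resulting spillover contribution is $o(\sigma)$. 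Relatedly, for the worst-case encoding you cannot use Lemma~\ref{lem_binom} directly: you need the integer-valued refinement Lemma~\ref{lem_part} (terms of the form $2^{e_i}\bOne_{X_i}(x)\bOne_{Y_i}(y)$) so that the conditional support sizes fed into Lemma~\ref{lem_change_base} are honest integers, and you also need the separate easy construction of Lemma~\ref{lem_induct_small} when $B\log(l+1)\le w/2$. Without these pieces, item~(b) of your last paragraph (the $34B$ per-level factor) cannot be established.
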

	Before proving the claim, let us first show that it implies the theorem.
	When $i=t$, 
	\[
		K_t=2^w+(34B)^tn2^{w/2}\leq 2^w+n^{2+o(1)}2^{w/2}\leq 2^{w+1}.
	\]
	Hence, $K_t$ takes $w+1$ bits to store.
	The space usage will be $B^t w+1$ bits.

	Finally, we divide the input into $n'=n/(B^t w)$ subarrays $A_1,A_2,\ldots,A_{n'}$ of length $B^t w$.
	For each subarray, we construct a data structure using Claim~\ref{cl_induct}. 
	We also store the total number of ones in ${A_1\cup A_2\cup\cdots\cup A_i}$ for all $i\in\{1,\ldots,n'-1\}$, each taking $\lceil\log (n+1)\rceil$ bits.
	Note that the sum of all $n'$ subarrays is not necessary to store. 
	When $n'=1$, the redundacy is one bit, otherwise, it is at most $O(n'\log n)$.
	Hence, the total space usage is at most
	\[
		\frac{n}{B^tw}\cdot (B^tw+1)+\left(\frac{n}{B^tw}-1\right)\lceil\log (n+1)\rceil\leq n+\lceil\frac{n}{w^{\Omega(t)}}\rceil
	\]
	as claimed.

	To answer a rank query \rank{u}, we first compute the subarray $A_i$ that $u$ is in.
	By retrieving the number of ones in first $i-1$ subarrays and querying the rank of $u$ within $A_i$, we obtain the answer in $O(t)$ time.
	Hence, it remains to prove the claim (by induction).
	\paragraph{Base case.} The statement is trivial when $i=0$: store the entire subtree, which has only $w$ bits, in the spillover.
	\paragraph{Induction step.} First construct a data structure for each child of the root, which corresponds to a subtree of size $l=B^{i-1}w$, using space $[K_{i-1}]\times \{0,1\}^{m_{i-1}}$ each.
	The key technical part of the induction step is the following lemma that combines $B$ spillovers into one data structure, and allows one to decode each spillover and the sum of first $i$ subtrees in constant time.
	\begin{lemma}\label{lem_induct}
		Given $B$ such spillovers $k_1,k_2,\ldots,k_B\in[2^w+\sigma]$ for $2^{w/2}n\leq \sigma\leq 2^{w}/n$, let $\SUM: [2^w+\sigma]\rightarrow [0, l]$ be the function that decodes the sum from a spillover.
		For $i=0,\ldots,B$, denote by $T_i$ the sum of first $i$ subtrees, i.e., $T_i:=\sum_{j\leq i} \SUM(k_j)$.
		One can construct a data structure using space $[K]\times \{0,1\}^m$ for $K=[2^w+34B\sigma]$ and $m=(B-1)w$, such that for $i=1,\ldots,B$, decoding each $k_i$ and $T_i$ takes constant time, and the spillover determines the sum of the entire subtree $T_B$.
	\end{lemma}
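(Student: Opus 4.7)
The plan is to carry out the tensor-decomposition strategy of Section~\ref{sec_ten_dec}, applied to $B$-tuples of spillovers $(k_1,\ldots,k_B) \in [2^w+\sigma]^B$ rather than to raw inputs. Because the induction hypothesis furnishes near-optimal spillover encodings for each subtree, the fraction of $k \in [2^w+\sigma]$ with $\SUM(k)=x$ lies within $O(\sigma/2^w) \leq O(1/n)$ of $\binom{l}{x}2^{-l}$ (a quantitative form of Fact~\ref{fact_size_bound}). Thus the joint distribution of $(T_1,\ldots,T_B)$ induced by a uniformly random spillover tuple agrees, up to negligible $\ell_1$ error, with the binomial tensor $\bT$ of Equation~\eqref{eqn_tensor_T}. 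Applying Lemma~\ref{lem_binom} with $\epsilon = 1/n^2$ yields a nonnegative decomposition $\bT = p_E\bE + \sum_{j=1}^r p_j\bT_j$ with $p_E \leq B/n^2$ and $r \leq (\log 1/\epsilon)^{O(B)} = 2^{O(w^{1/3}\log\log n)} \ll 2^{w/2}$ since $w \geq 7\log n$.

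Next, I would lift this decomposition to a combinatorial partition $\cK_E,\cK_1,\ldots,\cK_r$ of $[2^w+\sigma]^B$ in which each $\cK_j$ is (approximately) a product set whose $i$-th factor allocates the prescribed number of spillovers to each sum value, matching the $i$-th rank-$1$ marginal of $\bT_j$. Setting $\eta \in \{E,1,\ldots,r\}$ to be the index of the part containing $(k_1,\ldots,k_B)$, the construction guarantees that conditional on $\eta$ the prefix sums $T_1,\ldots,T_B$ are near-independent and determine each $s_i = T_i - T_{i-1}$, while conditional on $(\eta, T_{i-1}, T_i)$ the spillover $k_i$ is near-uniform on $\{k : \SUM(k) = s_i\}$. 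The encoding then proceeds in three steps: (i) pack $(T_B,\eta)$ into the output spillover via Lemma~\ref{lem_sum_spillover}, so that $T_B$ can be recovered from the spillover alone; (ii) for $i = 1,\ldots,B-1$, store $T_i$ conditioned on $\eta$ via Lemma~\ref{lem_change_base}; (iii) for $i = 1,\ldots,B$, store $k_i$ conditioned on $(\eta, T_{i-1}, T_i)$ via Lemma~\ref{lem_change_base}. The small residual spillovers generated at each change-of-base step are folded back into the main spillover by iterated sum-encoding. A query for $k_i$ or $T_i$ first reads the main spillover (recovering $\eta$ and $T_B$), then $O(1)$ further cells to extract the relevant conditional chunks, so decoding runs in $O(1)$ time.

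The main obstacle is the accounting. The information-theoretic floor $K \cdot 2^m \geq (2^w+\sigma)^B$ already forces $K \geq 2^w + 2B\sigma + O((B\sigma)^2/2^w)$, so the remaining $32B\sigma$ slack inside the claimed $K = 2^w+34B\sigma$ must absorb: (a) the $B\epsilon \leq B/n^2$ total-variation loss from the tensor approximation, handled by assigning $\cK_E$ an inefficient ``default'' encoding whose cost is bounded by $p_E \cdot (2^w+\sigma)^B$; (b) the $O(B)$ change-of-base and sum-spillover roundings, each contributing at most $n^{-\Omega(1)}$ fractional-bit redundancy; and (c) the cost of jointly packing $\eta$ and $T_B$ into the main spillover, which expands its support by a multiplicative factor of at most $(r+1)(l+1) \leq O(2^{w/2}\cdot n) \leq O(\sigma)$, permissible because $\sigma \geq 2^{w/2}n$. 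Verifying that the partition $\{\cK_j\}$ can be selected so that every conditional support size fits within $2^w$ (so each conditional chunk occupies at most one word and the per-variable memory budget is respected) is the technical heart of the argument, and mirrors the careful spillover bookkeeping of~\cite{Pat08}.
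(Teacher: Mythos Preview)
Your plan tracks the overview of Section~\ref{sec_overview}, but two steps do not go through as stated.

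First, Lemma~\ref{lem_binom} requires $\epsilon>2^{-C\sqrt{l}/2+8}$, which with any $\epsilon\leq 1/n$ forces $l\gg\log^2 n$. This fails near the leaves, where $l$ can be as small as $w=\Theta(\log n)$. The paper therefore splits on the size of $l$: when $B\log(l+1)\leq w/2$ (Lemma~\ref{lem_induct_small}) the tuple $(s_1,\ldots,s_B)$ fits into half a word, so it is stored verbatim in the spillover and each $k_i$ is encoded within $\SUM^{-1}(s_i)$ via Lemma~\ref{lem_change_base}; only when $B\log(l+1)>w/2$, hence $l>2^{w^{2/3}/2}$, is the tensor approximation invoked. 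Your proposal has no analogue of the small-$l$ branch. Relatedly, with $\epsilon=1/n^2$ your error contribution to $K$ is on the order of $p_E\cdot 2^w\approx B2^w/n^2$, which for $w\geq 7\log n$ and $\sigma$ at its lower bound $n2^{w/2}$ already exceeds the entire budget $34B\sigma$; the paper instead takes $\epsilon=\sigma\cdot 2^{-w-2}$.

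Second, the sentence ``lift this decomposition to a partition $\cK_E,\cK_1,\ldots,\cK_r$ of $[2^w+\sigma]^B$ in which each $\cK_j$ is (approximately) a product set'' hides the main difficulty: a family of nontrivial product sets cannot partition a product space, yet exact product structure is what your conditional-independence step needs. The paper does \emph{not} partition $[2^w+\sigma]^B$ globally. It first sharpens Lemma~\ref{lem_binom} to the indicator form of Lemma~\ref{lem_part}, expressing each factor as $E_i+\sum_{j}2^{e_{i,j}}\bOne_{X_{i,j}}(T_{i-1})\bOne_{Y_{i,j}}(T_i)$ with only $r_i=w^{O(1)}$ terms. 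This yields, for each $i$ and each value of $T_{i-1}$, a partition $\{\cK^{(T_{i-1})}_{i,j}\}_j$ of the single coordinate $[2^w+\sigma]$ (Claim~\ref{cl_part}) in which $|\cK^{(T_{i-1})}_{i,j}\cap\SUM^{-1}(T_i-T_{i-1})|$ is an \emph{exact} power of two, so encoding $k_i$ given $(j_i,T_{i-1},T_i)$ costs an integer number of bits with zero rounding. The variable $\eta$ becomes the tuple $\bj=(j_1,\ldots,j_B)$, and the spillover is the union over $\bj$ of the per-$\bj$ data structures (Claim~\ref{cl_good_index}). Inputs hitting a residual set $\cK^{(T_{i-1})}_{i,0}$ (of size at most $2\sigma$) are handled by a three-case analysis: all coordinates good; one bad coordinate splitting the sequence into two good pieces; or two bad coordinates, where the crude $B(w+\log w)$-bit encoding of Lemma~\ref{lem_induct_lowprob} is affordable because two factors of $2\sigma$ beat one factor of $2^w$. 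None of this machinery is visible in your proposal, and the combinatorial bookkeeping it supplies is exactly what makes the constant $34$ achievable.
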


	Its proof is deferred to the next subsection.
	By induction hypothesis, we have
	\[
		n2^{w/2}\leq K_{i-1}-2^w\leq (34B)^tn2^{w/2}\leq n^{2+o(1)}2^{w/2}<2^{w}/n,
	\]
	i.e., $n2^{w/2}<\sigma<2^w/n$.
	Lemma~\ref{lem_induct} lets us combine the $B$ spillovers, and obtain a $[K_i]\times \{0,1\}^{(B-1)w}$-space data structure for 
	\[
		K_i=2^w+(34B)(K_{i-1}-2^w)=2^w+(34B)^i\cdot n2^{w/2}.
	\]
	Hence, in total the data structure uses $B m_{i-1}+(B-1)w=B^i w-w$ bits and a spillover of size $2^w+(34B)^i\cdot n2^{w/2}$, and the spillover determines the sum.

	To answer a rank query \rank{x}, we first compute $i$, the index of the subtree that $x$ is in.
	Then we retrieve $T_{i-1}$ and $k_i$ in constant time by Lemma~\ref{lem_induct}. 
	Given the spillover $k_i$, we may recursively query the rank of $x$ inside the $i$-th subtree.
	The query output can be computed by adding its rank inside the $i$-th subtree to $T_{i-1}$.
	The total query time is proportional to the depth of the tree, which is $O(i)$.
	This proves the theorem.
\end{proof}

\subsection{Combining the spillovers}

The goal of this subsection is to prove Lemma~\ref{lem_induct}.
We first observe that if there is a data structure (with spillover) that allows one to decode the sum of the entire subarray (or subtree) in constant time, then one may assume without loss of generality that the sum is encoded in the spillover, which we state in the following lemma.

\begin{lemma}\label{lem_sum_spillover}
Given input data $Z$, suppose there is a data structure $D$ using space $[K]\times \{0,1\}^m$, which allows one to answer each query $f_i(Z)$ for $i\geq 0$ in time $t_q$, assuming the word-size is $w$.
Then $D$ can be turned into another data structure using space ${[K+r]\times \{0,1\}^m}$, which allows one to answer each query in time $2t_q$, moreover, $f_0(Z)$ can be answered by reading only the spillover, where $r$ is the number of different values that $f_0(Z)$ can take.
\end{lemma}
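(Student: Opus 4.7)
The plan is to re-encode $D$'s representations so that the new spillover alone determines $f_0(Z)$, paying only an additive $r$ in the spillover count. Let $v_1,\ldots,v_r$ be the distinct values of $f_0$, let $\mathcal{Z}_i=\{Z:f_0(Z)=v_i\}$, and set $N_i=|\mathcal{Z}_i|$, $b_i:=\lceil N_i/2^m\rceil$, $s_i:=\sum_{j<i}b_j$. I allocate the consecutive block $[s_i,s_i+b_i)$ of new spillovers to class $i$. Using $|\mathcal{Z}|\le K\cdot 2^m$, the total number of new spillovers is at most $\sum_i b_i\le\sum_i(N_i/2^m+1)\le K+r$, giving the required space $[K+r]\times\{0,1\}^m$.

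Next I define an injection $\phi':\mathcal{Z}\hookrightarrow[K+r]\times\{0,1\}^m$ that sends each $Z\in\mathcal{Z}_i$ to a distinct pair $(k',M')$ with $k'\in[s_i,s_i+b_i)$; this is feasible because $N_i\le b_i\cdot 2^m$. By construction every new spillover $k'$ falls in the block of a unique class, so a precomputed table recovers $i$ (hence $v_i=f_0(Z)$) from $k'$ alone. In the cell-probe model, where computation and spillover access are free, this answers $f_0$ with zero cell probes.

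For any query $f_j$ with $j\ge 1$, I simulate $D$'s original query algorithm on the recovered pair $(k,M)=\phi(Z)$. After reading the spillover to identify the class, the inverse $(k',M')\mapsto(k,M)$ is a fixed precomputed function, and each original cell probe---which requests a single $w$-bit cell of $M$---can be translated into reads of $(k',M')$. If the within-block assignment is chosen so that each $w$-bit cell of $(k,M)$ depends on only an $O(1)$-sized window of $(k',M')$, then each of the $t_q$ original probes is emulated by at most two new probes, yielding the $2t_q$ bound.

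The main obstacle is designing the within-block injection to have this locality property, since an arbitrary bijection could make a cell of $(k,M)$ depend on the entirety of $M'$. The plan is to enumerate $\mathcal{Z}_i$ in the order of the original encoding $\phi$ and assign them to consecutive positions within the block. The resulting map is a monotone embedding of $\phi(\mathcal{Z}_i)$ into the contiguous interval $[s_i\cdot 2^m,(s_i+b_i)\cdot 2^m)$, which, up to boundary corrections, behaves as a per-class shift on the encoding. Under this shift-like bijection, a $w$-bit window of $(k,M)$ is recovered from the corresponding window of $(k',M')$ together with at most one adjacent window (to absorb the shift's carry at word boundaries), giving the factor-of-two overhead in the probe count.
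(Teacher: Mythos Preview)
Your counting argument for the spillover bound $K+r$ is correct, and partitioning the encoding space by the value of $f_0$ is a natural first move. The gap is in the locality claim for the per-class bijection.

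You assert that the monotone embedding of $\phi(\mathcal{Z}_i)$ into a contiguous interval ``behaves as a per-class shift on the encoding'' up to boundary corrections. This is false in general: the set $\phi(\mathcal{Z}_i)\subseteq [K\cdot 2^m]$ need not be an interval, nor anywhere close to one. For an arbitrary subset $S\subseteq[N]$, the map $x\mapsto |\{y\in S:y<x\}|$ (which is exactly what your monotone embedding computes) can depend on all bits of $x$ and on the global structure of $S$; recovering a single $w$-bit window of the original encoding from the image may require reading essentially all of $M'$. Nothing in the hypotheses constrains $\phi(\mathcal{Z}_i)$ to have the kind of arithmetic regularity that would make rank-within-set locally computable, so the $2t_q$ simulation does not follow from this construction.

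The paper's proof sidesteps this obstacle by a different mechanism: instead of re-encoding all of $\mathcal{Z}_i$, it exploits the query algorithm for $f_0$ itself. That algorithm reads at most $t_q$ words of $M$ (adaptively), so $f_0(Z)$ is determined by the pair $(k,\text{those }t_qw\text{ bits})\in[K\cdot 2^{t_qw}]$. One moves precisely those $t_qw$ bits into an enlarged spillover $k'$, sorts the values of $k'$ by $f_0$-value, and rounds each class boundary up to a multiple of $2^{t_qw}$ (this is where the additive $r$ is paid). The remaining $m-t_qw$ bits of $M$ are left in their original relative order in $s'$, and the displaced $t_qw$ bits are put back at the front of the new memory. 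To answer any $f_j$, one first reads the new spillover plus the first $t_q$ words (reconstructing $k'$, hence $k$ and the identity and content of the displaced words, hence the word-permutation between $s$ and $s'$), and then simulates $D$'s algorithm for $t_q$ more probes. The $2t_q$ bound thus comes from ``$t_q$ to learn the permutation, $t_q$ to run the original query,'' with no locality assumption on an arbitrary bijection.
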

\begin{proof}
To construct a data structure that stores $f_0(Z)$ in the spillover, we first simulate $D$ on $Z$, which generates $m$ bits of memory $s\in\{0,1\}^m$ and a spillover $k\in [K]$.
Then we simulate the query algorithm for query $f_0(Z)$, which reads $t_q$ words (or $t_qw$ bits) of $s$.
We move those $t_qw$ bits to the spillover, by increasing the spillover size to $K2^{t_qw}$ and removing them from $s$.
The relative order of all other bits in $s$ are unchanged.
This generates a new spillover $k'\in [K2^{t_qw}]$ and a memory $s'$ of $m-t_qw$ bits.
Note that the query algorithm can be adaptive, hence the bits removed from $s$ could vary for different inputs.

Now $k'$ does encode $f_0(Z)$, but its size is much larger than claimed.
To decrease the spillover size back to approximately $K$, observe that we are free to choose any bijection between its domain $[K2^{t_qw}]$ and the pair of original spillover $k$ and the $t_qw$ bits.
Hence, we will pick a representation of $k'$, such that all its values that encode the same value of $f_0(Z)$ are consecutive in $[K2^{t_qw}]$.
For example, we may use a representation such that $f_0(Z)$ is monotone.
Intuitively, if each value of $f_0(Z)$ corresponds to an interval in $[K2^{t_qw}]$, reading the ``top bits'' of $k'$ should likely tell us $f_0(Z)$.
However, if $k'$ lies close to the boundary between two consecutive $f_0(Z)$ values, reading the entire $k'$ may still be required to distinguish between the two.

This issue can be resolved by rounding up the boundaries to integer multiples of $2^{t_qw}$.
That is, we adjust the representation, so that each interval corresponding to a value of $f_0(Z)$ always starts at a multiple of $2^{t_qw}$.
Thus, $f_0(Z)$ can be computed without reading the lowest $t_qw$ bits of $k'$.
Since $f_0(Z)$ can take $r$ different values, this could only increase the spillover set size to at most $(K+r)2^{t_qw}$, which can be viewed as $[K+r]\times \{0,1\}^{t_qw}$
By moving these $t_qw$ bits back to (the beginning of) the memory, we obtain a data structure using space $[K+r]\times \{0,1\}^{m}$, such that $f_0(Z)$ can be answered by reading only the spillover.

To answer a generic query $f_i(Z)$ for $i\geq 1$, one first reads the spillover and first $t_q$ words of the memory.
This determines $k'$, and hence the initial spillover $k$ generated from $D$ as well as all words read by the query algorithm of $D$ when $f_0(Z)$ is queried, which are the words removed from $s$.
In particular, this determines the mapping between words in $s'$ and $s$.
Thus, $f_i(Z)$ can be computed by simulating the query algorithm of $D$.
The total query time is $2t_q$.
\end{proof}
In particular, when $Z$ is a subarray of size $Bl$, and $f_0(Z)$ is the number of ones in it, we have $r=Bl+1\leq n+1$.
Note that this lemma is applied once at each level of the recursion, thus the total query time would at most increase by a factor of two.
To prove Lemma~\ref{lem_induct}, we will use different constructions based on the value of $l$, the length of each subarray.
The two cases are stated below in Lemma~\ref{lem_induct_small} and Lemma~\ref{lem_induct_large} respectively.
Since $n+1\leq B\sigma$, Lemma~\ref{lem_induct} is an immediate corollary of Lemma~\ref{lem_sum_spillover}, \ref{lem_induct_small} and \ref{lem_induct_large}.

\begin{lemma}\label{lem_induct_small}
	If $B\log (l+1)\leq w/2$, given $k_1,\ldots,k_B\in[2^w+\sigma]$ for $n2^{w/2}\leq\sigma\leq 2^w/n$, one can construct a data structure using space $[K]\times \{0,1\}^{m}$ for $m=(B-1)w$ and $K\leq 2^w+2B\sigma$, such that each $k_i$ and $T_i$ can be decoded in constant time.
\end{lemma}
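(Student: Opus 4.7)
The plan is to lay out all $(2^w+\sigma)^B$ tuples $(k_1,\ldots,k_B)$ in a single address space of size $K\cdot 2^m$, grouping them by their prefix sum tuple $\vec T=(T_1,\ldots,T_B)$ and rounding each group up to the next multiple of $2^m=2^{(B-1)w}$, so that the quotient of any address by $2^m$ (i.e.\ the spillover) alone identifies $\vec T$. Concretely, let $N_x=|\{k\in[2^w+\sigma]:\SUM(k)=x\}|$, so $\sum_x N_x=2^w+\sigma$, and let $\mathcal{S}(\vec T)=\{(k_1,\ldots,k_B):\SUM(k_i)=T_i-T_{i-1}\}$ have cardinality $\prod_i N_{T_i-T_{i-1}}$. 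Going through $\vec T$ in a fixed order, I allocate to each $\vec T$ an aligned sub-interval of length $\lceil|\mathcal{S}(\vec T)|/2^m\rceil\cdot 2^m$ and enumerate $\mathcal{S}(\vec T)$ inside by a mixed-radix layout with bases $(N_{T_1-T_0},\ldots,N_{T_B-T_{B-1}})$. An input is then stored at its assigned address, split into (spillover, memory) via quotient and remainder by $2^m$.

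Taking $K=\sum_{\vec T}\lceil|\mathcal{S}(\vec T)|/2^m\rceil$, the space bound is
\[
K\le \tfrac{(2^w+\sigma)^B}{2^{(B-1)w}}+(l+1)^B=(2^w+\sigma)(1+\sigma/2^w)^{B-1}+(l+1)^B.
\]
Since $(B-1)\sigma/2^w\le (B-1)/n\le 1/2$ for $n\ge 2B$, one has $(1+\sigma/2^w)^{B-1}\le 1+2(B-1)\sigma/2^w$, and the first summand is at most $2^w+(2B-1)\sigma+O(B^2\sigma^2/2^w)$. The hypothesis $B\log(l+1)\le w/2$ yields $(l+1)^B\le 2^{w/2}$, which together with $\sigma\ge n\cdot 2^{w/2}$ gives $(l+1)^B\le \sigma$; meanwhile $\sigma\le 2^w/n$ gives $O(B^2\sigma^2/2^w)\le \sigma$ for $n$ large enough. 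Collecting these slack terms into $\sigma$ leaves $K\le 2^w+2B\sigma$.

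For decoding, the spillover alone locates $\vec T$ (and hence every $T_i$) in $O(1)$ time via the free $w$-bit word operations afforded by the cell-probe model. To recover each $k_i$, one extracts its mixed-radix digit $j_i$ (the rank of $k_i$ in $I_{T_i-T_{i-1}}$) from the in-sub-interval position $(\text{spillover}-\text{start})\cdot 2^m + (\text{memory})$; the ratio $\prod_{i'>i}N_{T_{i'}-T_{i'-1}}/2^{(B-i)w}\in[1,2)$ coming from $(1+\sigma/2^w)^{B-i}<2$ bounds the alignment loss, so this digit is pinned by the spillover together with a constant number of $w$-bit memory windows, each of which the hypothesis delivers in one probe.

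The step I expect to be most delicate is this last one: because the bases $N_x$ are generally not powers of $2$, the mixed-radix digits are not contiguous bit fields, and the within-sub-interval enumeration has to be designed carefully, together with the alignment analysis leaning on the strict bound $(1+\sigma/2^w)^{B}<2$, so that a constant number of $w$-bit windows always suffices to fix each $j_i$.
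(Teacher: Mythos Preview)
Your grouping by the sum tuple and the space calculation are essentially the paper's argument: the paper also partitions inputs by $\mathbf{s}=(s_1,\ldots,s_B)$, encodes $(k_1,\ldots,k_B)$ separately inside each class, glues the spillover sets, and bounds $K$ by the same arithmetic you wrote down.

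The genuine gap is in decoding. First, the interval claim $\prod_{i'>i}N_{T_{i'}-T_{i'-1}}/2^{(B-i)w}\in[1,2)$ is false: the inequality $(1+\sigma/2^w)^{B-i}<2$ only yields the upper bound, and there is no matching lower bound $N_x\ge 2^w$. By Fact~\ref{fact_size_bound} one only has $N_x\ge\binom{l}{x}2^{w-l}$, and for extreme $x$ (say $x=0$) this gives $N_0\ge 2^{w-l}$, so $N_x$ can be as small as $1$; the product over $i'>i$ can therefore be arbitrarily far below $2^{(B-i)w}$. Second, and more fundamentally, even if every $N_x$ did lie in $[2^w,2^{w+1})$, a plain mixed-radix layout would still not give constant-probe digit extraction: to compute $j_1=\lfloor A/P_1\rfloor$ with $P_1=\prod_{i'>1}N_{s_{i'}}$ not a power of two, one must resolve whether $A$ lies just above or just below the nearest multiple of $P_1$, and that decision can depend on the least significant bit of $A$. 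Carries are not localized, so in the worst case all $B-1$ memory words must be read.

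The paper fills exactly this hole by invoking Lemma~\ref{lem_change_base} (the construction of~\cite{DPT10}): each $k_i\in[N_{s_i}]$ is split into a pair $(u_i,v_i)$, and the cross-pairs $(v_{i-1},u_i)$ are re-merged into single $O(w)$-bit words $z_i$, so that recovering $k_i$ needs only the two adjacent words $z_i$ and $z_{i+1}$. That is precisely the ``careful design'' your final paragraph anticipates but does not supply; without it the constant-time decoding claim does not go through.
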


\begin{lemma}\label{lem_induct_large}
	If $B\log (l+1)> w/2$, given $k_1,\ldots,k_B\in[2^w+\sigma]$ for $n2^{w/2}\leq\sigma\leq 2^w/n$, one can construct a data structure using space $[K]\times \{0,1\}^{m}$ for $m=(B-1)w$ and $K\leq 2^w+33B\sigma$, such that each $k_i$ and $T_i$ can be decoded in constant time.
\end{lemma}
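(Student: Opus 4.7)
}
The plan is to implement the tensor-decomposition blueprint from Section~\ref{sec_ten_dec} directly on the $B$-tuple of spillovers $(k_1,\ldots,k_B)$, rather than on the raw input array. First I would invoke Fact~\ref{fact_size_bound} to show that, inside the spillover set $[2^w+\sigma]$, the number of spillovers decoding to sum $x \in \{0,\ldots,l\}$ is $\binom{l}{x}2^{-l}(2^w+\sigma)$ up to an additive rounding error of at most $1$. This lets me partition $[2^w+\sigma]$ into intervals $I_0,\ldots,I_l$ aligned to this distribution; consequently, under a uniformly random $B$-tuple of spillovers, the induced distribution of the sum profile $(\SUM(k_1),\ldots,\SUM(k_B))$ is within $O(B/\sigma)$ total variation of the binomial-product tensor $\bT_{x_1,\ldots,x_B}=2^{-lB}\prod_i \binom{l}{x_i-x_{i-1}}$ of Equation~\eqref{eqn_tensor_T}.

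Next I would apply Lemma~\ref{lem_binom} with error $\epsilon = 1/n^2$ to each binomial factor and take the product, producing, as sketched in Section~\ref{sec_ten_dec}, a nonnegative decomposition $\bT = p_E \bE + \sum_{j=1}^{r} p_j \bT_j$ where each $\bT_j$ is a product distribution on $[0,l]^B$, $p_E \leq B\epsilon$, and $r \leq (\log 1/\epsilon)^{O(B)} = 2^{O(B\log\log n)} = n^{o(1)}$. Using this convex combination, I would partition the full product space $[2^w+\sigma]^B$ into $\cK_E,\cK_1,\ldots,\cK_r$ respecting the interval structure of the $I_x$'s, so that (i) $|\cK_j|/(2^w+\sigma)^B \approx p_j$ and $|\cK_E|/(2^w+\sigma)^B \approx p_E$, and (ii) conditioned on a tuple lying in $\cK_j$, the spillovers $k_1,\ldots,k_B$ are mutually independent, with $k_i$ uniform within its sum-interval $I_{x_i}$ where the $x_i$'s are distributed according to the $i$-th marginal of $\bT_j$. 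The random variable $\eta \in [r+1]$ is then defined as the index of the part containing the input tuple.

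The encoding then proceeds as in Figure~\ref{fig_aggr}, adapted to include the spillovers: store $\eta$ conditioned on $T_B$, then each prefix sum $T_i$ ($i<B$) conditioned on $\eta$, and finally each $k_i$ conditioned on $(\eta,T_{i-1},T_i)$. Under partition (ii), the support size of $k_i$ conditioned on $(\eta,T_{i-1},T_i)$ is exactly $|I_{T_i-T_{i-1}}|$, so the total ``information-theoretic'' cost matches $\log\prod_i(2^w+\sigma) = B\log(2^w+\sigma)$ up to the $\log r = o(w)$ bits spent on $\eta$ and the $B\epsilon$ probability of landing in $\cK_E$ (which I would treat as a separate branch handled by a crude direct encoding, paying only a $B\epsilon \cdot (2^w+\sigma)^B$ slack). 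These conditional stores are realized by repeated invocation of Lemma~\ref{lem_change_base}, each introducing at most $n^{-2}$ redundancy, for a total of $O(B\cdot n^{-2})$ extra bits over the $O(B)$ invocations. Finally Lemma~\ref{lem_sum_spillover} is applied once to push $T_B$ (which takes $l+1\leq n+1$ values) into the spillover, as required by the inductive hypothesis.

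The hardest step will be the accounting: verifying that the final spillover set fits inside $[2^w+33B\sigma]$. The analysis reduces to showing that the multiplicative blow-up from combining $B$ spillovers of size $2^w+\sigma$, plus the $\log r$ bits for $\eta$, plus $B$ applications of Lemma~\ref{lem_change_base}, plus the $\cK_E$ slack, and finally the Lemma~\ref{lem_sum_spillover} overhead of $+n+1$, all together inflate the additive $\sigma$ term by at most a factor of $33B$. This requires that $\sigma \geq n2^{w/2}$ (so that the $n+1$ from Lemma~\ref{lem_sum_spillover} and the $B\epsilon(2^w+\sigma)^B$ from the error term are negligible compared to $\sigma$) and that $r \cdot 2^{-w} \leq O(B)$ (so that encoding $\eta$ fits within constant spillover inflation). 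Both conditions hold by the hypotheses $\sigma \leq 2^w/n$ and $B\log(l+1)>w/2$, which forces $r$ to remain bounded by $2^{O(B\log\log n)}$ while $B = w^{1/3}$. The constant $33$ then emerges from tracking the small constant multiplicative losses at each of these steps.
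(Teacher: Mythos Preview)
Your high-level plan matches the paper's: identify a short variable $\eta$ via tensor decomposition so that the prefix sums become (almost) independent, store $\eta$, then the $T_i$'s given $\eta$, then the $k_i$'s given $(\eta,T_{i-1},T_i)$. Two implementation steps, however, do not go through as written.

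\textbf{The decomposition must be integer-valued.} You invoke Lemma~\ref{lem_binom}, whose factors $\tilde Q_i,\tilde R_i$ are real numbers in $[0,1]$. The paper explicitly notes at the end of Section~\ref{sec_ten_dec} that this lemma ``is not directly used in Section~\ref{sec_upper}'' because the worst-case data structure needs extra guarantees. The actual proof uses Lemma~\ref{lem_part}, which rounds the decomposition to the form $\sum_j 2^{e_j}\bOne_{X_j}(x)\bOne_{Y_j}(y)$: indicator functions with integer power-of-two weights. This is what makes Claim~\ref{cl_part} possible, namely an \emph{exact} partition of the spillover domain where each piece $\cK^{(T_{i-1})}_{i,j}\cap\SUM^{-1}(T_i-T_{i-1})$ has size exactly $2^{e_{i,j}}$, and the support of $T_i$ given $\bj$ is the concrete finite set $X_{i+1,j_{i+1}}\cap Y_{i,j_i}$. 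With real-valued weights your partition of $[2^w+\sigma]^B$ into $\cK_j$'s cannot satisfy your property~(ii) exactly, and Lemma~\ref{lem_change_base}---which encodes an index inside a set of known integer cardinality, not a sample from a non-uniform law---does not apply to ``$T_i$ conditioned on $\eta$'' when that conditional has full support with varying masses. Note also that the paper's partition is \emph{per-coordinate} and adaptive in $T_{i-1}$, with $\eta$ being the tuple $\bj=(j_1,\dots,j_B)$; this structure comes directly from the bilinear indicator form and is what drives the telescoping calculation for $K_{\textrm{good}}$.

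\textbf{The error branch needs a three-way split.} You put all bad tuples into one $\cK_E$ handled by an unspecified ``crude direct encoding.'' But any encoding that supports constant-time retrieval of each $k_i$ and $T_i$ (the only such tool available is Lemma~\ref{lem_induct_lowprob}) costs $B(w+\log w)$ bits, hence contributes spillover $2^w\cdot w^B$ once glued in---far above $33B\sigma$ since $\sigma\le 2^w/n$. The paper avoids this by a finer case analysis: if exactly one index $i^*$ lands in the small error set (size $\le 2\sigma$), the remaining spillovers split into two \emph{good} subsequences on which the efficient good-case encoding is reused, yielding spillover $32B\sigma$; only when at least two indices are bad is Lemma~\ref{lem_induct_lowprob} invoked, and then its $w^B$ blow-up is absorbed by the second factor of $\sigma 2^{-w}$, giving $(B\sigma)^2 w^B 2^{-w}\ll B\sigma$. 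Collapsing these two regimes into one loses exactly the factor needed to stay within the spillover budget.
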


Recall that each spillover $k_i\in[2^w+\sigma]$ together with $l-w$ additional bits encodes a subarray of length $l$, and $k_i$ encodes the number of ones in this subarray, which is decoded by the function $\SUM$.
Since the space usage is close to the information theoretical limit, if we sample a random $k_i$, the distribution of $\SUM(k_i)$ should be close to the distribution of the sum of the subarray, i.e., the binomial distribution $B(l, 1/2)$.
In particular, we have the following facts by counting.

\begin{fact}\label{fact_size_bound}
	For every $x\in[0, l]$, we have
	\[
		|\SUM^{-1}(x)|\geq \binom{l}{x}\cdot 2^{-l+w}.
	\]
	For any subset $X\subseteq [0, l]$, 
	\[
		|\SUM^{-1}(X)|\leq \sigma+\sum_{x\in X}\binom{l}{x}\cdot 2^{-l+w}.
	\]
\end{fact}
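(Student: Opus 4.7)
The plan is a simple double-counting argument that exploits the injectivity of the encoding guaranteed by Claim~\ref{cl_induct}. By the inductive hypothesis, the sub-data-structure for a length-$l$ subtree (with $l = B^{i-1}w$, so $m_{i-1} = l-w$) encodes each length-$l$ binary array as a pair $(k, b)\in [2^w+\sigma]\times\{0,1\}^{l-w}$. This encoding is injective, since a rank data structure must in particular permit reconstruction of the entire input via successive differences $A[j]=\texttt{rank}(j+1)-\texttt{rank}(j)$, so distinct arrays necessarily receive distinct representations. Moreover, by the definition of $\SUM$, every representation with spillover $k$ decodes to an array whose sum equals $\SUM(k)$.

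For the first inequality, I would observe that the $\binom{l}{x}$ arrays of sum $x$ all inject into $\SUM^{-1}(x)\times \{0,1\}^{l-w}$, which has cardinality $|\SUM^{-1}(x)|\cdot 2^{l-w}$; rearranging gives the claimed lower bound $|\SUM^{-1}(x)|\geq \binom{l}{x}\cdot 2^{-l+w}$. For the second inequality, I would define the nonnegative slack
\[
e_x := |\SUM^{-1}(x)| - \binom{l}{x}\cdot 2^{-l+w},
\]
which is $\geq 0$ by the first part. Summing over all $x\in[0,l]$ gives $\sum_x e_x = (2^w+\sigma) - 2^w = \sigma$, where I used $\sum_x|\SUM^{-1}(x)| = 2^w+\sigma$ (total spillovers) and $\sum_x\binom{l}{x} = 2^l$. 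Hence for any $X\subseteq[0,l]$, $\sum_{x\in X} e_x\leq \sigma$, so
\[
|\SUM^{-1}(X)| = \sum_{x\in X}\binom{l}{x}\cdot 2^{-l+w} + \sum_{x\in X} e_x \leq \sigma + \sum_{x\in X}\binom{l}{x}\cdot 2^{-l+w}.
\]

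There is no real obstacle here. The conceptual content is just to recognize that the representation space has total capacity $(2^w+\sigma)\cdot 2^{l-w} = 2^l + \sigma\cdot 2^{l-w}$, of which exactly $2^l$ is used (injectively) by the arrays, leaving precisely $\sigma\cdot 2^{l-w}$ units of slack that $\SUM$ distributes among the sum-buckets $\{\SUM^{-1}(x)\}_x$. The two inequalities are the per-bucket and subset-of-buckets manifestations of this simple accounting.
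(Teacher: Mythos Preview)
Your proof is correct and essentially identical to the paper's: the first inequality uses exactly the same injectivity counting, and your slack argument for the second inequality is just a rearrangement of the paper's complement subtraction $|\SUM^{-1}(X)| = (2^w+\sigma) - \sum_{x\notin X}|\SUM^{-1}(x)|$.
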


\begin{proof}
	The encoding supports \rank{u} operations for all $0\leq u\leq l$, and the answers to all queries recover the entire subarray.
	Hence, all $2^l$ different subarrays of length $l$ must have different encodings.
	On the other hand, for every $x\in[0, l]$, the number of subarrays with sum equal to $x$ is $\binom{l}{x}$, and each $k_i$ corresponds to only $2^{l-w}$ different encodings.
	Therefore, at least $\binom{l}{x}\cdot 2^{-l+w}$ different $k_i$ should encode arrays with the sum equal to $x$, i.e.,
	\[
		|\SUM^{-1}(x)|\geq \binom{l}{x}\cdot 2^{-l+w}.
	\]

	By subtracting the complement of $X$ from the universe, we have
	\begin{align*}
		|\SUM^{-1}(X)|&= 2^w+\sigma-\sum_{x\notin X}|\SUM^{-1}(x)|\\
		&\leq 2^w+\sigma-\sum_{x\notin X}\binom{l}{x}\cdot 2^{-l+w}\\
		&= \sigma+\sum_{x\in X}\binom{l}{x}\cdot 2^{-l+w}.
	\end{align*}
\end{proof}

One crucial subroutine used in several parts of our construction is a succinct data structure storing ``uniform and independent'' elements with nearly no redundancy from \cite{DPT10}, which we state in the following lemma.
\begin{lemma}\label{lem_change_base}
	Suppose we are given a $B$-tuple $(x_1,x_2,\ldots,x_B)\in [M_1]\times [M_2]\times\cdots\times [M_B]$, such that $B\ll 2^{w/2}$ and $M_i\leq 2^w$ for every $i\in [B]$.
	Then for every integer $m\geq \sum_{i=1}^B\log M_i-w$, there is a data structure that uses space $[K]\times \{0,1\}^m$ for ${K}=\left\lceil 2^{-m}\cdot\prod_{i=1}^B M_i\right\rceil+1$, and allows one to decode each $x_i$ in constant time.
\end{lemma}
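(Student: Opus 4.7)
The plan is to use a mixed-radix positional encoding of the tuple, split between memory and spillover in the natural way, and decode each coordinate by reading a single bounded-width bit window of the representation. Concretely, set $P_1=1$ and $P_{i+1}=P_i M_i$, let $M=\prod_{i=1}^B M_i$, and encode the tuple as the integer $y=\sum_{i=1}^B x_i P_i\in[0,M)$. Write $y=k\cdot 2^m+s$ with $s\in[0,2^m)$ and $k\in\{0,1,\ldots,\lceil M/2^m\rceil-1\}$; store the $m$ bits of $s$ in the memory portion and $k$ in a spillover of size $K=\lceil 2^{-m}\cdot\prod_i M_i\rceil+1$, where the extra slack value is never used and exists only to absorb later alignment rounding. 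The hypothesis $m\ge\sum_i\log M_i-w$ gives $M/2^m\le 2^w$, so $K$ fits in a single cell.

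To retrieve $x_i$, observe that $x_i=\lfloor(y\bmod P_iM_i)/P_i\rfloor$, a value in $[0,M_i)\subseteq[0,2^w)$. Because $M_i\le 2^w$, this value is determined by a window of at most $\lceil\log(P_iM_i)\rceil-\lfloor\log_2 P_i\rfloor+O(1)\le w+O(1)$ consecutive bits of $y$ centered around position $\lfloor\log_2 P_i\rfloor$, plus the at-most-one carry contributed by the bits of $y$ below this window. Since any $w$ consecutive bits of the $m$-bit memory are accessible in constant time, and the spillover is always available for free, the required window either lies entirely in the spillover, spans only $O(1)$ consecutive memory cells, or straddles the memory/spillover boundary; in each case it can be assembled in $O(1)$ probes. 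Once the $O(w)$-bit window is in hand, computing $x_i$ is a constant number of arithmetic operations on $O(w)$-bit integers involving the precomputed constants $P_i$, $M_i$, and $\lfloor\log_2 P_i\rfloor$, all of which fit in one word thanks to the assumption $B\ll 2^{w/2}$.

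The main obstacle I expect is a clean accounting for the carry from the bits of $y$ below the window: the division by $P_i$ rounds down, so $y\bmod P_i$ can shift $\lfloor y/P_i\rfloor\bmod M_i$ by up to one unit, and since $P_i$ is generally not a power of two, the boundary between ``window bits'' and ``low bits'' is not aligned with $y$'s binary expansion. I would resolve this by widening the window by $O(1)$ bits at the low end, which suffices because the carry is bounded by a constant, or alternatively by maintaining $y$ in a slightly redundant canonical form that removes the ambiguity at the cost of one extra value of $k$, absorbed into the $+1$ in $K$. This is essentially the construction of~\cite{DPT10}, and I would invoke their analysis for the exact bit counts and the verification that the spillover size bound is tight.
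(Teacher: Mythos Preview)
Your proposal has a genuine gap: storing the single mixed-radix integer $y=\sum_i x_iP_i$ in binary does \emph{not} permit decoding $x_i$ from an $O(w)$-bit window. The obstacle is not a bounded carry from below but unbounded dependence on bits \emph{above} the window. Already for $i=1$ we have $x_1=y\bmod M_1$, and when $M_1$ is not a power of two this depends on every bit of $y$: for any proposed window $[0,c)$ and odd $M_1>1$, the values $y=0$ and $y=2^c$ share the window $00\cdots0$ yet have $0\bmod M_1\neq 2^c\bmod M_1$. The same phenomenon recurs at every level, since $P_{i+1}=P_iM_i$ is generically not a power of two and hence $y\bmod P_{i+1}$ is a global function of $y$. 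Widening the window downward by $O(1)$ bits cannot help, and the single extra spillover value cannot disambiguate all $B$ coordinates at once. Your appeal to \cite{DPT10} is misplaced: their construction is precisely designed to avoid this problem and is \emph{not} the monolithic binary encoding of $y$.

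What the paper (following \cite{DPT10}) actually does is a split-and-chain encoding. After regrouping so that each domain size $N_i$ lies in $(2^{3w},2^{6w}]$, one chooses integers $U_i,V_i$ with $U_iV_i\ge N_i$ and $V_{i-1}U_i\le 2^{w_i}$ for some $w_i=O(w)$, splits each $y_i$ into $(u_i,v_i)\in[U_i]\times[V_i]$, and stores $z_i:=v_{i-1}U_i+u_i$ in exactly $w_i$ bits. Then $y_i$ is recovered from the two adjacent, bit-aligned blocks $z_i,z_{i+1}$ with no carry whatsoever. The per-step rounding introduces multiplicative slack $1+O(2^{-1.5w})$, whose product over $B\ll 2^{w/2}$ steps is $1+o(2^{-w})$; multiplied by $\prod_i M_i/2^m\le 2^w$ this contributes $o(1)$ to the spillover, which is what the ``$+1$'' absorbs. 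You need this chained representation, not the direct binary encoding.
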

The original theorem in~\cite{DPT10} is stated only for numbers from the same domain, i.e., $M_1=\cdots=M_B$.
However, the same idea also applies when the domains are different.
The proof of Lemma~\ref{lem_change_base} can be found in Appendix~\ref{app_change_base}.

\bigskip

We first present the construction for small $l$, which is similar to~\cite{Pat08}.

\begin{proof}[Proof of Lemma~\ref{lem_induct_small}]
	When $B\log (l+1)\leq w/2$, the sums of $B$ subarrays can all fit in one spillover.
	We use the spillover to store the sums, then encode the subarrays \emph{conditioned on} the sums.

	More specifically, let $m=(B-1)w$.
	For every $B$-tuple of sums $\bs=(s_1,\ldots,s_B)\in [0, l]^B$, by Fact~\ref{fact_size_bound}, we have
	\begin{align*}
		\sum_{i=1}^B\log \left|\SUM^{-1}(s_i)\right|-w&\leq \sum_{i=1}^B\log \left(\sigma+\binom{l}{s_i}\cdot 2^{-l+w}\right)-w \\
		&\leq \sum_{i=1}^B\log \left(2^{w-1}+2^w/\sqrt{l}\right)-w \\
		&\leq Bw-w \\
		&\leq m.
	\end{align*}
	By Lemma~\ref{lem_change_base}, for every $\bs$, there is a data structure encoding a tuple $(k_1,\ldots,k_B)$ such that $\SUM(k_i)=s_i$, using space $[K_{\bs}]\times \{0,1\}^{m}$, where
	\[
		K_{\bs}\leq 2^{-Bw+w}\cdot \prod_{i=1}^B|\SUM^{-1}(s_i)|+2.
	\]

	We then ``glue together'' these $(l+1)^B$ data structures for different tuples of sums by taking the union of the spillover sets.
	That is, let $K=\sum_{\bs} K_{\bs}$, we can view $[K]$ as the set of pairs $\{(\bs,k):\bs\in [0,l]^B, k\in [K_{\bs}]\}$ (via a fixed bijection hard-wired in the data structure).
	Given an input $(k_1,\ldots,k_B)$, we first compute the sums $\bs=(s_1,\ldots,s_B)$, and encode the input using the above data structure for $\bs$, which generates $m$ bits and a spillover $k\in [K_{\bs}]$.
	The final data structure will consist of these $m$ bits and the spillover $(\bs, k)$, encoded in $[K]$.
	The size of the spillover set is at most
	\begin{align*}
		K&=\sum_{\bs\in [0, l]^B} K_{\bs} \\
		&\leq \sum_{\bs\in [0, l]^B}(2^{-Bw+w}\cdot \prod_{i=1}^B |\SUM^{-1}(s_i)|+2) \\
		&= 2^{-Bw+w}\cdot \prod_{i=1}^B \sum_{s_i=0}^l |\SUM^{-1}(s_i)| + 2(l+1)^B \\
		&\leq 2^{-Bw+w}\cdot (2^w+\sigma)^B+2^{w/2+1} \\
		&= 2^{w}\cdot (1+\sigma 2^{-w})^B+2^{w/2+1}, \\
		\intertext{which by the bounds on $\sigma$ that $n2^{w/2}<\sigma<2^{w}/n$, is at most}
		&\leq 2^w+2B\sigma.
	\end{align*}

	Decoding $T_i$ or $k_i$ can be done in constant time by a straightforward algorithm:
	First decode the pair $(\bs, k)$, which already determines the value of $T_i(=s_1+\cdots+s_i)$, $k_i$ can then be decoded using the decoding algorithm for tuple $\bs$ from Lemma~\ref{lem_change_base}.
\end{proof}

When $l$ is large, the key step is to find a random variable $\eta$ such that all $T_i$ are uniform and independent conditioned on (most values of) $\eta$. 
This allows us to first encode $\eta$, then encode the prefix sums $\{T_i\}$ nearly optimally using Lemma~\ref{lem_change_base}.
Finally, we encode the spillovers $\{k_i\}$ conditioned on the prefix sums.

In the following lemma, we first present a ``not-so-efficient'' solution, which will be used as a subroutine in our final construction.


\begin{lemma}\label{lem_induct_lowprob}
	For any $B'\leq B$, given a sequence $(k_1,\ldots,k_{B'})$, one can construct a data structure using $B'(w+\log w)$ bits of space, such that each $k_i$ and $T_i$ can be retrieved in constant time.
\end{lemma}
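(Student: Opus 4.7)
The plan is to store the spillovers and their prefix sums in a straightforward packed layout, with $B'$ slots of $w+\lceil\log w\rceil$ bits each, so that both $k_i$ and $T_i$ are readable in $O(1)$ cell probes.

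First, observe that $k_i\in[2^w+\sigma]$ and $\sigma\le 2^w/n\le 2^w$, so $k_i<2^{w+1}$ fits in $w+1$ bits. I would therefore allocate $B'$ contiguous slots of width $w+\lceil\log w\rceil$, for a total of $B'(w+\log w)+O(B')$ bits, and write $k_i$ into the first $w+1$ bits of the $i$-th slot. To retrieve $k_i$, the query reads the (at most two) consecutive cells overlapping the $i$-th slot, which costs $O(1)$ probes, and extracts the required subfield using free computation in the cell-probe model.

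For constant-time retrieval of $T_i$, I would precompute each $T_i=\sum_{j\le i}\SUM(k_j)$ during preprocessing and place an encoding of $T_i$ alongside $k_i$ in the $i$-th slot. Whenever $T_i$ fits in the trailing $\lceil\log w\rceil-1$ bits of the slot, retrieval is immediate from the same cells already read for $k_i$. When $T_i$ is larger, I would split it into a low-order part (held in the slot) and a high-order part held in a shared auxiliary region; since the high-order parts form a nondecreasing sequence bounded by $T_{B'}\le B'l\le n\le 2^{w/7}$, Lemma~\ref{lem_change_base} lets us pack them using roughly $B'\log(n/B')$ bits with $O(1)$-probe random access. The auxiliary region is consulted once per $T_i$ query, so retrieval remains constant-time.

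The main obstacle is verifying that the per-slot plus auxiliary storage stays within the claimed $B'(w+\log w)$-bit budget, particularly when $l$ is large and $T_{B'}$ exceeds $w$. The per-slot contribution is exactly $B'(w+\lceil\log w\rceil)$, and the auxiliary region contributes an additional term dominated by $O(B'\log w)$ once we exploit the monotonicity of $(T_i)$ and the ambient bound $B'\le B=w^{1/3}$. Choosing the slot width and the auxiliary packing carefully, as permitted by Lemma~\ref{lem_change_base}, yields the required $B'(w+\log w)$-bit data structure in which both $k_i$ and $T_i$ are decoded in constant time.
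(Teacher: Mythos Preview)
There is a genuine gap in the space accounting. Once you store each $k_i$ raw, you spend $w+1$ bits per slot, leaving only about $\log w$ bits per element for the prefix sums. But the prefix sums are too large for this: each $T_i\le B'l\le n$, and even after exploiting monotonicity (or equivalently storing the increments $\SUM(k_i)\in[0,l]$), encoding the whole sequence $(T_1,\ldots,T_{B'})$ with constant-time random access requires $\Theta(B'\log(l+1))$ bits. In the regime where this lemma is actually invoked (inside Lemma~\ref{lem_induct_large}), we have $B\log(l+1)>w/2$, hence $\log(l+1)>w^{2/3}/2$, which is far larger than $\log w$. Your claimed $O(B'\log w)$-bit auxiliary region therefore cannot hold the $T_i$'s, and Lemma~\ref{lem_change_base} does not help here: it packs elements from $[M_1]\times\cdots\times[M_{B'}]$ using essentially $\sum_i\log M_i$ bits, so applying it to the increments still costs $B'\log(l+1)$ bits, not $B'\log w$.

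The paper's proof avoids this by never storing $k_i$ in raw form. It partitions $[2^w+\sigma]$ into a ``typical'' set $\cK_{\textrm{high}}=\SUM^{-1}(l/2\pm\sqrt{lw})$ and a small ``atypical'' set $\cK_{\textrm{low}}$ of size $\le\sigma+1$. If $k_i\in\cK_{\textrm{low}}$ then $k_i$ itself only needs $\approx\log\sigma\le w-\log n$ bits, leaving room to write $T_i$ (at most $\log n$ bits) explicitly. If $k_i\in\cK_{\textrm{high}}$, then $\SUM(k_i)$ lies in a window of width $2\sqrt{lw}$, so one stores only the difference $T_i-T_{i_{\textrm{pred}}}$ (at most $\log(2B\sqrt{lw})$ bits) together with $k_i$ conditioned on $\SUM(k_i)$ (at most $\log(\sigma+2^w/\sqrt{l})$ bits); the two pieces together fit in $w+\log w-1$ bits. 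The essential idea you are missing is to trade off space between $k_i$ and $T_i$ depending on whether $\SUM(k_i)$ is near its mean, using Fact~\ref{fact_size_bound}.
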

\begin{proof}

	We first partition $[2^w+\sigma]$, the domain of each $k_i$, into two sets based on the sum it encodes: $\cK_{\textrm{high}}:=\SUM^{-1}(l/2\pm\sqrt{lw})$ and $\cK_{\textrm{low}}:=\SUM^{-1}([0, l]\setminus (l/2\pm\sqrt{lw}))$.
	The idea is that if several consecutive spillovers are in $\cK_{\textrm{high}}$, then it takes few bits to encode their sum; if a spillover is in $\cK_{\textrm{low}}$, then it takes few bits to encode the spillover itself.
	By Fact~\ref{fact_size_bound}, 
	\begin{align*}
		|\cK_{\textrm{low}}|&=|\SUM^{-1}([0, l]\setminus (l/2\pm\sqrt{lw}))| \\
		&\leq \sigma+\sum_{x\notin l/2\pm\sqrt{lw}} \binom{l}{x}2^{-l+w} \\
		&\leq \sigma+2^{w+1-2w} \\
		&\leq \sigma+1.
	\end{align*}

	The first $B'$ bits encode for each $i\in[B']$, if $k_i$ is in $\cK_{\textrm{high}}$ or in $\cK_{\textrm{low}}$.
	Then we allocate $w+\lfloor\log w\rfloor-1$ consecutive bits to each $k_i$, where we store extra information about each $k_i$ as follows.

	If $k_i\in\cK_{\textrm{low}}$, we spend $\lceil\log (Bl+1)\rceil$ bits to write down $T_i$, the sum of first $i$ blocks, and $\lceil\log (\sigma+1)\rceil$ bits to encode $k_i$ within $\cK_{\textrm{low}}$.
	The space usage is at most
	\[
		\log Bl+\log \sigma+2\leq w+2<w+\lfloor \log w\rfloor -1
	\]
	bits, since $\sigma<2^{w}/n$ and $Bl\leq n$.

	If $k_i\in\cK_{\textrm{high}}$, denote by $i_{\textrm{pred}}$ the closest block preceding $i$ that is not in $\cK_{\textrm{high}}$, i.e., $$i_{\textrm{pred}}:=\max\{j:j<i,k_j\in\cK_{\textrm{low}}\textrm{ or }j=0\}.$$
	We first spend $\lceil\log (2B\lfloor \sqrt{lw}\rfloor +1)\rceil$ bits to encode $T_i-T_{i_{\textrm{pred}}}$.
	This is possible since all subarrays in-between have their sums in a consecutive range of length $2\lfloor \sqrt{lw}\rfloor$.
	Then we spend another $\lceil\log (\sigma+2^w/\sqrt{l})\rceil$ bits to encode $k_i$ conditioned on $\SUM(k_i)$.
	Again such encoding is possible, since for any $x$, by Fact~\ref{fact_size_bound},
	\begin{align*}
		\left|\SUM^{-1}(x)\right|&\leq \sigma+\binom{l}{x}\cdot 2^{-l+w} \\
		&\leq \sigma+\binom{l}{\lfloor l/2\rfloor}\cdot 2^{-l+w} \\
		&\leq \sigma+2^w/\sqrt{l}.
	\end{align*}

	In this case, the space usage is at most
	\begin{align*}
		\log (2B\sqrt{lw})+\log(\sigma+2^w/\sqrt{l})+2&=\log (2B\sqrt{lw}\sigma+2^{w+1}B\sqrt{w})+2 \\
		&\leq \log (2^{w+1}\sqrt{w/l}+2^{w+1}B\sqrt{w})+2 \\
		&=w+1+\log B+\frac{1}{2}\log w+\log ((B\sqrt{l})^{-1}+1)+2 \\
		&\leq w+\log w-1.
	\end{align*}
	where the first inequality uses $\sigma<2^w/n$ and $Bl\leq n$, and the last inequality uses $B=w^{1/3}$.

	The total space usage is at most $B'(w+\log w)$.
	To decode $T_i$, one reads the first $B'$ bits in constant time (as $B'<w$) to retrieve for \emph{every} $j$, whether $k_j\in \cK_{\textrm{high}}$ or $\cK_{\textrm{low}}$.
	If $k_i\in \cK_{\textrm{low}}$, we have explicitly stored the value of $T_i$.
	Retrieving its value thus takes constant time.
	If $k_i\in \cK_{\textrm{high}}$, one first computes $i_{\textrm{pred}}$ using the $B'$ bits retrieved earlier, and reads $T_i-T_{i_{\textrm{pred}}}$.
	It reduces the problem to decoding $T_{i_{\textrm{pred}}}$.
	If $i_{\textrm{pred}}=0$, the problem is solved.
	Otherwise, $k_{i_{\textrm{pred}}}\in \cK_{\textrm{low}}$, and one may apply the above query algorithm.
	In all cases, $T_i$ can be decoded in constant time.

	To decode $k_i$, if $k_i\in \cK_{\textrm{low}}$, $k_i$ is also explicitly encoded.
	Otherwise, $k_i\in \cK_{\textrm{high}}$, and one applies the above query algorithm to retrieve both $T_i$ and $T_{i-1}$, thus determines the value of $\SUM(k_i)$ by taking their difference.
	The value of $k_i$ conditioned on $\SUM(k_i)$ is encoded in the data structure, and can therefore be retrieved in constant time.

	See Figure~\ref{fig_lowprob} for the construction pictorially.
	\begin{figure}[ht]
		\begin{center}
		\begin{tikzpicture}
			\node[rectangle, draw, minimum width=150pt, inner sep=0, minimum height=15pt] at (0,0) {\scriptsize for each $i$, if $k_i\in \cK_{\textrm{high}}$ or $\cK_{\textrm{low}}$};
			\draw [decoration={brace}, decorate] (-75pt, 10pt) -- (75pt, 10pt) node [above=3pt, pos=0.5] {\scriptsize $B'$ bits};

			\node[anchor=west, text width=95pt] at (95pt, -15pt) {\scriptsize ($k_1\in\cK_{\textrm{low}}$)};
			\node[rectangle,draw, minimum width=40pt, inner sep=0, minimum height=15pt, anchor=west] at (75pt, 0) {\scriptsize $k_1$};
			\node[rectangle,draw, minimum width=60pt, inner sep=0, minimum height=15pt, anchor=west] at (115pt, 0) {\scriptsize $T_1$};
			\draw [decoration={brace}, decorate] (75pt, 10pt) -- (175pt, 10pt) node [above=3pt, pos=0.5] {\scriptsize $w+\lfloor\log w\rfloor-1$ bits};

			\node[anchor=west, text width=95pt] at (195pt, -15pt) {\scriptsize ($k_2\in\cK_{\textrm{high}}$)};
			\node[rectangle,draw, minimum width=40pt, inner sep=0, minimum height=15pt, anchor=west] at (175pt, 0) {\scriptsize $k_2$};
			\node[rectangle,draw, minimum width=60pt, inner sep=0, minimum height=15pt, anchor=west] at (215pt, 0) {\scriptsize $T_2-T_1$};
			\draw [decoration={brace}, decorate] (175pt, 10pt) -- (275pt, 10pt) node [above=3pt, pos=0.5] {\scriptsize $w+\lfloor\log w\rfloor-1$ bits};

			\node[anchor=west, text width=95pt] at (295pt, -15pt) {\scriptsize ($k_3\in\cK_{\textrm{high}}$)};
			\node[rectangle,draw, minimum width=40pt, inner sep=0, minimum height=15pt, anchor=west] at (275pt, 0) {\scriptsize $k_3$};
			\node[rectangle,draw, minimum width=60pt, inner sep=0, minimum height=15pt, anchor=west] at (315pt, 0) {\scriptsize $T_3-T_1$};
			\draw [decoration={brace}, decorate] (275pt, 10pt) -- (375pt, 10pt) node [above=3pt, pos=0.5] {\scriptsize $w+\lfloor\log w\rfloor-1$ bits};
		\end{tikzpicture}
		\end{center}
		\caption{memory content for $B'=3$}\label{fig_lowprob}
	\end{figure}
\end{proof}

To prove Lemma~\ref{lem_induct_large}, we will need the following lemma for approximating binomial coefficients, whose proof is presented in the next subsection.

\begin{lemma}\label{lem_part}
	For any large even integer $l$, positive numbers $M_x, M_y$ and $\epsilon$, such that $l>8M_x$, $l>8M_y$ and $\epsilon>2^{-C\sqrt{l}/2+8}$, we have
	\[
		\binom{l}{l/2 +x+y}\cdot 2^{-l+w}=E(x, y)+\sum_{i=1}^r 2^{e_i} \bOne_{X_i}(x)\bOne_{Y_i}(y)
	\]
	for all integers $x\in [-M_x, M_x]$ and $y\in [-M_y, M_y]$, such that 
	\begin{enumerate}[a)]
		\item $E(x, y)\geq 0$ and for every $x\in[-M_x, M_x]$, $\sum_{y=-M_y}^{M_y} E(x, y)\leq \epsilon 2^w+4rM_y2^{w/2}$;
		\item for every $i\in [r]$, $e_i\geq 0$ is an integer, $X_i\subseteq [-M_x, M_x]$, $Y_i\subseteq [-M_y, M_y]$ are sets of integers;
		\item $r\leq O((M_xM_y/l)w^2\log^4 (1/\epsilon))$.
	\end{enumerate}
\end{lemma}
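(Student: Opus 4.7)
The plan is to specialize the polynomial-then-rectangle strategy sketched in Section~\ref{sec_ten_dec} to $f(x,y) := \binom{l}{l/2+x+y}\cdot 2^{-l+w}$. I partition $[-M_x,M_x]\times[-M_y,M_y]$ into cells of side length $s=c_0\sqrt{l}$ for a small absolute constant $c_0$, and on each cell carry out three steps: (i) Taylor-approximate $f$ by a degree-$d$ polynomial $P(x,y)$ about the cell's center, with $d=\Theta(\log(1/\epsilon))$; (ii) apply equation~\eqref{eqn_neg} with $M=s/2$ to convert each monomial of $P$ into nonnegative rank-1 products $Q_j(x)R_j(y)$, absorbing the aggregate $-M^{a+b}$ constants into the (large) constant term of $P$; (iii) binary-expand each $Q_j$ and $R_j$ so that every rank-1 product becomes a sum of scaled rectangle indicators $2^{e}\mathbf{1}_X(x)\mathbf{1}_Y(y)$ with integer $e\geq 0$.

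For step (i), since $f$ depends only on $z:=x+y$ and the hypothesis $l>8M_x,8M_y$ confines us to the near-Gaussian regime $|z|\leq l/4$, a standard ratio calculation shows $|\Delta_z^k f|\leq (C/\sqrt{l})^k\,f$ for an absolute constant $C$. The Taylor remainder on a cell of radius $s/2$ is therefore at most $(Cs/\sqrt{l})^{d+1}/(d+1)!\cdot f(x,y)$; with $s=c_0\sqrt{l}$ and $d=\Theta(\log(1/\epsilon))$ this is $\leq \epsilon f(x,y)$ pointwise, so the contribution to $\sum_y E(x,y)$ is $\leq \epsilon\sum_y f(x,y)\leq \epsilon\cdot 2^w$ using $\sum_z\binom{l}{l/2+z}=2^l$. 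The assumption $\epsilon>2^{-C\sqrt{l}/2+8}$ is exactly what guarantees such a $d$ exists with $d\leq O(\sqrt{l})$, so the Taylor expansion is meaningful.

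For step (ii), $P(x,y)=\sum_{a+b\leq d}c_{ab}(x-x_0)^a(y-y_0)^b$ has $\binom{d+2}{2}=O(d^2)$ monomials with $|c_{ab}|\leq (C/\sqrt{l})^{a+b}f(x_0,y_0)/(a!b!)$. Applying equation~\eqref{eqn_neg} (with the sign-direction matching that of $c_{ab}$) to each non-constant monomial rewrites $c_{ab}(x-x_0)^a(y-y_0)^b = -|c_{ab}|M^{a+b}+(\text{two nonneg rank-1 products})$; the aggregate penalty $\sum_{(a,b)\neq(0,0)}|c_{ab}|M^{a+b}$ sums to at most $f(x_0,y_0)\cdot(e^{Cc_0}-1)$, which is less than $c_{00}=f(x_0,y_0)$ once $c_0$ is chosen small enough. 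Hence $P$ decomposes into $O(d^2)$ nonnegative rank-1 products $Q_j(x)R_j(y)$ per cell, each of magnitude $O(f)$. For step (iii), I rescale each product so that $\max_x\tilde Q_j(x)=\max_y\tilde R_j(y)\leq O(\sqrt{f})=O(2^{w/2})$, round each factor down to an integer, and take the binary expansions $\lfloor\tilde Q_j(x)\rfloor=\sum_e 2^e\mathbf{1}_{X_{j,e}}(x)$ and similarly for $\tilde R_j$; this yields $O(w^2)$ dyadic-integer rectangle indicators per rank-1 product, with per-point rounding error $\leq\tilde Q_j+\tilde R_j+1=O(2^{w/2})$. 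Summed over $y$, each rank-1 piece contributes $\leq 2M_y\cdot 2^{w/2}$, so the total from all $r$ pieces is $\leq 4rM_y2^{w/2}$, as required. The overall rank is at most $(M_xM_y/l+1)\cdot O(d^2w^2)=O((M_xM_y/l)w^2\log^2(1/\epsilon))$, well within the stated $O((M_xM_y/l)w^2\log^4(1/\epsilon))$.

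The main technical obstacle is the tension in step (ii): the Taylor approximation wants $s$ large (to keep $d$ small), while the \eqref{eqn_neg}-penalty wants $s$ small (so $c_{00}$ dominates); both constraints are simultaneously satisfiable because they are both balanced at the same scale $s=\Theta(\sqrt{l})$. One must verify that the single choice of $c_0$ works uniformly across all cells, including those in the Gaussian tail where $f(x_0,y_0)$ is small, by checking that $|c_{ab}|M^{a+b}$ scales proportionally to $f(x_0,y_0)$ at every cell center.
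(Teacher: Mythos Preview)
Your overall architecture matches the paper's: partition $[-M_x,M_x]\times[-M_y,M_y]$ into $O(M_xM_y/l)$ cells of side $\Theta(\sqrt{l})$, obtain a nonnegative rank-1 sum in each cell, then binary-expand each factor to get indicator rectangles. Steps (ii) and (iii) are essentially what the paper does. But step (i) contains a real error.

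The bound $|\Delta_z^k f|\leq (C/\sqrt{l})^k f$ is false in the generality you claim. The hypotheses only give $|z|\leq M_x+M_y\leq l/4$, which is \emph{not} the regime $|z|=O(\sqrt{l})$. Already for $k=1$,
\[
\frac{\Delta_z f(z)}{f(z)}=\frac{l/2-z}{l/2+z+1}-1=\frac{-2z-1}{l/2+z+1},
\]
which at $z=l/8$ is $\approx -2/5$, a fixed constant rather than $O(1/\sqrt{l})$. Over a cell of width $c_0\sqrt{l}$ centered near such a $z_0$, the function $f$ varies by a factor $\exp(\Theta(\sqrt{l}))$; no polynomial of degree $d=\Theta(\log(1/\epsilon))$ can track this unless $\log(1/\epsilon)=\Theta(\sqrt{l})$, which the lemma does not assume. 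Your coefficient estimate $|c_{ab}|\leq (C/\sqrt{l})^{a+b}f(x_0,y_0)/(a!b!)$ in step (ii), and hence the bound $\sum|c_{ab}|M^{a+b}\leq f(x_0,y_0)(e^{Cc_0}-1)$ ensuring the constant term dominates, rest on the same false derivative bound and therefore collapse for cells with $|z_0|\gg\sqrt{l}$.

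The paper's fix (Lemma~\ref{lem_approx}) is to first strip off the exponential drift: in a cell with corner at $\alpha l$, write
\[
\binom{l}{\alpha l+t}=\binom{l}{\alpha l}\Big(\tfrac{1-\alpha}{\alpha}\Big)^{t}\cdot g(t),\qquad g(t)=\prod_{i=1}^{t}\frac{1-(i-1)/((1-\alpha)l)}{1+i/(\alpha l)}.
\]
The factor $((1-\alpha)/\alpha)^{x+y}$ is itself nonnegative and rank-1, so it can be absorbed into the $Q_i(x)R_i(y)$ structure. The residual $g$ is uniformly bounded, $e^{-4}\leq g\leq 1$, on the whole cell for every admissible $\alpha\in[1/4,1/2]$, so one approximates $\log g$ by a degree-$d$ polynomial and then truncates the exponential series, yielding a degree-$d^2$ polynomial with $O(d^4)$ monomials. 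This is precisely why the paper's final rank carries $\log^4(1/\epsilon)$ rather than your $\log^2(1/\epsilon)$: the extra two powers are the cost of handling cells whose center lies at distance $\Theta(l)$ from the mode.
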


Using the above two lemmas, we are ready to prove Lemma~\ref{lem_induct_large}.

\begin{proof}[Proof of Lemma~\ref{lem_induct_large}]
	By Fact~\ref{fact_size_bound}, for each $i\in [B]$, $\SUM(k_i)=T_{i+1}-T_i$ is distributed approximately according to the binomial distribution $B(l, 1/2)$ for a random $k_i$. 
	We first apply Lemma~\ref{lem_part} to approximate the probability masses of $B(l, 1/2)$, which are binomial coefficients, hence approximating $|\SUM^{-1}(x)|$ for $x\in [0, n]$.

	For each $i\in [B]$, let $\epsilon=\sigma\cdot 2^{-w-2}$, $M_x=(i-1)\sqrt{l\log 1/\epsilon}$, $M_y=i\sqrt{l\log 1/\epsilon}$.
	Since $B=w^{1/3}$ and $B\log (l+1)\geq w/2$, we have
	\[
		l/M_y\geq \sqrt{l/(B^2\log 1/\epsilon)}\geq \sqrt{l}/w>8,
	\]
	similarly $l>8M_x$,
	and
	\[
		\epsilon=2^{-w-2+\log \sigma} \geq 2^{-w-2} \geq 2^{-O(\log^2 l)} \geq 2^{-o(\sqrt{l})}.
	\]

	Hence, by Lemma~\ref{lem_part} (setting $x=(i-1)l/2-T_{i-1}$ and $y=T_i-il/2$), there exists $E_i$, $X_{i, j}$ and $Y_{i, j}$ such that for all $T_{i-1}\in (i-1)(l/2\pm \sqrt{l\log 1/\epsilon})$ and $T_i\in i(l/2\pm \sqrt{l\log 1/\epsilon})$,
	\begin{equation}\label{eqn_part}
		\binom{l}{T_i-T_{i-1}}\cdot 2^{-l+w} = E_i(T_{i-1},T_{i})+\sum_{j=1}^{r_i} 2^{e_{i,j}} \bOne_{X_{i,j}}(T_{i-1})\bOne_{Y_{i,j}}(T_{i})
	\end{equation}
	for integers $e_{i,j}\geq 0$ and 
	$$r_i\leq O((M_xM_y/l)w^2\log^4 (1/\epsilon))\leq O(B^2w^2\log^5{1/\epsilon})\leq w^{O(1)}.$$

	Since $\SUM(k_i)$ approximately follows a binomial distribution, we can partition its domain according to Equation~\eqref{eqn_part}, as follows.

	\begin{claim}\label{cl_part}
		For every $i\in [B]$ and $T_{i-1}\in (i-1)(l/2\pm\sqrt{l\log 1/\epsilon})$, there exists a partition of $[2^w+\sigma]$ (domain of $k_i$) into $\{\cK^{(T_{i-1})}_{i, j}\}_{0\leq j\leq r_i}$, such that
			\begin{enumerate}[(a)]
				\item\label{enum_part_a} $|\cK^{(T_{i-1})}_{i,0}|\leq 2\sigma$ and for all $k_i\in [2^w+\sigma]$ such that $\SUM(k_i)\notin l/2\pm \sqrt{l\log 1/\epsilon}$, we have $k_i\in \cK^{(T_{i-1})}_{i,0}$;
				\item\label{enum_part_b} for $j=1,\ldots,r_i$ and $T_i\in T_{i-1}+l/2\pm \sqrt{l\log 1/\epsilon}$,
				\[
					\left|\cK^{(T_{i-1})}_{i,j}\cap \SUM^{-1}(T_i-T_{i-1})\right|=2^{e_{i,j}}\bOne_{X_{i,j}}(T_{i-1})\bOne_{Y_{i,j}}(T_{i}).
				\]
			\end{enumerate}
	\end{claim}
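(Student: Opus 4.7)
The plan is to build the partition greedily, using Equation~\eqref{eqn_part} as a prescription for how many elements each $\cK^{(T_{i-1})}_{i,j}$ with $j\geq 1$ should receive from each fiber $\SUM^{-1}(T_i-T_{i-1})$, and routing all remaining elements to $\cK^{(T_{i-1})}_{i,0}$. Concretely, for every $T_i\in T_{i-1}+l/2\pm\sqrt{l\log (1/\epsilon)}$ and every $j\in[r_i]$, I would pick an arbitrary subset of $\SUM^{-1}(T_i-T_{i-1})$ of size exactly $2^{e_{i,j}}\bOne_{X_{i,j}}(T_{i-1})\bOne_{Y_{i,j}}(T_i)$, taken disjoint across different $j$, and place it in $\cK^{(T_{i-1})}_{i,j}$. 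The set $\cK^{(T_{i-1})}_{i,0}$ then absorbs the residual elements of each such fiber together with \emph{all} of $\SUM^{-1}(T_i-T_{i-1})$ for $T_i$ outside the above range.

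This construction is feasible because by Equation~\eqref{eqn_part} and the nonnegativity of $E_i$, the total number of elements requested from any single fiber equals $\binom{l}{T_i-T_{i-1}}\cdot 2^{-l+w}-E_i(T_{i-1},T_i)\leq \binom{l}{T_i-T_{i-1}}\cdot 2^{-l+w}\leq |\SUM^{-1}(T_i-T_{i-1})|$ by Fact~\ref{fact_size_bound}. Property~(b) then holds by construction, and the clause of property~(a) that places every $k_i$ with $\SUM(k_i)$ outside the range into $\cK^{(T_{i-1})}_{i,0}$ is also immediate.

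The main technical point is to verify the size bound $|\cK^{(T_{i-1})}_{i,0}|\leq 2\sigma$. Using $|\SUM^{-1}([0,l])|=2^w+\sigma$ and regrouping the contributions from the fibers, a direct computation gives
\[
  |\cK^{(T_{i-1})}_{i,0}|=\sigma+\sum_{T_i\text{ outside}}\binom{l}{T_i-T_{i-1}}\cdot 2^{-l+w}+\sum_{T_i\text{ inside}}E_i(T_{i-1},T_i).
\]
The middle term is a binomial tail beyond $l/2\pm\sqrt{l\log (1/\epsilon)}$; by Hoeffding it is at most $O(\epsilon^{2/\ln 2})\cdot 2^w$, which combined with $\epsilon=\sigma\cdot 2^{-w-2}$ and $\sigma\leq 2^w/n$ is $o(\sigma)$. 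The last term is bounded by Lemma~\ref{lem_part}(a) by $\epsilon\cdot 2^w+4r_i M_y\cdot 2^{w/2}=\sigma/4+o(\sigma)$, where the second piece stays sub-$\sigma$ because $r_i=w^{O(1)}$, $M_y\leq B\sqrt{l\log(1/\epsilon)}=w^{O(1)}\sqrt{l}$, and the coexistence of $Bl\leq n$ and $B\log(l+1)>w/2$ (with $B=w^{1/3}$) restricts the regime enough that $w^{O(1)}\sqrt{l}=o(n)$, hence $r_i M_y\cdot 2^{w/2}=o(n\cdot 2^{w/2})\leq o(\sigma)$. Summing, $|\cK^{(T_{i-1})}_{i,0}|\leq \sigma+\sigma/4+o(\sigma)\leq 2\sigma$ for sufficiently large $n$. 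The chief obstacle is exactly this last bookkeeping, and it hinges on choosing $\epsilon=\sigma 2^{-w-2}$ so that the dominant error contribution is the $\sigma$-term from Fact~\ref{fact_size_bound} and the $\sigma/4$-term from Lemma~\ref{lem_part}(a), with everything else provably lower-order.
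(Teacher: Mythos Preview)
Your proposal is correct and follows essentially the same route as the paper's proof: build $\cK^{(T_{i-1})}_{i,j}$ for $j\geq 1$ by carving out the prescribed number of elements from each fiber $\SUM^{-1}(T_i-T_{i-1})$ (feasible by Fact~\ref{fact_size_bound} and the nonnegativity of $E_i$ in Equation~\eqref{eqn_part}), put the rest into $\cK^{(T_{i-1})}_{i,0}$, and then bound $|\cK^{(T_{i-1})}_{i,0}|$ as $\sigma$ plus a binomial tail plus the accumulated $E_i$-error. The only cosmetic differences are that you invoke the sharper Hoeffding tail $O(\epsilon^{2/\ln 2})\cdot 2^w$ (the paper uses the looser $2\epsilon\cdot 2^w=\sigma/2$), and you spell out why $r_iM_y\cdot 2^{w/2}=o(\sigma)$ via the regime constraint $B\log(l+1)>w/2$, whereas the paper absorbs this into a terse ``$w^{O(1)}2^{w/2}$'' line.
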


	To focus on the construction of our data structure, we deferred its proof to the end of the subsection.
	Now let us fix one such partition $\{\cK^{(T_{i-1})}_{i, j}\}_{0\leq j\leq r_i}$ for every $i\in [B]$ and $T_{i-1}\in (i-1)(l/2\pm\sqrt{l\log 1/\epsilon})$.

	\begin{definition}
	For $B'\leq B$, a sequence of $B'$ spillovers $(k_1,\ldots,k_{B'})$ is \emph{good}, if for every $i\in [B']$, $k_i\notin \cK^{(T_{i-1})}_{i,0}$, where $T_i=\sum_{j\leq i} \SUM(k_j)$.
	\end{definition}

	Note that $\cK^{(T_{i-1})}_{i,0}$ is only defined when $T_{i-1}\in (i-1)(l/2\pm \sqrt{l\log 1/\epsilon})$.
	However, if $k_{i-1}\notin \cK^{(T_{i-2})}_{i-1,0}$, then by Item (\ref{enum_part_a}) above, $\SUM(k_{i-1})\in l/2\pm \sqrt{l\log 1/\epsilon}$.
	Hence, if $T_{i-2}\in(i-2)(l/2\pm\sqrt{l\log 1/\epsilon})$, we must also have $T_{i-1}\in (i-1)(l/2\pm \sqrt{l\log 1/\epsilon})$, and thus ``good sequences'' are well-defined.

	\bigskip

	Now we are ready to describe our construction for large $l$.
	We first handle good sequences.

	\paragraph{Input sequence $(k_1,\ldots,k_B)$ is good.}
	Given a good sequence $(k_1,\ldots,k_B)$, one can compute $(T_0,\ldots,T_B)$, and for each $i$, the index of set $j_i$ which $k_i$ is in according to the partition, i.e., $j_i\in \{1,\ldots,r_i\}$ such that $k_i\in \cK^{(T_{i-1})}_{i,j_i}$.
	We first construct a data structure \emph{given} the sequence of indices $\bj=(j_1,\ldots,j_B)$.
	\begin{claim}\label{cl_good_index}
		For every $\bj=(j_1,\ldots,j_B)$ such that $j_i\in \{1,\ldots,r_i\}$ for all $i\in [B]$, given a sequence of spillovers $(k_1,\ldots,k_B)$ such that $k_i\in \cK^{(T_{i-1})}_{i,j_i}$, one can construct a data structure using space $[K_{\bj}]\times \{0,1\}^{(B-1)w}$ for $$K_{\bj}=\left\lceil2^{-(B-1)w}\cdot \prod_{i=1}^B \left(2^{e_{i,j_i}}\left|X_{i+1,j_{i+1}}\cap Y_{i,j_i}\right|\right)\right\rceil+1,$$
		which allows one to decode each $k_i$ and $T_i$ in constant time.\footnote{$X_{B+1,j_{B+1}}$ is assumed to be the entire domain $[2^w+\sigma]$.}
	\end{claim}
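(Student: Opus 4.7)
The plan is to reduce the claim to a direct application of Lemma~\ref{lem_change_base} by exhibiting a bijection between the good sequences with indices $\bj$ and a suitable product set. The key enabling fact is Claim~\ref{cl_part}(b): once $\bj$ is fixed, for every admissible pair $(T_{i-1}, T_i) \in X_{i,j_i}\times Y_{i,j_i}$, the fiber $\cK^{(T_{i-1})}_{i,j_i}\cap \SUM^{-1}(T_i - T_{i-1})$ has cardinality \emph{exactly} $2^{e_{i,j_i}}$, a quantity depending only on $\bj$ and not on the specific values of $T_{i-1}, T_i$. This uniform fiber size is what makes a clean product structure possible.

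First, for each $i \in [B]$ and each admissible pair $(T_{i-1}, T_i)$, I will hard-wire into the data structure an arbitrary bijection between the fiber $\cK^{(T_{i-1})}_{i,j_i}\cap \SUM^{-1}(T_i - T_{i-1})$ and $[2^{e_{i,j_i}}]$, letting $\hat{k}_i$ denote the image of $k_i$. Since the input is a good sequence with indices $\bj$, the prefix sums satisfy $T_i \in X_{i+1,j_{i+1}}\cap Y_{i,j_i}$ for every $i$ (under the footnote's convention that $X_{B+1,j_{B+1}}$ is the full universe, so this constraint is vacuous for $T_B$). Conversely, any tuple
\[
  (T_1, \hat{k}_1, T_2, \hat{k}_2, \ldots, T_B, \hat{k}_B) \;\in\; \prod_{i=1}^{B}\Big((X_{i+1,j_{i+1}}\cap Y_{i,j_i})\;\times\;[2^{e_{i,j_i}}]\Big)
\]
recovers a unique good sequence $(k_1,\ldots,k_B)$ via the inverse bijections. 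Hence storing the input is equivalent to storing a $2B$-tuple drawn from a product of total size $\prod_{i=1}^B\big(2^{e_{i,j_i}}\,|X_{i+1,j_{i+1}}\cap Y_{i,j_i}|\big)$.

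Second, I will invoke Lemma~\ref{lem_change_base} on this $2B$-tuple with $m = (B-1)w$. The hypotheses are easy to verify: the number of components $2B = 2w^{1/3} \ll 2^{w/2}$, and each component's domain has size at most $2^w$ (the $T_i$-coordinate has size at most $|Y_{i,j_i}| \leq 2\sqrt{l\log(1/\epsilon)}+1 \leq 2^w$, while the $\hat{k}_i$-coordinate satisfies $2^{e_{i,j_i}} \leq \max_x|\SUM^{-1}(x)| \leq \sigma + 2^w/\sqrt{l} \leq 2^w$ by Fact~\ref{fact_size_bound}). The lemma then yields a data structure using space $[K_{\bj}]\times\{0,1\}^{(B-1)w}$ of exactly the claimed size with constant-time retrieval of each component. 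Decoding $T_i$ is then immediate; to decode $k_i$, retrieve $T_{i-1}, T_i, \hat{k}_i$ in $O(1)$ time and invert the hardwired bijection indexed by $(i, T_{i-1}, T_i)$.

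The main technical point I expect to verify carefully is the admissibility condition $m \geq \sum_i \log M_i - w$ for Lemma~\ref{lem_change_base}. Summing Claim~\ref{cl_part}(b) over $T_i \in Y_{i,j_i}$ gives the identity $|\cK^{(T_{i-1})}_{i,j_i}| = 2^{e_{i,j_i}}|Y_{i,j_i}|$ whenever $T_{i-1} \in X_{i,j_i}$; since each such set lies inside $[2^w+\sigma]$ and $X_{i+1,j_{i+1}}\cap Y_{i,j_i} \subseteq Y_{i,j_i}$, the full product of domain sizes is bounded by $(2^w+\sigma)^B$. Using $\sigma \leq 2^w/n$ together with $\log(1+x)\leq x/\ln 2$, this gives $\sum_i \log M_i \leq Bw + O(B/n)$, so the admissibility condition holds up to $O(1)$ bits of slack, which is absorbed by the ceiling and the additive $+1$ in the definition of $K_{\bj}$.
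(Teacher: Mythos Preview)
Your overall plan is essentially the paper's: encode $k_i$ by its index $\hat{k}_i$ in the fiber $\cK^{(T_{i-1})}_{i,j_i}\cap\SUM^{-1}(T_i-T_{i-1})$ (of fixed size $2^{e_{i,j_i}}$) and encode each $T_i$ in $X_{i+1,j_{i+1}}\cap Y_{i,j_i}$, then invoke Lemma~\ref{lem_change_base}. The only structural difference is cosmetic: the paper writes the $\hat{k}_i$'s as raw bits (they have power-of-two domains) and applies Lemma~\ref{lem_change_base} to the $B$ values $T_i$ alone with $m=(B-1)w-\sum_i e_{i,j_i}$, whereas you feed the full $2B$-tuple to the lemma with $m=(B-1)w$. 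Both give the same $K_{\bj}$.

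There is, however, a real gap in your verification of the hypothesis $m\geq \sum_i\log M_i-w$ of Lemma~\ref{lem_change_base}. Your route through $|\cK^{(T_{i-1})}_{i,j_i}|\leq 2^w+\sigma$ only yields $\sum_i\log M_i\leq B\log(2^w+\sigma)>Bw$, and the lemma's condition is a hard constraint on $m$; it cannot be relaxed by ``absorbing'' an $O(B/n)$-bit overshoot into the ceiling or the $+1$ in $K_{\bj}$. The correct (and simpler) bound---the one the paper uses---comes directly from Equation~\eqref{eqn_part}: for any $T_{i-1}\in X_{i,j_i}$, summing that equation over $T_i$ gives
\[
2^w\;\geq\;\sum_{T_i}\binom{l}{T_i-T_{i-1}}2^{-l+w}\;\geq\;2^{e_{i,j_i}}\,|Y_{i,j_i}|\;\geq\;2^{e_{i,j_i}}\,|X_{i+1,j_{i+1}}\cap Y_{i,j_i}|,
\]
so each factor $2^{e_{i,j_i}}|X_{i+1,j_{i+1}}\cap Y_{i,j_i}|\leq 2^w$ and hence $\sum_i\log M_i\leq Bw$ exactly. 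With this fix, your argument goes through.
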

	To construct such a data structure, we are going to encode each $k_i$ within $\cK^{(T_{i-1})}_{i,j_i}\cap \SUM^{-1}(T_i-T_{i-1})$, and encode each $T_i$ within $X_{i+1,j_{i+1}}\cap Y_{i,j_i}$ using Lemma~\ref{lem_change_base}.

	More specifically, for every $i$, we know that $k_i\in\cK^{(T_{i-1})}_{i,j_i}$ and $\SUM(k_i)=T_i-T_{i-1}$.
	One can spend $e_{i,j_i}$ bits to encode the index of $k_i$ within $\cK^{(T_{i-1})}_{i,j_i}\cap \SUM^{-1}(T_i-T_{i-1})$, which has size at most $2^{e_{i,j_i}}$ by Item~(\ref{enum_part_b}) in Claim~\ref{cl_part} (in fact, it will be exactly $2^{e_{i,j_i}}$).
	Note that the encoding length of this part does not depend on the input.

	We also know that by Item~(\ref{enum_part_b}), no input sequence will have $T_{i-1}\notin X_{i,j_i}$ or $T_i\notin Y_{i,j_i}$ for any $i\in [B]$.
	That is, we must have $T_i\in X_{i+1,j_{i+1}}\cap Y_{i,j_i}$.
	One can thus apply Lemma~\ref{lem_change_base} to encode each $T_i$ within the set $X_{i+1,j_{i+1}}\cap Y_{i,j_i}$, for $m=(B-1)w-\sum_{i=1}^B e_{i,j_i}$.
	The premise of the lemma is satisfied, since
	\begin{align*}
		m-(\sum_{i=1}^B\log \left|X_{i+1,j_{i+1}}\cap Y_{i,j_i}\right|-w)&=Bw-\sum_{i=1}^B\left(e_{i,j_i}+\log \left|X_{i+1,j_{i+1}}\cap Y_{i,j_i}\right|\right) \\
		&\geq \sum_{i=1}^B\left(w-e_{i,j_i}-\log \left|Y_{i,j_i}\right|\right) \\
		&\geq 0.
	\end{align*}
	The last inequality is due to Equation~\eqref{eqn_part}.
	Hence, Lemma~\ref{lem_change_base} constructs a data structure with spillover size 
	\begin{align*}
		K_{\bj}&=\left\lceil2^{-m}\cdot \prod_{i=1}^B \left|X_{i+1,j_{i+1}}\cap Y_{i,j_i}\right|\right\rceil+1 \\
		&=\left\lceil2^{-(B-1)w}\cdot \prod_{i=1}^B \left(2^{e_{i,j_i}}\left|X_{i+1,j_{i+1}}\cap Y_{i,j_i}\right|\right)\right\rceil+1.
	\end{align*}
	The total space usage is $(B-1)w$ bits with a spillover of size $K_{\bj}$.

	To decode a $T_i$, one can simply invoke the decoding algorithm from Lemma~\ref{lem_change_base}, since $\bj$ is given and all sets ${X_{i+1,j_{i+1}}\cap Y_{i, j_i}}$ are known.
	To decode a $k_i$, one first decodes $T_{i-1}$ and $T_i$, after which both sets $\cK^{(T_{i-1})}_{i,j_i}$ and $\SUM^{-1}(T_i-T_{i-1})$ are known.
	Then $k_i$ can be decoded by retrieving its index within $\cK^{(T_{i-1})}_{i,j_i}\cap \SUM^{-1}(T_i-T_{i-1})$.
	
	\bigskip
	To obtain a data structure for all good sequences, we ``glue'' the above data structures for all $\bj$ in a similar way to Lemma~\ref{lem_induct_small}.
	Let $K_{\textrm{good}}=\sum_{\bj} K_{\bj}$.
	We may view the set $[K_{\textrm{good}}]$ as $\{(\bj, k):k\in[K_{\bj}]\}$ (via a fixed bijection hard-wired in the data structure).
	Given a good sequence $k_1,\ldots,k_B$, one first computes $\bj=(j_1,\ldots,j_B)$ and constructs a data structure using Claim~\ref{cl_good_index}, which generates $(B-1)w$ bits and a spillover $k\in[K_{\bj}]$.
	The data structure will consist of these $(B-1)w$ bits and a final spillover of pair $(\bj,k)$, encoded in $[K_{\textrm{good}}]$.
	To decode $T_i$ or $k_i$, it suffices to decode the pair $(\bj, k)$, and then invoke the decoding algorithm from Claim~\ref{cl_good_index}.

	The spillover size is 
	\begin{align*}
		K_{\textrm{good}}&=\sum_{\bj\in [r_1]\times \cdots\times [r_B]} K_{\bj} \\
		&\leq \sum_{\bj\in [r_1]\times \cdots\times [r_B]} \left(2^{-(B-1)w}\cdot \prod_{i=1}^B \left(2^{e_{i,j_i}}\left|X_{i+1,j_{i+1}}\cap Y_{i,j_i}\right|\right)+2\right) \\
		&=2^{-(B-1)w}\cdot \sum_{\bj\in [r_1]\times \cdots\times [r_B]}\prod_{i=1}^B\left(2^{e_{i,j_i}}\cdot \sum_{T_i\in il/2\pm i\sqrt{l\log 1/\epsilon}}\bOne_{X_{i+1,j_{i+1}}}(T_i)\bOne_{Y_{i,j_i}}(T_i)\right)+w^{O(B)} \\
		&=2^{-(B-1)w}\cdot \sum_{\bj\in [r_1]\times \cdots\times [r_B]}\sum_{\stackrel{T_0,T_1,\ldots,T_B:}{T_i\in il/2\pm i\sqrt{l\log 1/\epsilon}}}\prod_{i=1}^B\left(2^{e_{i,j_i}}\cdot\bOne_{X_{i+1,j_{i+1}}}(T_{i})\bOne_{Y_{i,j_i}}(T_i)\right)+w^{O(B)}, \\
		\intertext{which by the fact that $\bOne_{X_{B+1,i_{B+1}}}(T_B)=\bOne_{X_{1,i_1}}(T_0)=1$, is equal to}
		&=2^{-(B-1)w}\cdot \sum_{\stackrel{T_0,T_1,\ldots,T_B:}{T_i\in il/2\pm i\sqrt{l\log 1/\epsilon}}}\sum_{\bj\in [r_1]\times \cdots\times [r_B]}\prod_{i=1}^B\left(2^{e_{i,j_i}}\cdot\bOne_{X_{i,j_{i}}}(T_{i-1})\bOne_{Y_{i,j_i}}(T_i)\right)+w^{O(B)} \\
		&=2^{-(B-1)w}\cdot \sum_{\stackrel{T_0,T_1,\ldots,T_B:}{T_i\in il/2\pm i\sqrt{l\log 1/\epsilon}}}\prod_{i=1}^B\sum_{j_i=1}^{r_i}2^{e_{i,j_i}}\cdot\bOne_{X_{i,j_{i}}}(T_{i-1})\bOne_{Y_{i,j_i}}(T_i)+w^{O(B)}, \\
		\intertext{which by Equation~\eqref{eqn_part}, is at most}
		&\leq 2^{-(B-1)w}\cdot \sum_{T_0,T_1,\ldots,T_B:T_0=0}\prod_{i=1}^B\left(\binom{l}{T_i-T_{i-1}}\cdot 2^{-l+w}\right)+w^{O(B)} \\
		&= 2^w\cdot \sum_{T_0,T_1,\ldots,T_B:T_0=0}\prod_{i=1}^{B}\left(\binom{l}{T_i-T_{i-1}}\cdot 2^{-l}\right)+w^{O(B)} \\
		&=2^w+w^{O(B)}.
	\end{align*}
	Hence, one can construct a data structure for good sequence using space $[K_{\textrm{good}}]\times \{0,1\}^{(B-1)w}$ for 
	\begin{equation}\label{eqn_k_good}
		K_{\textrm{good}}=2^w+w^{O(B)},
	\end{equation}
	such that each $k_i$ and $T_i$ can be decoded in constant time.
	In particular, one may also choose to use $Bw+1$ bits in total, by rounding up the spillover to $w+1$ bits.
	One can verify that the above construction also applies to any shorter good sequence of length $B'\leq B$, using $B'w+1$ bits of space, which will be used as a separate subroutine below.

	\bigskip

	When the input sequence is not good, there is a smallest $i^*$ such that $k_{i^*}\in \cK^{(T_{i^*-1})}_{i^*,0}$.
	We are going to use different constructions based on whether the suffix $(k_{i^*+1},\ldots,k_B)$ is good, i.e., whether we have for every $i=i^*+1,\ldots,B$, $k_{i}\notin \cK^{(T_{i-1}-T_{i^*})}_{i-i^*,0}$.
	If $(k_{i^*+1},\ldots,k_B)$ is good, we apply the above construction for good sequences to both prefix and suffix.
	The details are presented below.

	\paragraph{$i^*$ breaks the input into two good subsequences.} Suppose the input has one $i^*\in [B]$ such that 
	\begin{itemize}
		\item $k_{i^*}\in \cK^{(T_{i^*-1})}_{i^*,0}$;
		\item $(k_1,\ldots,k_{i^*-1})$ is good;
		\item $(k_{i^*+1},\ldots,k_B)$ is good.
	\end{itemize}
	Since $|\cK^{(T_{i^*-1})}_{i^*,0}|\leq 2\sigma$, one can spend $\lceil\log B\rceil+\lceil\log 2\sigma\rceil$ bits to encode $i^*$ and the index of $k_{i^*}$ within $\cK^{(T_{i^*-1})}_{i^*,0}$.
	Then by the above construction for good sequences, one can construct a data structure for $(k_1,\ldots,k_{i^*-1})$ using $(i^*-1)w+1$ bits, and a data structure for $(k_{i^*+1},\ldots,k_B)$ using $(B-i^*)w+1$ bits.
	Hence, the total space is at most 
	\[
		\lceil\log B\rceil+\lceil\log 2\sigma\rceil+(i^*-1)w+1+(B-i^*)w+1\leq (B-1)w+\lfloor\log B\sigma\rfloor+5
	\] bits.
	By converting the extra (at most) $\lfloor\log B\sigma\rfloor+5$ bits to the spillover, one obtains a data structure using space $[K_{\textrm{bad}}]\times \{0,1\}^{(B-1)w}$ for 
	\begin{equation}\label{eqn_k_bad}
		K_{\textrm{bad}}=32B\sigma.
	\end{equation}

	To decode $T_i$, one first retrieves $i^*$.
	If $i<i^*$, $T_i$ can be decoded from the data structure for $(k_1,\ldots,k_{i^*-1})$ in constant time.
	If $i=i^*$, we have $T_{i^*}=T_{i^*-1}+\SUM(k_{i^*})$.
	The former term can be decoded in constant time from the data structure for the prefix, the latter term can be computed from $k_{i^*}$, which is explicitly stored in memory once $T_{i^*-1}$ is computed.
	If $i>i^*$, one can first compute $T_{i^*}$, then compute $T_{i}-T_{i^*}$ by querying the data structure for the suffix. 

	To decode $k_i$, if $i<i^*$ (or $i>i^*$), $k_i$ can be decoded from the data structure for the prefix (or the data structure for the suffix).
	If $i=i^*$, one computes $T_{i^*-1}$ using the above algorithm, and decodes the index of $k_{i^*}$ within $\cK^{(T_{i^*-1})}_{i^*,0}$, which determines $k_{i^*}$.
	This completes the construction when $i^*$ breaks the sequence into two good subsequences.

	\bigskip

	If the suffix $(k_{i^*+1},\ldots,k_B)$ is not good either, then there exists another index $i^*_2>i^*$ such that $k_{i^*_2}\in \cK^{(T_{i^*_2-1}-T_{i^*})}_{i^*_2-i^*,0}$, which also has size at most $2\sigma$.
	For a random input sequence, such case happens sufficiently rarely, so that the ``not-so-efficient'' solution of Lemma~\ref{lem_induct_lowprob} becomes affordable.
	We present the details below.

	\paragraph{There exist $i_1^*<i_2^*$ which are both in sets of size at most $2\sigma$.} 
	Suppose there exist $i_1^*$ and $i_2^*$ such that
	\begin{itemize}
		\item $k_{i_1^*}\in \cK^{(T_{i_1^*-1})}_{i_1^*,0}$;
		\item $k_{i_2^*}\in \cK^{(T_{i_2^*-1}-T_{i_1^*})}_{i_2^*-i_1^*,0}$.
	\end{itemize}
	Note that both $\cK^{(T_{i_1^*-1})}_{i_1^*,0}$ and $\cK^{(T_{i_2^*-1}-T_{i_1^*})}_{i_2^*-i_1^*,0}$ have size at most $2\sigma$.

	We first spend $2\lceil \log B\rceil$ bits to encode $i_1^*$ and $i_2^*$,  
	and another $2\lceil\log 2\sigma\rceil$ bits to encode $k_{i_1^*}$ within $\cK^{(T_{i_1^*-1})}_{i_1^*,0}$ and $k_{i_2*}$ within $\cK^{(T_{i_2^*-1}-T_{i_1^*})}_{i_2^*-i_1^*,0}$.
	$i_1^*$ and $i_2^*$ break the input sequence into three consecutive subsequences. 
	Next we apply Lemma~\ref{lem_induct_lowprob} to separately encode each subsequence: $(k_1,\ldots,k_{i^*_1-1})$, $(k_{i^*_1+1},\ldots,k_{i^*_2-1})$ and $(k_{i^*_2+1},\ldots,k_B)$.
	Hence, the total space usage is at most
	\begin{align*}
		&\ 2\log B+2\log 2\sigma+6+(B-2)(w+\log w) \\
		\leq&\ (B-1)w+2\log B\sigma +8+B\log w-w
	\end{align*}
	bits.
	By converting the extra (at most) $2\log B\sigma +8+B\log w-w$ bits to the spillover, one obtains a data structure using space $[K_{\textrm{lowprob}}]\times \{0,1\}^{(B-1)w}$ for 
	\begin{equation}\label{eqn_k_lowprob}
		K_{\textrm{lowprob}}= (B\sigma)^2 w^{B} 2^{-w+8} .
	\end{equation}
	To decode a $T_i$, one first retrieves $i^*_1$ and $i^*_2$.
	If $i<i^*_1$, $T_i$ can be decoded from the data structure for the first subsequence $(k_1,\ldots,k_{i^*_1-1})$.
	If $i=i^*_1$, one first decodes $T_{i^*_1-1}$, which determines the set $\cK^{(T_{i_1^*-1})}_{i_1^*,0}$.
	$k_{i^*_1}$ becomes retrievable in constant time, and $T_{i^*_1}$ can be computed as $T_{i^*_1-1}+\SUM(k_{i^*_1})$.
	When $i^*_1<i<i^*_2$, $T_i$ can be computed from $T_{i^*_1}$ and the data structure for $(k_{i^*_1+1},\ldots,k_{i^*_2-1})$.
	Decoding $T_i$ in the cases when $i=i^*_2$ or $i>i^*_2$ is similar.
	Likewise, one can also decode each $k_i$ in constant time.

	\bigskip
	\paragraph{The final data structure.} 
	To handle a general input sequence $(k_1,\ldots,k_B)$, we again glue the above three data structures together.
	By setting $K=K_{\textrm{good}}+K_{\textrm{bad}}+K_{\textrm{lowprob}}$, we obtain a data structure using space $[K]\times \{0,1\}^{(B-1)w}$ which allows one to decode each $k_i$ and each $T_i$ in constant time.
	Here, by Equation~\eqref{eqn_k_good}, \eqref{eqn_k_bad} and \eqref{eqn_k_lowprob}, we have
	\begin{align*}
		K&\leq 2^w+w^{O(B)}+32B\sigma+(B\sigma)^2 w^{B}2^{-w+8} \\
		&\leq 2^w+32B\sigma+(B\sigma)^2 w^{O(B)}2^{-w}, \\
		\intertext{which by the fact that $\sigma<2^w/n$, is at most}
		&\leq 2^w+B\sigma(32+Bw^{O(B)}/n) \\
		&\leq 2^w+33B\sigma,
	\end{align*}
	since $n\geq l\geq 2^{w/2B}-1$ and $B=w^{1/3}$.
	This proves the lemma.
\end{proof}

\begin{proof}[Proof of Claim~\ref{cl_part}]
By Fact~\ref{fact_size_bound} and Equation~\eqref{eqn_part}, we have
\[
	|\SUM^{-1}(T_i-T_{i-1})|\geq \binom{l}{T_i-T_{i-1}}\cdot 2^{-l+w}\geq \sum_{j=1}^{r_i}2^{e_{i,j}}\bOne_{X_{i,j}}(T_{i-1})\bOne_{Y_{i,j}}(T_{i}).
\]
Hence Item~(\ref{enum_part_b}) can be satisfied, e.g., by setting
\begin{align*}
	\cK^{(T_{i-1})}_{i,j}&:= \left\{t\textrm{-th element in }\SUM^{-1}(T_i-T_{i-1}): T_i\in T_{i-1}+l/2\pm \sqrt{l\log 1/\epsilon},\right. \\
	&\left. t\in \left[\sum_{a=1}^{j-1}2^{e_{i,a}}\bOne_{X_{i,a}}(T_{i-1})\bOne_{Y_{i,a}}(T_{i})+1,\sum_{a=1}^{j}2^{e_{i,a}}\bOne_{X_{i,a}}(T_{i-1})\bOne_{Y_{i,a}}(T_{i})\right]\right\}
\end{align*}
for $1\leq j\leq r_i$.

Now let $\cK^{(T_{i-1})}_{i,0}=[2^w+\sigma]\setminus \bigcup_{j=1}^{r_i} \cK^{(T_{i-1})}_{i,j}$.
By definition, for all $k_i$ such that $\SUM(k_i)\notin l/2\pm \sqrt{l\log 1/\epsilon}$, we have $k_i\in \cK^{(T_{i-1})}_{i,0}$ (since $\SUM(k_i)=T_i-T_{i-1}$).

To bound its size, we first observe that since $\sigma>n2^{w/2}$ and $n>l\geq 2^{w^{2/3}/2}-1$, by Lemma~\ref{lem_part}, for $T_{i-1}\in (i-1)(l/2\pm \sqrt{l\log 1/\epsilon})$, we have
\[
	\sum_{T_i} E(T_{i-1},T_i)\leq \sigma/4+w^{O(1)}2^{w/2}\leq \sigma/2.
\]
Hence,
\begin{align*}
	|\cK^{(T_{i-1})}_{i,0}|&= 2^w+\sigma-\sum_{j=1}^{r_i} |\cK^{(T_{i-1})}_{i,j}| \\
	&= 2^w+\sigma-\sum_{j=1}^{r_i}\sum_{T_i\in T_{i-1}+l/2\pm \sqrt{l\log 1/\epsilon}} 2^{e_{i,j}}\bOne_{X_{i,j}}(T_{i-1})\bOne_{Y_{i,j}}(T_{i}),\\
	\intertext{which by Equation~\eqref{eqn_part}, is equal to}
	&= 2^w+\sigma-\sum_{T_i-T_{i-1}\in l/2\pm \sqrt{l\log 1/\epsilon}}\left(\binom{l}{T_i-T_{i-1}}\cdot 2^{-l+w}-E_i(T_{i-1},T_i)\right) \\
	&\leq 2^w+\sigma-\sum_{s_i\in l/2\pm \sqrt{l\log 1/\epsilon}} \binom{l}{s_i}\cdot 2^{-l+w}+\sigma/2 \\
	&= 3\sigma/2+\sum_{s_i\notin l/2\pm \sqrt{l\log 1/\epsilon}} \binom{l}{s_i}\cdot 2^{-l+w} \\
	&\leq 3\sigma/2+2^{w+1}\epsilon \\
	&=2\sigma.
\end{align*}
\end{proof}

\subsection{Approximating binomial coefficients}\label{sec_binom_approx}
In this subsection, we prove Lemma~\ref{lem_part}.
We begin by approximating binomial coefficients in a small range.

\begin{lemma}\label{lem_approx}
	For any large integers $l$ and $d$, $0<\alpha\leq \frac{1}{2}$, such that $d\leq C\sqrt{\alpha l}$, there is a polynomial $P$ of degree $d^2$, such that
	\[
		\binom{l}{\alpha l+x+y}\cdot (1-\epsilon)\leq \binom{l}{\alpha l}\cdot \left(\frac{1-\alpha}{\alpha}\right)^{x+y} \cdot P(x+y)\leq \binom{l}{\alpha l+x+y}
	\]
	for all integers $x,y\in [0, C\cdot\sqrt{\alpha l}]$, a (small) universal constant $C>0$, and $\epsilon=2^{-d+8}$.

	Moreover, $P(x+y)$ can be written as a sum of $d^4+d^2+1$ nonnegative products:
	\[
		P(x+y)=\sum_{i=1}^{d^4+d^2+1}Q_i(x)R_i(y),
	\]
	where $Q_i(x),R_i(y)\geq 0$ for $x,y\in [0, C\cdot\sqrt{\alpha l}]$.
\end{lemma}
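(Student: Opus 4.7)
The plan is to build $P$ by truncating first the Taylor series of $\ln f$ and then the Taylor series of the exponential, and to obtain the nonnegative decomposition by applying Equation~\eqref{eqn_neg} monomial by monomial. Dividing through by $\binom{l}{\alpha l}\cdot ((1-\alpha)/\alpha)^{x+y}$, the task reduces to approximating $f(k) := \prod_{j=0}^{k-1}(l-j/(1-\alpha))/(l+(j+1)/\alpha)$ from below by a polynomial of degree $d^2$ on integers $k\in[0,2C\sqrt{\alpha l}]$, with multiplicative error $1-\epsilon$.

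To build $P$, I would take $\ln f(k)$, Taylor-expand each factor $\ln(1\pm u)$, and apply Faulhaber's formula to evaluate the resulting inner sums over $j$. This gives $\ln f(k)=\sum_{m\geq 1}p_m(k)$, where $p_m$ has degree $m+1$ and $|p_m(k)|\leq (O(C))^{m+1}$ on the interval for $C$ a small absolute constant. Let $g(k):=\sum_{m=1}^{d-1}p_m(k)$, a polynomial of degree $d$ with $|g|=O(1)$ and $|\ln f-g|\leq 2^{-d+O(1)}$, and set $\tilde P(k):=\sum_{m=0}^{d-1}g(k)^m/m!$, of degree $\leq d(d-1)\leq d^2$; truncating the exponential adds error $\leq e/d!\leq 2^{-d+O(1)}$, so $|f-\tilde P|\leq 2^{-d+O(1)}$. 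Finally set $P(k):=\tilde P(k)-2^{-d+O(1)}$; since $f\geq 1/2$ on the range, this yields $P\leq f$ and $P\geq (1-2^{-d+8})f$ as required.

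For the nonnegative decomposition, expand $P(x+y)=\sum_{a+b\leq d^2}\gamma_{a,b}x^a y^b$ with $\gamma_{a,b}=c_{a+b}\binom{a+b}{a}$, where $c_j$ is the coefficient of $k^j$ in $P(k)$. For each monomial: if $\gamma_{a,b}\geq 0$, keep the single nonnegative product $\gamma_{a,b}x^a y^b$; if $\gamma_{a,b}<0$ and $a,b\geq 1$, apply Equation~\eqref{eqn_neg} with $M=C\sqrt{\alpha l}$ to obtain $2$ nonnegative products on $[0,M]^2$ plus the leftover constant $-|\gamma_{a,b}|M^{a+b}$; and if $\gamma_{a,b}<0$ with $a=0$ or $b=0$, the identity degenerates to $1$ nonnegative product plus a leftover constant. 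Summing all leftover constants gives $-\Delta$ with $\Delta:=\sum_{\gamma_{a,b}<0,\,a+b\geq 1}|\gamma_{a,b}|M^{a+b}$, which is absorbed into $c_0=\tilde P(0)-2^{-d+O(1)}=1-o(1)$ (using $g(0)=0$). The final tally is: $1$ product for the constant $c_0-\Delta$, up to $2d^2$ products from non-constant pure monomials (each giving $1$ product regardless of sign), and up to $d^2(d^2-1)$ products from mixed monomials (at most $2$ each), totalling at most $d^4+d^2+1$.

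The main obstacle is verifying the absorption $c_0\geq \Delta$, since on the face of it the coefficients $c_j$ of $\tilde P$ could be enormous with large cancellations. The key bound comes from Cauchy's estimate applied to $\tilde P$ on the circle $|z|=(1+\tau)\cdot 2M$ for a fixed $\tau>0$: one obtains $\sum_{j\geq 1}|c_j|(2M)^j\leq (1/\tau)\max_{|z|=(1+\tau)\cdot 2M}|\tilde P(z)|\leq e^{O(C^2)}/\tau$, which is comfortably below $c_0=1-o(1)$ once $C$ is chosen small enough. The same smallness of $C$ drives the geometric decay in the truncation estimates, so a single choice of small absolute constant $C$ makes the whole argument go through.
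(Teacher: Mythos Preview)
Your construction of $P$ (two-stage Taylor truncation of $\ln f$ then of $\exp$) and your nonnegative decomposition via Equation~\eqref{eqn_neg} are exactly what the paper does; even the term count $d^4+d^2+1$ matches. The only substantive difference is how you handle the absorption inequality $c_0\ge\Delta$, and there your Cauchy-estimate argument has a real gap.

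The Cauchy bound $\sum_{j\ge1}|c_j|(2M)^j\le \tau^{-1}\max_{|z|=(1+\tau)2M}|\tilde P(z)|$ is fine, but to conclude you need $\max_{|z|=(1+\tau)2M}|\tilde P(z)|\le e^{O(C)}$, hence $|g(z)|=O(C)$ on that complex circle. You only established $|p_m(k)|\le(O(C))^{m+1}$ \emph{on the real interval} $[0,2M]$, via the sum representation $p_m(k)=-\tfrac{1}{m}\sum_{i=1}^k(\cdots)$; that representation is meaningless for complex $z$, where $p_m$ is a genuine degree-$(m+1)$ polynomial given by Faulhaber's formula. A polynomial bounded on an interval can be much larger on a surrounding disk (the point $z=-(1+\tau)2M$ is far, in the Chebyshev sense, from $[0,2M]$), so the real-interval bound does not transfer for free. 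Also note that since $\tilde P(0)=1$ forces $\max_{|z|=R}|\tilde P(z)|\ge1$, you must take $\tau>1$ for the Cauchy bound to even have a chance of beating $c_0\approx1$; ``a fixed $\tau>0$'' is not enough.

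The paper closes exactly this gap by bounding the coefficients of $P_1(=g)$ directly: it introduces $F(Q,z):=\sum_i|\beta_i|z^i$, uses the explicit Faulhaber formula together with the asymptotic $|B_k|\sim 2\cdot k!/(2\pi)^k$ for Bernoulli numbers to show $F(S_j,z)\le z^{j+1}/(64C)$ whenever $j+1\le 2\pi z$, deduces $F(P_1,2C\sqrt{\alpha l})\le C$, and then gets $\sum_{a+b\ge1}|\beta_{a,b}|M^{a+b}\le e^{C}-1<1=\beta_{0,0}$. This is precisely the coefficient control your Cauchy step would need as input, so your detour through Cauchy does not actually bypass the Bernoulli-number work. (One could instead fill your gap by Chebyshev-type extrapolation of each $p_m$ from $[0,2M]$ to the disk, paying a factor $\rho^{m+1}$ with $\rho$ fixed and then taking $C$ small enough; but that argument is also missing from your sketch.)

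Two small corrections: the paper gets $e^{-4}\le f(t)\le1$, not $f\ge\tfrac12$; and the bound on $|g|$ you would obtain is $O(C)$, not $O(C^2)$. Neither affects the conclusion.
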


\begin{proof}
	First observe that for $t\in [0, 2C\cdot \sqrt{\alpha l}]$, we have
	\begin{align}
		\binom{l}{\alpha l+t}&=\binom{l}{\alpha l}\cdot\prod_{i=1}^{t}\frac{(1-\alpha)l-(i-1)}{\alpha l+i} \nonumber\\
		&=\binom{l}{\alpha l}\cdot\left(\frac{1-\alpha}{\alpha}\right)^t\cdot\prod_{i=1}^{t}\frac{1-\frac{i-1}{(1-\alpha)l}}{1+\frac{i}{\alpha l}}.\label{eqn_1}
	\end{align}
	It suffices to approximate the last factor $f(t):=\prod_{i=1}^{t}\frac{1-\frac{i-1}{(1-\alpha)l}}{1+\frac{i}{\alpha l}}$.

	For $|z|<1$, $1-z=e^{-\sum_{j\geq 1}j^{-1}\cdot z^j}$.
	Taking only the first $d-1$ terms, $e^{-\sum_{j=1}^{d-1} j^{-1}\cdot z^j}$, introduces a multiplicative error of
	\[
		e^{|\sum_{j\geq d}j^{-1} z^j|}\leq e^{\frac{1}{d(1-|z|)}|z|^{d}}.
	\]
	Applying the above approximation to $1-\frac{i-1}{(1-\alpha) l}$ and $1+\frac{i}{\alpha l}$ in \eqref{eqn_1}, we have
	\begin{align}
		f(t)&\approx \exp\left(-\sum_{i=1}^t\sum_{j=1}^{d-1} \frac{1}{j}\cdot \left(\frac{i-1}{(1-\alpha) l}\right)^j+\sum_{i=1}^t\sum_{j=1}^{d-1}\frac{1}{j}\cdot\left(-\frac{i}{\alpha l}\right)^j \right) \nonumber\\
		&=\exp\left(\sum_{j=1}^{d-1}\frac{1}{j}\cdot\sum_{i=1}^t\left(-(i-1)^j\cdot ((1-\alpha) l)^{-j}+i^j\cdot (-\alpha l)^{-j}\right)\right) \nonumber\\
		&=\exp\left(\sum_{j=1}^{d-1}\frac{1}{j}\cdot\left(-S_j(t-1)\cdot ((1-\alpha) l)^{-j}+S_j(t)\cdot (-\alpha l)^{-j}\right)\right) \label{eqn_2} \\
		&=: \exp\left(P_1(t)\right), \nonumber
	\end{align}
	where $S_j(t)=1^j+2^j+\cdots+t^j$ is a degree-$(j+1)$ polynomial of $t$.
	The approximation above has a total multiplicative error of at most
	\[
		\exp\left(\frac{2}{d}\sum_{i=1}^t\left(\left(\frac{i-1}{(1-\alpha) l}\right)^{d}+\left(\frac{i}{\alpha l}\right)^{d}\right)\right)\leq \exp\left(\frac{4t}{d}\cdot \left(\frac{t}{\alpha l}\right)^{d}\right),
	\]
	since $\alpha\leq 1/2$. When $t\leq 2C\cdot \sqrt{\alpha l}\leq \sqrt{\alpha l}$, this is at most
	\[
		\exp\left(\frac{4}{d(\alpha l)^{(d-1)/2}}\right)\leq 1+(\alpha l)^{-(d-1)/2}
	\]
	for large $d$.

	The exponent in \eqref{eqn_2} is a degree-$d$ polynomial of $t$, which we denote by $P_1(t)$.
	We then apply $e^z=1+\sum_{i\geq 1}z^i/i!$ to \eqref{eqn_2}. 
	Note that taking only the first $d$ terms in the sum introduces an \emph{additive} error of 
	\[
		\left|\sum_{i\geq d+1}z^i/i!\right|\leq \frac{2|z|^{d+1}}{(d+1)!}\leq 2^{-d}
	\]
	as long as $|z|\leq d/6$.
	The exponent in \eqref{eqn_2}, $P_1(t)$, is bounded by constants: Since $S_j(t)\leq t^{j+1}$, and
	\begin{align}
		\left|P_1(t)\right|&=\left|\sum_{j=1}^d\frac{1}{j}\cdot\left(-S_j(t-1)\cdot ((1-\alpha) l)^{-j}+S_j(t)\cdot (-\alpha l)^{-j}\right)\right| \nonumber\\
		&\leq \sum_{j=1}^d t^{j+1}\cdot ((1-\alpha) l)^{-j}+(\alpha l)^{-j}) \nonumber\\
		&\leq 2t\cdot \sum_{j\geq 1}(t/\alpha l)^j \nonumber\\
		&\leq 4t^2/\alpha l \nonumber\\
		&\leq 4.\label{eqn_bound_exp}
	\end{align}

	Hence for large $d$, this approximation produces a degree-$d^2$ polynomial $P_2(t)$, such that 
	\[
		f(t)\cdot (1-(\alpha l)^{-(d-1)/2})-2^{-d}\leq P_2(t)\leq f(t)\cdot (1+(\alpha l)^{-(d-1)/2})+2^{-d}.
	\]
	By a similar argument to \eqref{eqn_bound_exp}, we have $e^{-4}\leq f(t)\leq 1$.
	The additive errors can be translated into multiplicative errors,
	\[
		f(t)\cdot (1-(\alpha l)^{-(d-1)/2}-e^4\cdot 2^{-d})\leq P_2(t)\leq f(t)\cdot (1+(\alpha l)^{-(d-1)/2}+e^4\cdot 2^{-d}).
	\]
	By setting $P(t):=(1-\epsilon/2)P_2(t)$, we prove the first half of the lemma.

	\bigskip

	To prove the second half, the following equation is applied to transform a negative monomial $-x^ay^b$ (plus a large positive constant) to a sum of two nonnegative terms,
	\begin{equation}\label{eqn_trans}
		M^{a+b}-x^ay^b\equiv\frac{1}{2}(M^a-x^a)(M^b+y^b)+\frac{1}{2}(M^a+x^a)(M^b-y^b).
	\end{equation}
	Both terms are nonnegative when $x, y\in[0, M]$.
	It suffices to prove that the constant term of $P$ (or equivalently $P_2$) is large enough to accomplish this transformation simultaneously for all negative monomials.
	In particular, by expanding $P_2(x+y)=\sum_{a,b\geq 0} \beta_{a, b} x^a y^b$, it suffices to show
	\begin{equation}\label{eqn_goal}
		\beta_{0,0}\geq \sum_{a,b\geq 0,a+b\neq 0} |\beta_{a,b}| (C\sqrt{\alpha l})^{a+b}.
	\end{equation}
	Since if \eqref{eqn_goal} does hold, we will be able to rewrite
	\[
		P_2(x+y)=\left(\beta_{0,0}-\sum_{a,b\geq 0,a+b\neq 0} |\beta_{a,b}| (C\sqrt{\alpha l})^{a+b}\right)+\sum_{a,b\geq 0,a+b\neq 0}|\beta_{a,b}|\cdot\left((C\sqrt{\alpha l})^{a+b}+\sgn(\beta_{a,b}) x^ay^b\right)
	\]
	and apply \eqref{eqn_trans} to the later terms with $\beta_{a,b}<0$.
	The number of terms will be at most $d^4+d^2+1$, which will prove the second half of the lemma.

	In order to prove \eqref{eqn_goal}, let us first bound the coefficients of $S_j(t)$.
	By Faulhaber's formula (or Bernoulli's formula), 
	\[
		1^j+2^j+\cdots+t^j=S_j(t)=\frac{1}{j+1}\sum_{k=0}^{j}\binom{j+1}{k}\cdot B_k\cdot t^{j-k+1},
	\]
	where $B_k$ are the Bernoulli numbers:
	$B_0=1$; $B_1=1/2$; $B_k=0$ for all odd integers $k>1$;
	and for all even integers $k\geq 2$,
	\[
		B_k=(-1)^{k/2+1}\frac{2\cdot k!}{(2\pi)^k}\zeta(k),
	\]
	where $\zeta(z)=\sum_{m=1}^{\infty}m^{-z}$ is the Riemann zeta function~\cite{AWH05book}.

	The constant term in $P_1$ is $0$, since both $S_j(0)=0$ and $S_j(-1)=(-1)^{j+1}(S_j(1)-1)=0$.
	Hence, by definition, 
	\begin{equation}\label{eqn_lhs}
	\beta_{0,0}=P_2(0)=1.
	\end{equation}

	To upper bound the RHS of \eqref{eqn_goal}, consider the following operator:
	For polynomial $Q(x)=\sum_{i}\beta_ix^i$,
	\[
		F(Q, z):= \sum_{i}|\beta_i| z^i.
	\]
	Then for $z>0$,
	\begin{align*}
		F(S_j, z)&= \frac{1}{j+1}\sum_{k=0}^{j}\binom{j+1}{k}\cdot |B_k|\cdot z^{j-k+1} \\
		&\leq \sum_{k=0}^j \frac{2(j+1)^{k-1}}{(2\pi)^k}\cdot |\zeta(k)|\cdot z^{j-k+1}.
	\end{align*}
	Since $\lim_{k\rightarrow+\infty}\zeta(k)=1$, it is at most
	\[
		F(S_j, z)\leq \frac{z^{j+1}}{64C}\sum_{k=0}^j \frac{(j+1)^{k-1}}{(2\pi z)^k}.
	\]
	for sufficiently small constant $C>0$.
	When $j+1\leq 2\pi z$, it is at most $z^{j+1}/64C$.
	Therefore, by the fact that $j+1\leq d\leq C\sqrt{\alpha l}$, 
	\begin{align*}
		F(P_1, 2C\sqrt{\alpha l})&\leq \sum_{j=1}^{d-1}\frac{1}{j}\cdot \left(F(S_j, 2C\sqrt{\alpha l}+1)\cdot ((1-\alpha) l)^{-j}+F(S_j,2C\sqrt{\alpha l})\cdot (\alpha l)^{-j}\right) \\
		&\leq \sum_{j=1}^{d-1}\frac{1}{64Cj} \left((2C\sqrt{\alpha l}+1)^{j+1}\cdot ((1-\alpha) l)^{-j}+(\alpha l)^{-j}\right)\\
		&\leq \sum_{j=1}^{d-1}\frac{1}{32C} \left((4C\sqrt{\alpha l})^{j+1}\cdot (\alpha l)^{-j}\right)\\
		&\leq \frac{\sqrt{\alpha l}}{8}\sum_{j\geq 1}(4C/\sqrt{\alpha l})^{j} \\
		&\leq C.
	\end{align*}
	Thus, the RHS of \eqref{eqn_goal} is at most 
	\[
		\sum_{i=1}^d F(P_1, 2C\sqrt{\alpha l})^i/i!\leq e^{F(P_1, 2C\sqrt{\alpha l})}-1\leq e^C-1.
	\]
	Together with \eqref{eqn_lhs}, Equation~\eqref{eqn_goal} must hold for sufficiently small $C>0$.
	It completes the proof of the lemma.
\end{proof}

The above approximation can be extended to much larger ranges.
Let $C$ be the constant in Lemma~\ref{lem_approx}, we have the following.
\begin{lemma}\label{lem_binom}
	For any large even integer $l$, positive numbers $M_x$, $M_y$ and $\epsilon$, such that $l>8M_x, l>8M_y$ and $\epsilon>2^{-C\sqrt{l}/2+8}$, we can write $\binom{l}{l/2+x+y}\cdot 2^{-l}$ as
	\[
		\binom{l}{l/2+x+y}\cdot 2^{-l}=E(x, y)+\sum_{i=1}^r \tilde{Q}_i(x)\tilde{R}_i(y)
	\]
	for all integers $x\in[-M_x,M_x]$ and $y\in [-M_y, M_y]$, such that 
	\begin{enumerate}[a)]
		\item $E(x, y)\geq 0$ and for every integer $x\in[-M_x, M_x]$, $\sum_{y=-M_y}^{M_y} E(x, y)\leq \epsilon$;
		\item for all $i\in [r]$ and $x\in[-M_x,M_x], y\in[-M_y, M_y]$, $0\leq \tilde{Q}_i(x),\tilde{R}_i(y)\leq 1$;
		\item $r\leq O((M_xM_y/l)\log^4 1/\epsilon)$.
	\end{enumerate}
\end{lemma}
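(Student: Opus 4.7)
The plan is to tile the rectangle $[-M_x,M_x]\times[-M_y,M_y]$ with $O(M_xM_y/l)$ cells of side length $\Theta(\sqrt l)$, apply Lemma~\ref{lem_approx} inside each cell, and aggregate the resulting nonnegative rank-1 decompositions. First, partition $[-M_x,M_x]$ into consecutive intervals $I_k=[a_k,a_k+L-1]$ of length $L=\lfloor C\sqrt{l}/8\rfloor$ and $[-M_y,M_y]$ into $J_m=[b_m,b_m+L-1]$; the numbers of intervals are $O(M_x/\sqrt l)$ and $O(M_y/\sqrt l)$. For each pair $(I_k,J_m)$, set $\alpha l:=l/2+a_k+b_m$, so that for $x\in I_k$ and $y\in J_m$ the local offsets $x'=x-a_k,\,y'=y-b_m$ lie in $[0,L-1]$ and $\binom{l}{l/2+x+y}=\binom{l}{\alpha l+x'+y'}$. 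Since $l>8M_x$ and $l>8M_y$ we have $\alpha\in[1/4,3/4]$; when $\alpha>1/2$, invoke the symmetry $\binom{l}{k}=\binom{l}{l-k}$ to reduce to the case $\alpha\leq 1/2$, which is a clean change of variables on a per-cell basis and hence does not blow up the cell count.

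Next, apply Lemma~\ref{lem_approx} inside each cell with $d=\lceil\log_2(1/\epsilon)\rceil+8$, so that the lemma's $\epsilon$-parameter $2^{-d+8}\leq\epsilon$; the hypothesis $\epsilon>2^{-C\sqrt{l}/2+8}$ guarantees $d\leq C\sqrt{l}/2\leq C\sqrt{\alpha l}$, and $L\leq C\sqrt{\alpha l}$ as well, so all quantitative hypotheses of Lemma~\ref{lem_approx} are met. This yields, per cell, a nonnegative decomposition $P_{k,m}(x'+y')=\sum_{i=1}^{d^4+d^2+1}Q_{k,m,i}(x')R_{k,m,i}(y')$ and a one-sided multiplicative approximation of $\binom{l}{\alpha l+x'+y'}$ with relative error at most $\epsilon$. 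The two prefactors $\binom{l}{\alpha l}\cdot 2^{-l}$ (a constant) and $((1-\alpha)/\alpha)^{x'+y'}=((1-\alpha)/\alpha)^{x-a_k}\cdot((1-\alpha)/\alpha)^{y-b_m}$ both factor cleanly across $x$ and $y$, so each rank-1 term transports to the original coordinates as a product $\tilde Q_{k,m,i}(x)\tilde R_{k,m,i}(y)$ supported on $I_k\times J_m$. Within each cell we use the remaining scaling freedom $(Q,R)\mapsto(\lambda Q,R/\lambda)$ to balance the two supremum norms, so that each factor lies in $[0,1]$; this is feasible because each individual rank-1 summand is pointwise bounded by $\binom{l}{l/2+x+y}\cdot 2^{-l}\leq 1$, hence its supremum over $I_k\times J_m$ is at most $1$.

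Collect the cell approximations into a single list $\{(\tilde Q_i,\tilde R_i)\}_{i=1}^{r}$, whose size is
\begin{equation*}
r\leq O\!\left(\frac{M_x}{\sqrt l}\cdot\frac{M_y}{\sqrt l}\right)\cdot O(d^4)=O\!\left(\frac{M_xM_y}{l}\log^4(1/\epsilon)\right),
\end{equation*}
and define the residual
\begin{equation*}
E(x,y):=\binom{l}{l/2+x+y}\cdot 2^{-l}-\sum_{i=1}^{r}\tilde Q_i(x)\tilde R_i(y).
\end{equation*}
The one-sided guarantee of Lemma~\ref{lem_approx} makes $E\geq 0$ pointwise. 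For each fixed $x$, summing the cell-wise multiplicative error over the column $y\in[-M_y,M_y]$ gives
\begin{equation*}
\sum_{y=-M_y}^{M_y}E(x,y)\leq\epsilon\sum_{y}\binom{l}{l/2+x+y}\cdot 2^{-l}\leq\epsilon,
\end{equation*}
which is the required property (a).

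The main obstacle I expect is bookkeeping rather than a genuine analytic difficulty: verifying the sup-norm normalization so that $\tilde Q_i,\tilde R_i\in[0,1]$, and handling the symmetry $\alpha\leftrightarrow 1-\alpha$ cell-by-cell without double counting. Both are tight once one observes that (i) the product of all $r$ rank-1 factors is bounded by a probability, so each factor can be rescaled into $[0,1]$, and (ii) the reflection $x\mapsto-x,\,y\mapsto-y$ interchanges $\alpha$ with $1-\alpha$ and preserves the binomial coefficient, so the two cases are merged into one.
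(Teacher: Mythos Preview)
Your proposal is correct and follows essentially the same approach as the paper: tile $[-M_x,M_x]\times[-M_y,M_y]$ into $O(M_xM_y/l)$ rectangles of side $\Theta(\sqrt{l})$, apply Lemma~\ref{lem_approx} in each cell (using the reflection $\binom{l}{k}=\binom{l}{l-k}$ when $\alpha>1/2$), absorb the exponential prefactor into the rank-1 factors supported on each cell, rescale so both factors lie in $[0,1]$, and sum the multiplicative errors columnwise. Your normalization argument (use $Q(x)R(y)\le 2^{-l}\binom{l}{l/2+x+y}\le 1$ since all rank-1 summands are nonnegative, then pick $\lambda\in[\sup R,\,1/\sup Q]$) is exactly what the paper does, just stated more directly.
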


\begin{proof}
	The idea is to first partition the domain $[-M_x, M_x]\times [-M_y, M_y]$ into rectangles, so that in each rectangle $[a_x, b_x]\times [a_y, b_y]$, we have
	\[
		\binom{l}{l/2+x+y}=\binom{l}{(l/2+a_x+a_y)+(x-a_x)+(y-a_y)}
	\]
	and
	\[
		\binom{l}{l/2+x+y}=\binom{l}{(l/2-b_x-b_y)+(b_x-x)+(b_y-y)},
	\]
	which allows us to apply Lemma~\ref{lem_approx} as long as both sides of the rectangle are not too large.
	Finally, we put all error terms from these rectangles into $E(x, y)$.
	In the following, we present this construction in detail.

	\paragraph{Approximation in subrectangles.}
	We partition $[-M_x, M_x]\times [-M_y, M_y]$ into $O(M_xM_y/l)$ rectangles of size at most $C\sqrt{l}/2\times C\sqrt{l}/2$.
	For each rectangle $\cR_k=[a_x^{(k)},b_x^{(k)}]\times[a_y^{(k)},b_y^{(k)}]$, if $a_x^{(k)}+a_y^{(k)}\leq 0$, we apply Lemma~\ref{lem_approx} to approximate the binomial coefficient
	\[
		\binom{l}{(l/2+a_x^{(k)}+a_y^{(k)})+(x-a_x^{(k)})+(y-a_y^{(k)})}
	\]
	with $d=\log (2^8/\epsilon)$.
	The premises of the lemma are satisfied:
	\begin{itemize}
		\item $\alpha=1/2+a_x^{(k)}/l+a_y^{(k)}/l\leq 1/2$;
		\item since $l/2+a_x+a_y\geq l/2-M_x-M_y\geq l/4$, we have $\alpha\geq 1/4$ and $C\sqrt{l}/2\leq C\sqrt{\alpha l}$, i.e., $(x-a_x^{(k)}),(y-a_y^{(k)})\in[0, C\sqrt{\alpha l}]$;
		\item $d=\log (2^8/\epsilon)\leq C\sqrt{l}/2\leq C\sqrt{\alpha l}$.
	\end{itemize}
	Otherwise, $b_x^{(k)}+b_y^{(k)}$ must be greater than zero, and we apply Lemma~\ref{lem_approx} to approximate
	\[
		\binom{l}{(l/2-b_x^{(k)}-b_y^{(k)})+(b_x^{(k)}-x)+(b_y^{(k)}-y)},
	\]
	which is equal to $\binom{l}{l/2-x-y}=\binom{l}{l/2+x+y}$.
	Similarly, one can verify in this case, all premises of the lemma are also satisfied.

	Therefore, we obtain an approximation of $\binom{l}{l/2+x+y}$ in rectangle $\cR_k$,
	\[
		\sum_{i=1}^{r_k}\left(\binom{l}{\alpha l}\cdot\left(\frac{1-\alpha}{\alpha}\right)^x Q_i^{(k)}(x)\right)\cdot \left(\left(\frac{1-\alpha}{\alpha}\right)^y R_i^{(k)}(y)\right),
	\]
	where $\alpha l=l/2+a_x^{(k)}+a_y^{(k)}$ (or $l/2-b_x^{(k)}-b_y^{(k)}$, depending on their values), and $r_k=O(d^4)=O(\log^4 1/\epsilon)$.
	Next we make both $Q$ and $R$ bounded by $1$ and vanish outside $\cR_k$, by scaling them by the following factor and multiplying by the indicator functions of the two sides correspondingly.
	Let 
	\[
		C_i^{(k)}=\max_{y\in [a_y^{(k)},b_y^{(k)}]}\left(\frac{1-\alpha}{\alpha}\right)^y R_i^{(k)}(y),
	\]
	and let
	\[
		\tilde{Q}_i^{(k)}(x):=C_i^{(k)}\cdot 2^{-l}\binom{l}{\alpha l}\left(\frac{1-\alpha}{\alpha}\right)^x Q_i^{(k)}(x)\cdot \mathbf{1}_{[a_x^{(k)}, b_x^{(k)}]}(x)
	\]
	and
	\[
		\tilde{R}_i^{(k)}(y):=\frac{1}{C_i^{(k)}}\left(\frac{1-\alpha}{\alpha}\right)^y R_i^{(k)}(y)\cdot \mathbf{1}_{[a_y^{(k)},b_y^{(k)}]}(y),
	\]
	where $\mathbf{1}_{S}(z)$ is equal to $1$ if $z\in S$, and $0$ if $z\notin S$.

	Now consider $W^{(k)}(x, y):=\sum_{i=1}^{r_k}\tilde{Q}_i^{(k)}(x)\cdot \tilde{R}_i^{(k)}(y)$, we have
	\begin{enumerate}[i)]
		\item for $(x, y)\in \cR_k$, $2^{-l}\binom{l}{l/2+x+y}(1-\epsilon)\leq W^{(k)}(x, y)\leq 2^{-l}\binom{l}{l/2+x+y}$;
		\item for $(x, y)\notin \cR_k$, $W^{(k)}(x, y)=0$;
		\item $\tilde{Q}_i^{(k)}(x), \tilde{R}_i^{(k)}(y)\geq 0$;
		\item by definition, $\max_{y\in [a_y^{(k)},b_y^{(k)}]} \tilde{R}_i^{(k)}(y)=1$, and 
		\begin{align*}
			\tilde{Q}_i^{(k)}(x)&\leq \min_{y\in [a_y^{(k)},b_y^{(k)}]}\left\{W^{(k)}(x, y)/\tilde{R}_i^{(k)}(y)\right\} \\
			&\leq \left(\max_{y\in [a_y^{(k)},b_y^{(k)}]}\tilde{R}_i^{(k)}(y)\right)^{-1} \\
			&=1.
		\end{align*}
	\end{enumerate}

	\paragraph{Merging the subrectangles.}
	Finally, let 
	\[
		W(x, y):=\sum_{k\leq O(M_xM_y/l)} W^{(k)}(x, y)=\sum_{k\leq O(M_xM_y/l)}\sum_{i=1}^{r_k}\tilde{Q}_i^{(k)}(x)\tilde{R}_i^{(k)}(y)
	\]
	and
	\[
		E(x, y):=\binom{l}{l/2+x+y}\cdot 2^{-l}-W(x, y).
	\]

	In the following, we show this construction indeed has the claimed properties:
	\begin{enumerate}[a)]
		\item by Item i) and ii) above and the fact that $\{\cR_k\}$ is a partitioning of $[-M_x, M_x]\times [-M_y, M_y]$, we have $E(x, y)\geq 0$; for every integer $x\in [-M_x, M_x]$, we have
		\[
			\sum_{y=-M_y}^{M_y} E(x, y)\leq \sum_{y=-M_y}^{M_y}\epsilon 2^{-l}\binom{l}{l/2+x+y}\leq \epsilon;
		\]
		\item by Item iii) and iv) above, we have $0\leq \tilde{Q}^{(k)}_i(x),\tilde{R}^{(k)}_i(x)\leq 1$;
		\item $r=\sum_{k\leq O(M_xM_y/l)}r_k=O((M_xM_y/l)\log^4 1/\epsilon)$.
	\end{enumerate}
	This proves the lemma.

\end{proof}

Finally, we are ready to prove Lemma~\ref{lem_part}.
\begin{restate}[Lemma~\ref{lem_part}]
	For any large even integer $l$, positive numbers $M_x, M_y$ and $\epsilon$, such that $l>8M_x$, $l>8M_y$ and $\epsilon>2^{-C\sqrt{l}/2+8}$, we have
	\[
		\binom{l}{l/2 +x+y}\cdot 2^{-l+w}=E(x, y)+\sum_{i=1}^r 2^{e_i} \bOne_{X_i}(x)\bOne_{Y_i}(y)
	\]
	for all integers $x\in [-M_x, M_x]$ and $y\in [-M_y, M_y]$, such that 
	\begin{enumerate}[a)]
		\item $E(x, y)\geq 0$ and for every $x\in[-M_x, M_x]$, $\sum_{y=-M_y}^{M_y} E(x, y)\leq \epsilon 2^w+4rM_y2^{w/2}$;
		\item for every $i\in [r]$, $e_i\geq 0$ is an integer, $X_i\subseteq [-M_x, M_x]$, $Y_i\subseteq [-M_y, M_y]$ are sets of integers;
		\item $r\leq O((M_xM_y/l)w^2\log^4 (1/\epsilon))$.
	\end{enumerate}
\end{restate}

\begin{proof}
By Lemma~\ref{lem_binom} and multiplying both sides by $2^w$, we have
\begin{align}
	\binom{l}{l/2+x+y}\cdot 2^{-l+w}&=E_0(x, y)2^w+\sum_{i=1}^{r_0} (2^{w/2}\tilde{Q}_i(x))\cdot (2^{w/2}\tilde{R}_i(y)) \nonumber\\
	&=E(x, y)+\sum_{i=1}^{r_0} \lfloor2^{w/2}\tilde{Q}_i(x)\rfloor\cdot \lfloor2^{w/2}\tilde{R}_i(y)\rfloor,\label{eqn_part_2}
\end{align}
for $E(x, y)\geq 0$ and
\[
	E(x, y)\leq E_0(x, y)2^w+\sum_{i=1}^{r_0}\left(2^{w/2}\tilde{Q}_i(x)+2^{w/2}\tilde{R}_i(y)\right) \leq E_0(x, y)2^w+2r_02^{w/2}.
\]
Therefore, we have
$\sum_{y=-M_y}^{M_y} E(x, y)\leq \epsilon 2^{w}+4r_0M_y2^{w/2}$
for every $x\in[-M_x,M_x]$.
It proves Item a).

Next, we write each $\lfloor2^{w/2}\tilde{Q}_i(x)\rfloor$ and $\lfloor2^{w/2}\tilde{R}_i(y)\rfloor$ in binary representation.
Let
\[
	\lfloor2^{w/2}\tilde{Q}_i(x)\rfloor=\sum_{j=0}^{w/2-1} 2^j\cdot \bOne_{X_{i,j}}(x)
\]
and
\[
	\lfloor2^{w/2}\tilde{R}_i(y)\rfloor=\sum_{j=0}^{w/2-1} 2^j\cdot \bOne_{Y_{i,j}}(y),
\]
where $X_{i,j}\ni x$ (resp. $Y_{i,j}\ni y$) if and only $\lfloor2^{w/2}\tilde{Q}_i(x)\rfloor$ (resp. $\lfloor2^{w/2}\tilde{R}_i(y)\rfloor$) has a ``1'' in the $j$-th bit in its binary representation.

By expanding Equation~\eqref{eqn_part_2}, we have
\[
	\binom{l}{l/2+x+y}\cdot 2^{-l+w}=E(x, y)+\sum_{i=1}^{r_0}\sum_{j_1=0}^{w/2-1}\sum_{j_2=0}^{w/2-1}2^{j_1+j_2}\bOne_{X_{i,j_1}}(x)\bOne_{Y_{i,j_2}}(y).
\]

Hence, $r=O(r_0 w^2)\leq O((M_xM_y/l)w^2\log^4 1/\epsilon)$.

\end{proof}

\subsection{Using standard word operations}\label{sec_ram}

The above data structure assumes that the computational model allows one to compute arbitrary functions on $O(w)$-bit input (that are hard-wired in the solution) in constant time.
To only use standard word operations, one nature idea is to precompute all such functions needed at preprocessing time, and store a lookup table in memory.

By examining the data structure, one may verify that the only part that uses non-standard word operations is Lemma~\ref{lem_induct}, combining the $B$ spillovers.
This subroutine is applied at $t$ different levels, for $\sigma=(34B)^i\cdot n2^{w/2}$ ($i=0,\ldots,t-1$).
In each application, there are $2B$ different possible queries (decoding one of $k_1,\ldots,k_B$ or $T_1,\ldots,T_B$).
The query time is bounded by a universal constant $c_q$.
Hence, the whole query algorithm, which is a decision tree, can be encoded by a lookup table of size
\[
	t\cdot 2B\cdot \sum_{k=0}^{c_q} 2^{kw}\cdot O(w)=O(tB2^{c_qw}\cdot w)
\]
bits.

Since $w$ is required to be at least $7\log n$, storing the entire lookup table is unaffordable.
However, one may use a standard trick to decrease $w$ for self-reducible problems.
Given an input of $n$ bits, we evenly partition the input into blocks of size $n'=n^{1/8c_q}$.
For each block, we apply Theorem~\ref{thm_upper_cell} with $w'=\frac{7\log n}{8c_q}$, and construct a data structure using
\[
	n'+\lceil\frac{n'}{w'^{\Omega(t)}}\rceil\leq n'+\frac{n'}{(\log n)^{\Omega(t)}}+1
\]
bits.
In addition, we also store the lookup table for the whole query algorithm using
\[
	O(tB2^{c_qw'}\cdot w')=\tilde{O}(n^{7/8})
\]
bits.
Note that this lookup table is \emph{shared} among all blocks, and hence only one copy needs to be stored.
Finally, the total space usage is
\[
	n+\frac{n}{(\log n)^{\Omega(t)}}+O(n^{1-1/8c_q}).
\]
This proves Theorem~\ref{thm_upper_ram}.
\begin{restate}[Theorem~\ref{thm_upper_ram}]
	Given a 0-1 array of length $n$ for sufficiently large $n$, for any $t\geq 1$, one can construct a succinct data structure using
	\[
		n+ \frac{n}{(\log n)^{\Omega(t)}}+n^{1-c}
	\]
	bits of memory supporting rank queries in $O(t)$ time, in a word RAM with word-size $w=\Theta(\log n)$, for some universal constant $c>0$.
\end{restate}
\section{Discussion and Open Questions}\label{sec_concl}
The sibling of the rank query is \emph{select}, which asks ``where is the $k$-th one in the input array?''
Most previous succinct rank data structures support both rank and select at the same time, and the lower bounds~\cite{PV10} also apply to select queries.
It is a natural question to ask whether our data structure also generalizes to both rank and select.

Several previous solutions also apply to sparse inputs (e.g., \cite{Pagh01}, \cite{GRR08}, \cite{Pat08}, etc).
That is, we are given an array of $n$ bits with $m$ ones, and would like to design a data structure with space close to $\log \binom{n}{m}$.
However, generalizing our data structure to such instances seems more complicated, may require massive technical manipulation.

\bigskip

Most cell-probe lower bound proof techniques also apply to the \emph{information cell-probe model}, defined in Section~\ref{sec_overview}.
However, some data structure problems are trivial in the information cell-probe model, while their complexities in the cell-probe model are unclear (e.g., succinct dictionary).
If one believes such a problem does not admit trivial cell-probe data structures, then proving any lower bound requires distinguishing the two models, and likely needs a new technique.
One might hope to apply the breakthrough in information and communication~\cite{GKR14} to achieve such separation.
Despite the close connections between data structures and communication complexity, separating cell-probe and information cell-probe is not entirely equivalent to separating information and communication complexity.
One reason is that only the memory side changes the measure from worst-case communication to information, while the data structure side still sends a message of fixed length.
It is unclear to us whether similar separations can be established in communication complexity.

\paragraph{Acknowledgments} The author would like to thank Jiantao Jiao and Tengyu Ma for helpful discussions.

\bibliography{ref}
\bibliographystyle{alpha}

\appendix
\section{Proof of Lemma~\ref{lem_change_base}}\label{app_change_base}
Observe that given two elements $x_i\in [M_i]$ and $x_j\in [M_j]$, we may treat the pair as a number from $[M_i\cdot M_j]$ by mapping $(x_i, x_j)\mapsto x_i M_j+x_j$.
Since initially each $M_i\leq 2^w$, by repeatedly merging elements in this way, we may assume the universe size of each element is between $2^{3w}$ and $2^{6w}$, except for the last element, which may be from a set of size smaller than $2^{3w}$.

Now the task becomes to encode a sequence $(y_1,\ldots,y_{B'})\in[N_1]\times\cdots\times[N_{B'}]$ such that
\begin{itemize}
	\item for $i=1,\ldots,B'-1$, $2^{3w}<N_i\leq 2^{6w}$, and
	\item $N_{B'}\leq 2^{6w}$.
\end{itemize}
We first inductively calculate and hard-wire the following constants:
\begin{align}
	V_0&=1 \nonumber\\
	\intertext{and for $i\geq $1, }
	w_i&=\lfloor\log (V_{i-1})+1.5w\rfloor\label{eqn_wi}\\
	U_i&=\lfloor\frac{2^{w_i}}{V_{i-1}}\rfloor\label{eqn_ui}\\
	V_i&=\lceil\frac{N_i}{U_i}\rceil\label{eqn_vi}
\end{align}
Then we break each element $y_i\in[N_i]$ into a pair $(u_i, v_i)$ such that $u_i\in [U_i]$ and $v_i\in [V_i]$ due to \eqref{eqn_vi}, e.g., $u_i:=\lfloor y_i/V_i \rfloor$ and $v_i=y_i\!\!\mod V_i$.
Then for $1\leq i\leq B'-1$, we combine $(v_{i-1}, u_i)$ into an element $z_i\in [2^{w_i}]$ due to \eqref{eqn_ui}, e.g., $z_i=v_{i-1}\cdot U_i+u_i$.

The data structure will explicitly store $z_1,\ldots,z_{B'-1}$ using $w_1+\cdots+w_{B'-1}$ bits.
For the last two elements $(v_{B'-1}, y_{B'})$, we combine them into an element $z_{B'}\in [V_{B'-1}\cdot N_{B'}]$, and store $z_{B'}$ using $m-(w_1+\cdots+w_{B'-1})$ bits and a spillover of size 
\[
	K=\lceil V_{B'-1}\cdot N_{B'}\cdot 2^{(w_1+\cdots+w_{B'-1})-m}\rceil.
\]
To decode a $y_i$, it suffices to retrieve $z_{i}$ and $z_{i+1}$, which determines $u_i$ and $v_i$ respectively.
Since $w_i=O(w)$, it takes constant time.

The data structure uses $m$ bits of memory. 
It remains to show that $K$ is at most $\left\lceil 2^{-m}\cdot\prod_{i=1}^B M_i\right\rceil+1$.
By Equation~\eqref{eqn_wi}, \eqref{eqn_ui} and \eqref{eqn_vi}, we have
\[
	U_i=\Theta(2^{w_i}/V_i)= \Theta(2^{1.5w}),
\]
and
\[
	V_i=\Theta(N_i/U_i)\geq \Omega(2^{1.5w}).
\]
Thus, by \eqref{eqn_ui} again, we have
\[
	2^{w_i}\leq (U_i+1)V_{i-1}\leq U_iV_{i-1}\cdot (1+O(2^{-1.5w})),
\]
and by \eqref{eqn_vi}, we have
\[
	U_iV_i\leq N_i\cdot (1+O(2^{-1.5w})).
\]

Finally, by definition, we have
\begin{align*}
	K&=\lceil2^{-m}\cdot V_{B'-1}N_{B'}\prod_{i=1}^{B'-1}2^{w_i}\rceil \\
	&\leq \lceil2^{-m}\cdot V_{B'-1}N_{B'}\prod_{i=1}^{B'-1}U_iV_{i-1}\cdot (1+O(2^{-1.5w}))\rceil \\
	&\leq \lceil2^{-m}\cdot N_{B'}\cdot (1+O(B'\cdot2^{-1.5w}))\prod_{i=1}^{B'-1}U_iV_i\rceil \\
	&\leq \lceil2^{-m}\cdot (1+O(B'\cdot2^{-1.5w}))\prod_{i=1}^{B'}N_i\cdot (1+O(2^{-1.5w}))\rceil \\
	&\leq \lceil2^{-m}\cdot (1+O(B\cdot2^{-1.5w}))\prod_{i=1}^{B}M_i\rceil.
\end{align*}
The last inequality is because $\prod_{i=1}^{B} M_i=\prod_{i=1}^{B'}N_i$.
Since $m\geq \sum_{i=1}^{B}\log M_i-w$ and $B=o(2^{w/2})$, we have
\[
	K\leq \lceil2^{-m}\prod_{i=1}^{B}M_i+1\rceil= \lceil2^{-m}\prod_{i=1}^{B}M_i\rceil+1.
\]
This proves the lemma.

\end{document}